\date{\today}
\newcommand{\A}{{\mathcal A}}
\newcommand{\Hi}{{\mathcal H}}
\newcommand{\Z}{{\mathbb Z}}
\newcommand{\R}{{\mathbb R}}
\newcommand{\C}{{\mathbb C}}
\newcommand{\D}{{\mathbb D}}
\newcommand{\E}{{\mathcal E}}
\newcommand{\Fib}{{\mathrm F}}
\newcommand{\TM}{{\mathrm{TM}}}
\DeclareMathOperator*{\slim}{s-lim}
\newtheorem{theorem}{Theorem}[section]
\newtheorem{lemma}[theorem]{Lemma}
\newtheorem{prop}[theorem]{Proposition}
\newtheorem{coro}[theorem]{Corollary}
\theoremstyle{definition}
\newtheorem{remark}[theorem]{Remark}
\theoremstyle{definition}
\theoremstyle{definition}
\newtheorem{ex}[theorem]{Example}
\numberwithin{equation}{section}
\renewcommand{\Re}{\mathrm{Re} \, }
\newcommand{\tr}{\mathrm{tr} }
\begin{document}

\title[Spreading Estimates for 1D Quantum Walks]{Spreading Estimates for Quantum Walks on the Integer Lattice via Power-Law Bounds on Transfer Matrices}


\author{David Damanik}
\thanks{D.\ D.\ was supported in part by NSF grants DMS--1067988 and DMS--1361625.}

\address{Department of Mathematics, Rice University, Houston, TX~77005, USA}

\email{damanik@rice.edu}

\author{Jake Fillman}

\thanks{J.\ F.\ was supported in part by NSF grants DMS--1067988 and DMS--1361625 (Corresponding Author)}

\address{Department of Mathematics, Rice University, Houston, TX~77005, USA}

\email{fillman@vt.edu}

\author{Darren C.\ Ong}

\address{Department of Mathematics, University of Oklahoma, Norman, OK~73019-3103, USA}

\email{darrenong@math.ou.edu}

\maketitle

\begin{abstract}

We discuss spreading estimates for dynamical systems given by the iteration of an extended CMV matrix. Using a connection due to Cantero--Gr\"unbaum--Moral--Vel\'azquez, this enables us to study spreading rates for quantum walks in one spatial dimension. We prove several general results which establish quantitative upper and lower bounds on the spreading of a quantum walk in terms of estimates on a pair of associated matrix cocycles. To demonstrate the power and utility of these methods, we apply them to several concrete cases of interest. In the case where the coins are distributed according to an element of the Fibonacci subshift, we are able to rather completely describe the dynamics in a particular asymptotic regime. As a pleasant consequence, this supplies the first concrete example of a quantum walk with anomalous transport, to the best of our knowledge. We also prove ballistic transport for a quantum walk whose coins are periodically distributed.

\end{abstract}

\noindent \textbf{Keywords.} Quantum walks, unitary dynamics, CMV matrices\\

\noindent \textbf{MSC 2010.} 81Q35,  82B41, 47B36

 \setcounter{tocdepth}{1}

\tableofcontents

\section{Introduction}

In recent years, quantum walks have been studied extensively; see \cite{AVWW, ABJ15, BGVW, CGMV, CGMV2, DFV, DMY2, GVWW, J11, J12, JM, K14, KS11, KS14, SK2010, ST12} for some papers on this subject that have appeared in the past five years. These are quantum analogues of classical random walks. For simplicity, let us focus on the important special case of walks on the integer lattice $\Z$.

A classical (stationary, or time-homogeneous) random walk on the integer lattice $\Z$ is given by transition probabilities $\{ p_{n,m} \}_{n,m \in \Z}$, where $p_{n,m} \in [0,1]$ denotes the probability of moving from site $m$ to site $n$. Denote the bi-infinite matrix with these entries by $P$. We therefore require that each column sum of $P$ is equal to one, that is, $\sum_{n \in \Z} p_{n,m} = 1$ for every $m \in \Z$. The state of the system may be specified by a column vector $v = (v_{n})_{n \in \Z}$ with entries $v_n \in [0,1]$, which is normalized in the sense that $\sum_{n \in \Z} v_n = 1$ so that $v_n$ describes the probability of the system being at site $n$. The time evolution of this state is then given by $v(k) = P^k v$ for $k \ge 0$ or, if $P$ is invertible, for $k \in \Z$.

An important special case is obtained by only considering nearest-neighbor transitions with symmetric transition probabilities. That is, we have $p_{n,m} = 0$ for $|n-m| > 1$ and $p_{n,m} = p_{m,n}$. In this case the matrix $P$ is tridiagonal and symmetric and hence has the structure of a Jacobi matrix. To emphasize this fact we will write in this case $J$ instead of $P$ for the transition matrix.\footnote{For a Jacobi matrix, one usually requires in addition that the off-diagonal terms are strictly positive. This would be in fact a natural additional assumption since the case $a_n := p_{n,n-1} = 0$ will correspond to two non-interacting walks on $\Z \cap (-\infty,n-1]$ and $\Z \cap [n,\infty)$, which can be studied separately.}
Thus, in this case, the evolution of an initial state $v$ is given by $v(k) = J^k v$. In particular, the probability of being at site $n$ at time $k$ is given by
\begin{equation}\label{e.Jacobievolution}
\langle \delta_n , J^k v \rangle,
\end{equation}
where $\delta_n$ is the vector with value $1$ at position $n$ and value $0$ otherwise, and $\langle v^{1}, v^{2} \rangle = \sum_{n \in \Z} \overline{v^{1}_n} v^{2}_n$. Here is where our symmetry assumption turns out to be critical. Since $J$ is obviously a bounded self-adjoint operator on $\ell^2(\Z)$, an expression like \eqref{e.Jacobievolution} can be rewritten with the help of the spectral theorem. Indeed, it takes the form $\int_\R x^k \, d\mu_{n,J,v}(x)$, where $\mu_{n,J,v}$ is a complex measure supported by the spectrum of $J$. Therefore, spectral theory enters the game and the long-time evolution of the classical random walk can be studied by means of a spectral analysis of the transition operator $J$.

To motivate the discussion of the quantum analogue, we can interpret the transitions as follows. Assuming further that the diagonal elements are zero as well, we have $p_{n-1,n} + p_{n+1,n} = 1$. Thus the transition from $n$ to one of its neighbors $n \pm 1$ corresponds to tossing a coin with probabilities $p_{n-1,n}$ and $1 - p_{n-1,n} = p_{n+1,n}$ and transitioning according to the outcome of this coin toss. A quantum walk on the integer lattice is described by assigning a spin to every site $n \in \Z$, which is given by a $\C^2$ vector, and determining the transitions from $n$ to one of its nearest neighbors $n \pm 1$ with the help of a ``quantum coin,'' which is a unitary $2 \times 2$ matrix. The details are described in Subsection~\ref{sec:qw} below. The upshot, following an important observation of Cantero, Gr\"unbaum, Moral, and Vel\'azquez in \cite{CGMV}, is the following. The time-evolution may now be described by an extended CMV matrix $\mathcal{E}$ in place of the Jacobi matrix $J$ above. CMV matrices are the natural unitary analogues of Jacobi matrices, and they play a canonical role within the class of unitary operators analogous to the canonical role played by Jacobi matrices within the class of (bounded) self-adjoint operators; see \cite{S1, S2} and references therein. Thus, we are now concerned with the study of
\begin{equation}\label{e.CMVevolution}
\langle \delta_n , \mathcal{E}^k v \rangle,
\end{equation}
where $\mathcal{E}$ is an extended CMV matrix, and in particular a unitary operator on $\ell^2(\Z)$. Again, this allows one to employ spectral theoretical methods to analyze the behavior of the corresponding quantum walk. In particular, one may relate continuity properties of spectral measures with respect to Hausdorff measures on the unit circle $\partial \D$ to spreading properties of the evolution \eqref{e.CMVevolution} and hence of the quantum walk; this approach was developed in \cite{DFV} and applied to concrete quantum walks in \cite{DMY2}.

The approach just mentioned works for arbitrary unitary operators. For extended CMV matrices, one can sometimes do better and obtain improved spreading estimates by using a different method. While the approach above is based on rewriting \eqref{e.CMVevolution} by means of integrals over the unit circle with respect to spectral measures, the alternative method we propose and develop in this paper rewrites \eqref{e.CMVevolution} by means of integrals over the unit circle with respect to normalized Lebesgue measure (i.e., the measure generated by normalized arc length). The fundamental formula is given in Lemma~\ref{l.parseval} below and was already observed in \cite{DFV}.\footnote{The formula was proved in \cite{DFV} in the hope that it would eventually become useful once one is able to work out the unitary analogues of a series of papers by Damanik and Tcheremchantsev in the self-adjoint case. The present paper realizes this vision.} It connects suitable averages of the quantities \eqref{e.CMVevolution} to integrals involving matrix elements of the resolvent of $\mathcal{E}$. This in turn connects time-averaged spreading to growth properties of transfer matrices, as the latter matrices are related to the properties of the resolvent probed by the integral in the fundamental formula. This connection is exceedingly useful in many cases of interest as there is a plethora of tools one can use to study the growth of transfer matrix norms. These tools are CMV analogues of tools originally developed in the Jacobi setting. In fact, some of these CMV versions are developed in the present paper.

This paper consists of two parts. The first part develops the general theory which enables one to deduce spreading estimates for the discrete time dynamics generated by an extended CMV matrix from bounds on transfer matrix norms for spectral parameters off the unit circle, which immediately gives general results when specializing to the case of quantum walks on the line (precise descriptions of the results may be found in Section~\ref{sec:results}). These general results are analogues for extended CMV matrices of results known for discrete Schr\"odinger operators. The latter operators generate a continuous time evolution via the Schr\"odinger equation. The connection between discrete Schr\"odinger operators, or more generally Jacobi matrices, and CMV matrices is very fruitful and has been explored extensively in recent years. Most of the time results are proved first in the self-adjoint context and are then carried over to the unitary context. There are some notable exceptions, such as Rakhmanov's theorem. Carrying over results from the self-adjoint case to the unitary case, or more concretely from the Jacobi to the CMV setting, is sometimes straightforward, sometimes prohibitively difficult, and sometimes doable but far from straightforward. What we do in this first part of the paper falls in the middle ground of doable but challenging. We obtain complete CMV analogues of the known Schr\"odinger results, but proving them required us to overcome quite a few obstacles. This is one of the cases where the ability to carry over the result should not be taken for granted but rather appreciated.

The second part presents various applications of the general theory. These applications range from simple but interesting observations to results that require a significant amount of work. Namely, in Sections~\ref{sec.6} and \ref{sec.8}, we will discuss lower bounds for the spreading of quantum walks associated with polymer models and the Thue-Morse substitution sequence. These results are relatively easy to establish and rely on the presence of a small set of exceptional spectral parameters at which the required estimates for the transfer matrix norms hold and are straightforward to verify. Nevertheless they are interesting because the spectral measures are either difficult to study (in the Thue-Morse case) or they are so singular that the approach from \cite{DFV} gives no non-trivial consequence for the spreading rates (for typical realizations of random polymer models). These examples demonstrate the usefulness of the method developed in this paper, which does not rely on spectral continuity properties and yet may be able to produce pretty strong estimates.

The applications that require a substantial amount of work consider the case of quantum walks with Fibonacci coins or with periodic coins. Quantum walks with coins generated by a Fibonacci substitution sequence, discussed in Section~\ref{sec:fqw}, are interesting because they can exhibit anomalous transport in the sense that the transport exponents (to be defined in the next section) take fractional values. To show this, one needs to establish upper and lower bounds on the spreading behavior of the quantum walk. This is another important advantage of the method developed here over the spectral continuity approach. The latter is incapable of establishing upper bounds purely in terms of spectral continuity (the case of random polymer models neatly demonstrates this). The method put forward here, however, naturally can be used to produce two-sided estimates, and in the Fibonacci case we work out in detail how this may be implemented. This requires quite extensive analysis (which is of dynamical and combinatorial nature) but in the end it produces rather sharp estimates, especially in the regime studied here. Our results rigorously establish transport behavior which was predicted numerically in the physics literature \cite{RMM}.

Finally, the case of quantum walks with periodic coins is clearly a fundamental one, for which one expects ballistic behavior to occur. That is, the walk spreads out in space with a well-defined strictly positive velocity. In Section~\ref{sec.9} we work out a result of this kind for periodic extended CMV matrices and derive the statement for quantum walks with periodic coins as a corollary.\footnote{During the review of this paper, it was kindly pointed out to us that the statement about ballistic transport for quantum walks on the line with periodic coins is a particular consequence of \cite[Theorem~4]{AVWW}. While the two proofs are related, we have decided to still present ours, which is inspired by \cite{AK98} and \cite{DLY}, in this paper for the convenience of the reader since the authors of \cite{AVWW} work in a more general setting and their paper does not state some particular consequences of their Theorem~4, such as our Theorem~\ref{t:ballistic} and its corollaries in Section~\ref{sec.9}.} While this happens only in the very last section of the paper, we want to emphasize that the proof is in fact independent of the bulk of the paper. Readers familiar with extended CMV matrices and the CGMV connection between quantum walks on the integer lattice and extended CMV matrices may skip ahead to Section~\ref{sec.9}.

We conclude the introduction with a brief remark on the half-line case. One-sided, or standard, CMV matrices may be used to study quantum walks on $\Z_+$, as shown also in \cite{CGMV}. The general theory we develop in part one of the paper has a natural half-line counterpart. In fact, some of the proofs are slightly simpler in this case. On the other hand, the applications we discuss in the second part of the paper all belong to the family of dynamically defined operators over invertible dynamics, and therefore are more naturally studied in the two-sided setting. One could study their half-line restrictions, but one would encounter possible discrete spectrum which changes the dynamical picture for initial states from the associated spectral subspace. To avoid cluttering the paper, we state and prove all theorems only in the two-sided case, and we describe how the theorems may be adapted to the one-sided setting.

\section*{Acknowledgements}

D.\ D.\ and J.\ F.\ would like to thank the Isaac Newton Institute for Mathematical Sciences, Cambridge, for support and hospitality during the programme ``Periodic and Ergodic Spectral Problems,'' where part of this work was undertaken. J.\ F.\ also thanks Hermann Schulz-Baldes for helpful comments on the literature, and Milivoje Lukic for helpful and stimulating discussions.
\newline

The authors are grateful to Axel Mondave for the translation of the abstract.

\part{General Results}

\section{Precise Statements} \label{sec:results}

\subsection{Unitary Dynamics}

We shall adopt the notation and development found in \cite{DFV,DT2010}. Consider a unitary operator $U$ acting on the separable Hilbert space $\Hi$, equipped with an orthonormal basis $\{\varphi_n\}_{n \in \Z}$. In this paper, we will primarily focus our attention on two closely related scenarios:
\begin{enumerate}

\item $U$ is an extended CMV matrix acting on $\Hi = \ell^2(\Z)$, equipped with the standard orthonormal basis $\varphi_n = \delta_n$,  $n \in \Z$.

\item $U$ is the one-step update rule of a quantum walk on $\Z$, and $\Hi = \ell^2(\Z) \otimes \C^2$. If we denote the standard basis of $\C^2$ by $\{e_\uparrow, e_\downarrow\}$, then the elementary tensors of the form
\begin{equation} \label{eq:orderedbasis}
\varphi_{2m} = \delta_m \otimes e_\uparrow, \;
\varphi_{2m+1} = \delta_m \otimes e_\downarrow, \;
m \in \Z
\end{equation}
comprise an orthonormal basis for $\Hi$.

\end{enumerate}

Given an intial state $\psi \in \Hi$ normalized by $\| \psi \| = 1$, we are interested in the time evolution of the vector $\psi$, that is, we want to study the evolution of $\psi(k) = U^k \psi$ as $k \in \Z_+$ grows. To quantify this, we first put
$$
a_{\psi}(n,k) = \left| \left\langle \varphi_n , \psi(k) \right\rangle \right|^2,
\quad
k \in \Z, \, n \in \Z,
$$
which can be thought of as the probability that $\psi$ is in the state $\varphi_n$ at time $k$. We shall also be interested in the time-averaged probabilities, given by
\begin{equation}\label{e.averagedprob}
\widetilde{a}_{\psi}(n,K) = \frac{2}{K} \sum_{k=0}^{\infty} e^{-2k/K} a_{\psi}(n,k),
\quad
K \in \Z_+, \, n \in \Z.
\end{equation}
Taking time averages of quantum dynamical quantities has a long tradition; compare, for example, the survey-type papers \cite{D16, DT2010, L96} and references therein. On a technical level, this has its roots in Wiener's theorem. In particular in the case of singular continuous spectral measures, Wiener's theorem allows one to prove rigorous estimates for time-averaged quantities, whereas in general similar estimates may not hold for non-time-averaged quantities. Additionally, the process of averaging can transform dynamical quantities in particularly pleasant ways. For example, the exponential averages considered in this paper recast the $\widetilde a$'s into averages of the Poisson kernel against spectral measures; this point of view is explored in more detail in \cite{DEHGV}.

To be completely proper, the normalizing factor in \eqref{e.averagedprob} should really be $\left(1-e^{-2/K}\right)$ so that $\widetilde a$ is also a probability distribution on $\Z$. Of course, $2/K$ is correct to first order in $K$, and we are primarily interested in quantities in the limit $K \to \infty$, so this distinction does not really matter. In particular, this does not affect the $\beta$'s (which will be defined below). We will be interested in such averages throughout the paper, so, for a function $f: \Z_{\geq 0} \to \R$, we introduce the notation $\langle f \rangle$ to denote the average of $f$. More precisely, we set
$$
\langle f \rangle(K)
=
\frac{2}{K} \sum_{k = 0}^{\infty} e^{-2k/K} f(k).
$$
For example, in this notation, one could write $\widetilde{a}_{\psi}(n,K) = \langle a_{\psi}(n,\cdot) \rangle(K)$.

Since $U$ is a unitary operator, $\| U^k \psi \| = 1$ for every $k$. We may then think of $U^k \psi $ as defining a probability distribution on $\Z$. From this point of view, it makes sense to describe the spreading of these distributions in terms of their moments. More precisely, for $p > 0$, $k \in \Z$, and $K \in \Z_+$, define
\begin{align*}
|X|^p_{\psi}(k)
& =
\sum_{n \in \Z} \left( |n|^p +1 \right) a_{\psi}(n,k), \\
\left\langle |X|^p_{\psi} \right\rangle(K)
& =
\sum_{n \in \Z} \left( |n|^p +1 \right) \widetilde{a}_{\psi}(n,K).
\end{align*}
We are interested in the spreading of states which are initially well-localized in the sense that $|X|^p_\psi(0) < +\infty$ for all $p > 0$. Consequently, we define
$$
\mathcal S
=
\mathcal S(\Hi)
:=
\left\{
\psi \in \Hi : \lim_{|n| \to \infty} |n|^p | \langle \varphi_n, \psi \rangle| = 0 \text{ for all } p > 0
\right\}.
$$
Obviously, $\mathcal S$ depends on the choice of basis, but we will not vary the orthonormal bases of $\ell^2(\Z)$ and $\ell^2(\Z) \otimes \C^2$ in this paper, so we will not make this dependence explicit in the notation.
\bigskip

\noindent \textit{Remark.}
It is helpful to observe that the probability of finding the wave packet outside of a ball of radius $R$ can be used to find a simple lower bound on the moments, viz.
\begin{align}
|X|^p_\psi(k)
& \geq
R^p \sum_{|n| \geq R} a_\psi(n,k),
\quad \forall R > 0, \, k \in \Z, \\
\label{eq:moments:pout}
\left\langle |X|^p_\psi \right\rangle(K)
& \geq
R^p \sum_{|n| \geq R} \widetilde a_\psi(n,K),
\quad \forall R > 0, \, K \in \Z_+.
\end{align}

In the two situations described above, there is a universal ballistic bound on the associated unitary dynamics; compare \cite[Theorem~2.22]{DT2010}. More precisely, if $U$ is an extended CMV matrix or the update rule of a quantum walk, $\psi \in \mathcal S$, and $p>0$, there is a constant $C = C_{\psi,p} > 0$ such that
$$
|X|^p_\psi(k)
\leq
Ck^p
\text{ for all } k \in \Z_+.
$$
Consequently, we would like to compare the growth of the $p$th moment to polynomial growth of the form $k^{\beta p}$ for a suitable exponent $\beta \in [0,1]$.  In light of this, the following transport exponents\footnote{Some authors consider Ces\`aro averages for the moments, instead of the exponential averages which we consider. However, it is not hard to see that either method of averaging yields the same values for $\widetilde\beta^\pm$; compare \cite[Lemma~2.19]{DT2010}} are natural objects to consider
\begin{align*}
\beta_{\psi}^+(p) & = \limsup_{k \to \infty} \frac{\log \left( |X|^p_{\psi}(k) \right)}{p \log(k)}, \\
\beta_{\psi}^-(p) & = \liminf_{k \to \infty} \frac{\log \left( |X|^p_{\psi}(k) \right)}{p \log(k)}, \\
\widetilde{\beta}_{\psi}^+(p) & = \limsup_{K \to \infty} \frac{\log \left( \left\langle |X|^p_{\psi} \right\rangle (K) \right)}{p \log(K)}, \\
\widetilde{\beta}_{\psi}^-(p) & = \liminf_{K \to \infty} \frac{\log \left( \left\langle|X|^p_{\psi}\right\rangle(K) \right)}{p \log(K)}.
\end{align*}
By Jensen's inequality, $\beta^{\pm}_{\psi}$ and $\widetilde{\beta}^{\pm}_{\psi}$ are all non-decreasing functions of $p$ \cite[Lemma~2.7]{DT2010}.

\subsection{CMV Matrices and the Szeg\H{o} Cocycle}
Given a sequence $\{ \alpha_n \}_{n \in \Z}$ of complex numbers where $\alpha_n \in \D = \{ z \in \C : |z| < 1\}$ for every $n \in \Z$, the associated \emph{extended CMV matrix}, $\E = \E_\alpha$, is a unitary operator on $\ell^2(\Z)$ defined by the matrix representation
\begin{equation} \label{def:extcmv}
\small
\E
=
\begin{pmatrix}
\ddots & \ddots & \ddots &&&&&  \\
\overline{\alpha_0}\rho_{-1} & -\overline{\alpha_0}\alpha_{-1} & \overline{\alpha_1}\rho_0 & \rho_1\rho_0 &&& & \\
\rho_0\rho_{-1} & -\rho_0\alpha_{-1} & -\overline{\alpha_1}\alpha_0 & -\rho_1 \alpha_0 &&& & \\
&  & \overline{\alpha_2}\rho_1 & -\overline{\alpha_2}\alpha_1 & \overline{\alpha_3} \rho_2 & \rho_3\rho_2 & & \\
& & \rho_2\rho_1 & -\rho_2\alpha_1 & -\overline{\alpha_3}\alpha_2 & -\rho_3\alpha_2 &  &  \\
& &&& \overline{\alpha_4} \rho_3 & -\overline{\alpha_4}\alpha_3 & \overline{\alpha_5}\rho_4 & \rho_5\rho_4 \\
& &&& \rho_4\rho_3 & -\rho_4\alpha_3 & -\overline{\alpha_5}\alpha_4 & -\rho_5 \alpha_4  \\
& &&&& \ddots & \ddots &  \ddots
\end{pmatrix},
\end{equation}
where $\rho_n = \left( 1 - |\alpha_n|^2 \right)^{1/2}$ for every $n \in \Z$. Since we heavily use the paper \cite{GZ06} in what follows, we adopt their convention for the location of the diagonal elements, that is, we have $\langle \delta_n, \E \delta_n \rangle = -\overline{\alpha_{n+1}} \alpha_n$ for $n \in \Z$ (so that our $\alpha_n$ is their $-\overline{\alpha_n}$). We refer to $\{\alpha_n\}_{n \in \Z}$ as the sequence of \emph{Verblunsky coefficients} of $\E$. Note that all unspecified matrix entries are implicitly assumed to be zero. As we have said in the introduction, one may also study (half-line) CMV matrices, which are given by setting $\alpha_1 = -1$ and restricting $\E$ to $\ell^2(\Z_+)$.

For $\alpha \in \D$ and $z \in \partial \D$, the corresponding Szeg\H{o} matrix is defined by
$$
S(\alpha,z)
=
\frac{1}{\rho} \begin{pmatrix}
z & - \overline{\alpha} \\
-\alpha z & 1
\end{pmatrix},
\quad
\rho = \sqrt{1 - |\alpha|^2}.
$$
The Szeg\H{o} transfer matrices associated to $\E_\alpha$ are then defined by
$$
T(n,m;z)
=
\begin{cases}
S(\alpha_{n-1},z) \cdots S(\alpha_m,z) & n > m \\
I & n = m \\
S(\alpha_n,z)^{-1} \cdots S(\alpha_{m-1},z)^{-1} & n < m
\end{cases}
$$
where $n, m \in \Z$. These matrices arise in the setting of half-line CMV matrices as propagation matrices of the orthonormal polynomials associated to the $\delta_0$ spectral measure. That is to say, $T(n,m;z)$ may be used to obtain the degree-$n$ orthonormal polynomials from the degree-$m$ orthonormal polynomials; see \cite{S1, S2} for a much more comprehensive discussion. The Szeg\H{o} transfer matrices are still quite relevant to the spectral analysis of extended CMV matrices; for example, the spectrum of an extended CMV matrix may be characterized via exponential growth properties of the Szeg\H{o} matrices \cite{DFLY2}.

Our first main result says that quantitative polynomial bounds on the Szeg\H{o} matrices imply quantitative lower bounds on the spreading of associated wave packets defined by iterating $\E$. The precise formulation follows.

\begin{theorem} \label{t:dt:powerlaw}
Let $\E = \E_\alpha$ be an extended CMV matrix whose Verblunsky coefficients are bounded away from $\partial \D$ in the sense that
\begin{equation} \label{alpha:bdawayfromcirc}
M
:=
\| \alpha \|_\infty
<
1.
\end{equation}
Suppose further that there exist $C > 0$, $\gamma \ge 0$,  such that the following condition holds: for every $R \ge 1$, there exists a nonempty Borel set $A_R \subseteq \partial \D$ such that
\begin{equation} \label{eq:szego:powerbounds}
\sup_{z \in A_R} \max_{|n|,|m| \le R}
\| T(n,m;z) \|
\leq
CR^\gamma.
\end{equation}
For each $K \in \Z_+$, let $R_K = K^{1/(1+\gamma)}$ and let $B_K$ denote the $1/K$-neighborhood of the set $A_{R_K}$ in $\partial \D$, i.e.,
$$
B_K
=
\left\{
z \in \partial \D : |z - z'| < \frac{1}{K} \text{ for some } z' \in A_{R_K}
\right\}.
$$
Under the previous assumptions, there exists $\widetilde C > 0$ such that the following bound holds for $\psi = \delta_0$:
\begin{equation} \label{eq:poutbound}
\sum_{|n| \geq R_K/3} \widetilde a_{\psi}(n,K)
\geq
\widetilde C |B_K| K^{\frac{-3\gamma}{1+\gamma}}
\text{ for all } K \in \Z_+,
\end{equation}
where $|\cdot|$ denotes normalized Lebesgue measure on $\partial \D$. Thus, for each $p > 0$, there exists a constant $\widetilde C_p > 0$ such that
\begin{equation} \label{eq:momentbound}
\left\langle |X|^p_{\psi} \right\rangle(K)
\geq
\widetilde C_p |B_K| K^{\frac{p-3\gamma}{1+\gamma}}
\text{ for all } K \in \Z_+.
\end{equation}
\end{theorem}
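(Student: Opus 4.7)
The plan is to adapt to the CMV/unitary setting the Damanik--Tcheremchantsev strategy of converting power-law bounds on transfer matrices into dynamical lower bounds. The starting point is the Parseval identity of Lemma~\ref{l.parseval}, which I will use to represent the time-averaged probabilities $\widetilde a_\psi(n,K)$ for $\psi = \delta_0$ as an integral over $\partial\D$ of $|G(n,0;z)|^2$, where $G(n,m;z) := \langle\delta_n,(\E - z)^{-1}\delta_m\rangle$ and $z$ runs over a circle at distance $\sim 1/K$ from $\partial\D$ (say $z = r_K\zeta$ with $r_K = e^{1/K}$ and $\zeta\in\partial\D$). The normalization $\sum_n\widetilde a_\psi(n,K)=1$ is then an immediate consequence of the Poisson-kernel identity applied via the spectral theorem to $\E$.

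The thickened set $B_K$ is designed to propagate the transfer-matrix bound \eqref{eq:szego:powerbounds}, which by assumption lives only on $A_{R_K}\subset\partial\D$, to the nearby off-circle parameter $r_K\zeta$. Since the Szeg\H{o} matrix $z\mapsto S(\alpha_n,z)$ is Lipschitz with constant controlled by $(1-M^2)^{-1/2}$ via \eqref{alpha:bdawayfromcirc}, a Gronwall-type bound shows that propagating over $|n|,|m|\leq R_K$ steps with a perturbation of $z$ of size $1/K$ multiplies the cocycle norm by at most a constant factor, precisely because $R_K/K \to 0$ when $R_K = K^{1/(1+\gamma)}$. Consequently, for every $\zeta\in B_K$,
\[
\max_{|n|,|m|\leq R_K}\|T(n,m;r_K\zeta)\| \leq C'R_K^{\gamma}.
\]

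The technical heart of the argument, and the step I expect to be the main obstacle, is a CMV analogue of the Damanik--Tcheremchantsev Green's-function/transfer-matrix inequality, which in the Schr\"odinger case lower-bounds $\sum_{|n|\leq N}|G(n,0;z)|^2\cdot\max_{|n|\leq N}\|T(n;z)\|^2$ by a constant multiple of $(|z|^2-1)^{-1}$. To derive the analogue for CMV I would work with the alternate two-variable Gesztesy--Zinchenko cocycle rather than the naive Szeg\H{o} cocycle, because the Wronskian first integral needed for the argument is more naturally adapted to the GZ formulation and properly tracks the two-block structure of $\E$. Combined with the previous step, this would produce a pointwise lower bound on $\|(\E - r_K\zeta)^{-1}\delta_0\|^2$ of order $K/R_K^{2\gamma}$ for $\zeta\in B_K$; subtracting an upper bound on the innermost piece $\sum_{|n|<R_K/3}|G(n,0;r_K\zeta)|^2$, which is controlled by $\|T\|^2\cdot(|G(0,0;z)|^2+|G(-1,0;z)|^2)$ with the boundary values bounded on average by a uniform-in-$K$ Plancherel-type estimate, would leave the outer sum. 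Integrating over $B_K$, dividing by $K$, and balancing scales at $R_K = K^{1/(1+\gamma)}$ then yields \eqref{eq:poutbound}, from which \eqref{eq:momentbound} is immediate via \eqref{eq:moments:pout}. Unlike the self-adjoint case, the unitary setting requires that one verify a Wronskian first integral compatible with the block structure of $\E$ and then convert it to a quantitative pointwise estimate valid at parameters off $\partial\D$; it is this conversion, together with the management of the resulting error terms, that I expect to be the principal difficulty.
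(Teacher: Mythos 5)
Your outline matches the paper's overall architecture — start from the Parseval identity (Lemma~\ref{l.parseval}), use the Gronwall-type variational estimate to push the power-law bound from $A_{R_K}\subset\partial\D$ to the off-circle parameter $ze^{1/K}$ for $z\in B_K$, pass to the Gesztesy--Zinchenko cocycle, and balance scales at $R_K=K^{1/(1+\gamma)}$. Two of your peripheral remarks are slightly off but harmless: the paper does not rely on $\sum_n\widetilde a_\psi(n,K)=1$ (indeed, with the $2/K$ normalization it is only approximately $1$), and the Wronskian first integral plays no role in the proof of this theorem (it is used in Theorems~\ref{DT07.thm7}--\ref{DT07.thm1}).

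The real problem is in your description of the ``technical heart.'' You propose to lower-bound the \emph{total} resolvent norm $\sum_n|G(n,0;ze^{1/K})|^2\gtrsim K/R_K^{2\gamma}$ and then \emph{subtract} an upper bound for the inner piece $\sum_{|n|<R_K/3}|G(n,0)|^2$, controlling the boundary values $|G(0,0)|^2+|G(-1,0)|^2$ ``on average by a uniform-in-$K$ Plancherel-type estimate.'' That subtraction does not close. Pointwise, $|G(0,0;z)|$ can be as large as $K=(\log|z|)^{-1}$, so the inner piece is only bounded by $\sim R_K^{1+2\gamma}K^2$; Plancherel averaging over $\theta$ (which yields $\int_0^{2\pi}|G(0,0;e^{1/K+i\theta})|^2\frac{d\theta}{2\pi}\sim K/2$) brings this down only to $\sim R_K^{1+2\gamma}K$, which still dwarfs the claimed total $|B_K|K/R_K^{2\gamma}$ for every $K$, since $|B_K|\le 1$. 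So the outer piece is never isolated, and the inequality \eqref{eq:poutbound} does not follow from this route. The paper sidesteps this entirely: instead of subtracting, it lower-bounds $\|\Phi(n)\|$ \emph{pointwise} for each $n$ in the outer window $[2R_K/3,R_K)$, using that the GZ matrices are unimodular (so $\|Z^{-1}\|=\|Z\|$ and hence $\|\Phi(n)\|\ge\|Z\|^{-1}\|\Phi(j)\|$ for $j$ near the origin), together with the observation that $(\E-ze^{1/K})u=\delta_0$ forces $\sum_{j=-1}^{2}\|\Phi(j)\|^2\ge C_2>0$. Summing over the $\sim R_K$ indices in the outer window gives $\sum_{n\ge R_K/3}|u(n)|^2\gtrsim R_K^{1-2\gamma}$ directly on $B_K$, from which \eqref{eq:poutbound} follows by integrating Lemma~\ref{l.parseval} over $B_K$. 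You should replace your subtraction scheme by this direct pointwise argument.
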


\begin{remark} \label{rem:shift} Let us make a few remarks about the statement of Theorem~\ref{t:dt:powerlaw}.
\begin{enumerate}
\item The assumptions of the theorem are translation-invariant. More precisely, suppose $\{\alpha_n\}_{n \in \Z}$ is such that there exist $C$, $\gamma$, and $A_R$ as in the statement of the theorem. Then, if $\alpha_n' = \alpha_{n+k}$ for some $k$ and all $n$, then there exists $C' > 0$ such that \eqref{eq:szego:powerbounds} holds (with the same choices for $\gamma$ and $A_R$). This is quite reasonable on physical grounds: global dynamical characteristics should not depend on how one chooses the origin.

\item The only step in the proof of Theorem~\ref{t:dt:powerlaw} which uses \eqref{alpha:bdawayfromcirc} is the step which passes between the Szeg\H{o} and Gesztesy--Zinchenko cocycles (the GZ cocycle will be defined below). In particular, if $\alpha$ is not bounded away from $\partial \D$, then Theorem~\ref{t:dt:powerlaw} still holds, but with $T$ replaced by $Z$.

\item Almost no modification is required to state and prove Theorem~\ref{t:dt:powerlaw} in the half-line setting. In particular, one need only replace \eqref{eq:szego:powerbounds} by
$$
\sup_{z \in A_R} \max_{0 \le n,m \le R}
\| T(n,m;z) \|
\leq
CR^\gamma.
$$
The conclusions of Theorem~\ref{t:dt:powerlaw} then hold verbatim in the half-line case. The modifications to the proof are straightforward and left to the reader.

\end{enumerate}
\end{remark}

Notice that Theorem~\ref{t:dt:powerlaw} supplies nontrivial dynamical bounds even when $A_R$ consists of a single point for large $R$. Indeed, this gives a simple direct proof of time-averaged transport which is nearly ballistic (in the large $p$ limit) for quantum walks in a Thue-Morse quasicrystal environment, and for quantum walks in a random polymer with critical spectral parameters, which we will describe in later sections. The latter example also shows that Theorem~\ref{t:dt:powerlaw} sometimes can give much more information than results which relate dynamics to fractal dimensions of spectral measures, such as \cite[Proposition~3.9]{DFV}. Specifically, in random polymer quantum walks, one expects spectral measures to be of pure point type (it should be noted that no rigorous proof of this fact exists). Thus, the results of \cite{DFV} give no information, while Theorem~\ref{t:dt:powerlaw} yields almost-ballistic transport (at least in the limit $p\to\infty$).

\subsection{The Gesztesy--Zinchenko Cocycle and Dynamical Upper Bounds}

In order to relate the Szeg\H{o} cocycle to quantum dynamical transport behavior, we often need to relate the Szeg\H{o} cocycle to solutions to the difference equation $\E u = zu$ with $z \in \C \setminus \{0\}$ and $u \in \C^{\Z}$. To that end, consider the  matrices
\begin{equation} \label{gz:onestepmats:def}
P(\alpha,z)
=
\frac{1}{\rho}
\begin{pmatrix}
-\overline\alpha & z \\
z^{-1} & - \alpha
\end{pmatrix}
,
\quad
Q(\alpha,z)
=
\frac{1}{\rho}
\begin{pmatrix}
-\alpha & 1 \\
1 & - \overline \alpha
\end{pmatrix},
\, \alpha \in \D, z \in \C \setminus \{0\},
\end{equation}
where $\rho = \left( 1 - |\alpha|^2\right)^{1/2}$ as before. Notice that
$$
\det(P(\alpha,z))
=
\det(Q(\alpha,z))
=
-1
\text{ for all } \alpha \in \D, \, z \in \C \setminus \{0\}.
$$
These matrices come from \cite{GZ06}, though we follow the convention of \cite{O14} and replace $\alpha$ by $-\overline{\alpha}$ in Gesztesy-Zinchenko's definition (the definition of a CMV matrix in \cite{GZ06} also replaces our $\alpha$ by $-\overline{\alpha}$). One may use $P$ and $Q$ to capture the recursion described by the difference equation $\E u = zu$ in the following sense. Any extended CMV matrix $\E$ enjoys a factorization $\E = \mathcal{L} \mathcal{M}$ \cite[Proposition~4.2.4]{S1}, where $\mathcal L$ and $\mathcal M$ can be decomposed as direct sums of $2 \times 2$ unitary matrices of the form
$$
\Theta(\alpha)
=
\begin{pmatrix}
\overline{\alpha} & \rho \\
\rho & - \alpha
\end{pmatrix}.
$$
Specifically,
\[
\mathcal L
=
\bigoplus_{j \in \Z} \Theta(\alpha_{2j}),
\quad
\mathcal M
=
\bigoplus_{j \in \Z} \Theta(\alpha_{2j-1}),
\]
where $\Theta(\alpha_n)$ acts on coordinates $n-1$ and $n$. If $u \in \C^{\Z}$ is such that $\E u = zu$, then we define $v := \mathcal L^{-1} u$. It follows that $\mathcal M u = z v$, and we have $\E^\top \! v = zv$. With $\Phi(n) := (u(n), v(n))^\top$, Gesztesy and Zinchenko show that
\begin{equation} \label{eq:gz:stepbystep}
\Phi(n)
=
\begin{cases}
P(\alpha_n,z)
\Phi(n-1)
& n \text{ is odd} \\
Q(\alpha_n,z)
\Phi(n-1)
& n \text{ is even}
\end{cases}
\end{equation}
for all $n \in \Z$; see also \cite[Proposition~3]{MSB} for a  generalization of this formalism to scattering zippers. This motivates the following definition. Denote $Y(n,z) = P(\alpha_n,z)$ when $n$ is odd and $Y(n,z) = Q(\alpha_n,z)$ when $n$ is even; then, the \emph{Gesztesy-Zinchenko cocycle} is defined by
\begin{equation} \label{gz:def}
Z(n,m;z)
=
\begin{cases}
Y(n - 1,z) \cdots Y(m, z) & n > m \\
I & n = m \\
Y(n, z)^{-1} \cdots Y(m - 1, z)^{-1} & n < m
\end{cases}
\end{equation}
If $u$, $v$, and $\Phi$ are as above, we have
\begin{equation} \label{eq:gzsoltransfer}
\Phi(n - 1)
=
Z(n, m; z) \Phi(m - 1)
\text{ for all } n,m \in \Z
\end{equation}
by \eqref{eq:gz:stepbystep}; compare \cite[Lemma~2.2]{GZ06}. We will often abbreviate the names and refer to $Z(n,m;z)$ as a GZ matrix.

By using the connection with solutions of the difference equation, (and hence with Green's functions), we can use the GZ matrices to bound the tails of the wave packet; compare \cite[Theorem~7]{DT07}.

\begin{theorem}\label{DT07.thm7}
Let $\E = \E_\alpha$ be an extended CMV matrix, consider the initial state $\psi = \delta_{-1}$, and define
\begin{align*}
\widetilde P_r(N,K)
& =
\sum_{n > N} \widetilde a_\psi(n,K),\\
\widetilde P_l(N,K)
& =
\sum_{n < -N} \widetilde a_\psi(n,K).
\end{align*}
If $\alpha$ is bounded away from $\partial \D$, there exists a constant $C_0 > 0$ such that
\begin{align}
\label{eq:pr:gzbound}
\widetilde P_r(N,K)
& \leq
C_0 K^3 \int_0^{2\pi} \left( \max_{1 \leq n \leq N} \left\Vert Z \left( n, 0; e^{1/K+i\theta} \right) \right\Vert^2\right)^{-1} \frac{d\theta}{2\pi}, \\
\label{eq:pl:gzbound}
\widetilde P_l(N,K)
& \leq
C_0 K^3 \int_0^{2\pi} \left( \max_{1 \leq -n \leq N} \left\Vert Z\left(n, 0; e^{1/K + i\theta} \right) \right\Vert^2\right)^{-1} \frac{d\theta}{2\pi}
\end{align}
for all $N, K \in \Z_+$.

\end{theorem}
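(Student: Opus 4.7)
The plan is to emulate the approach developed by Damanik and Tcheremchantsev for Schr\"odinger operators, using the Gesztesy--Zinchenko cocycle introduced above in place of the usual Jacobi transfer matrices. First, I would apply Lemma~\ref{l.parseval} with $\psi = \delta_{-1}$ to obtain an integral representation of the form
$$
\widetilde a_\psi(n,K)
=
\frac{c}{K}\int_0^{2\pi}\bigl|\langle\delta_n,(\E - z)^{-1}\delta_{-1}\rangle\bigr|^2\,d\theta,
\quad
z = e^{1/K+i\theta},
$$
for a universal constant $c$. The choice $|z| = e^{1/K}$ places the spectral parameter at distance $\sim 1/K$ outside $\partial\D$, where the resolvent is well defined and the underlying Neumann series converges. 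Summing over $n > N$ (resp.\ $n < -N$) reduces the theorem to establishing a pointwise-in-$\theta$ bound of the form
$$
\sum_{n > N}|G_\E(n,-1;z)|^2
\;\leq\;
CK^4\bigl(\max_{1\leq n\leq N}\|Z(n,0;z)\|^2\bigr)^{-1},
$$
where $G_\E(n,m;z) = \langle\delta_n,(\E-z)^{-1}\delta_m\rangle$.

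Next, I would construct Weyl-type solutions to relate the Green's function to the GZ cocycle. For $|z| > 1$, there should exist distinguished two-component solutions $\Phi_\pm(n) = (u_\pm(n),v_\pm(n))^\top$ of the two-step GZ recursion \eqref{eq:gz:stepbystep}, with $\Phi_+ \in \ell^2$ at $+\infty$ and $\Phi_- \in \ell^2$ at $-\infty$. Using the factorization $\E = \mathcal L\mathcal M$, the Green's-function identity then takes the product form
$$
G_\E(n,m;z) = \frac{u_+(\max(n,m))\,u_-(\min(n,m))}{W(z)},
$$
with nonzero Wronskian $W(z)$ satisfying the quantitative lower bound $|W(z)| \gtrsim |z| - 1 \sim 1/K$. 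This yields a factor $1/|W|^2 \sim K^2$ in the Green's-function estimate.

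The technical heart of the argument is then a tail estimate
$$
\sum_{n > N}|u_+(n)|^2
\;\lesssim\;
\frac{K^2}{\max_{1\leq n\leq N}\|Z(n,0;z)\|^2}.
$$
Mirroring the Jacobi case, the idea is to exploit $\det Z(n,0;z) = \pm 1$ to invert the cocycle identity $\Phi_+(n-1) = Z(n,0;z)\Phi_+(-1)$: largeness of $\|Z(n,0;z)\|$ at some $n\in[1,N]$ forces smallness of $\|\Phi_+(n-1)\|$ relative to $\|\Phi_+(-1)\|$, which in turn limits the $\ell^2$ tail of $\Phi_+$ past that site. This is combined with the a priori global bound $\sum_n|u_+(n)|^2 \lesssim K^2$ coming from a Combes--Thomas-type estimate for $(\E - z)^{-1}\delta_{-1}$. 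Putting the three ingredients together, the $1/K$ from Parseval, the $K^2$ from $1/|W|^2$, and the $K^2$ from the tail estimate combine to give the $K^3$ prefactor in \eqref{eq:pr:gzbound}; the bound \eqref{eq:pl:gzbound} follows by a symmetric argument using $\Phi_-$ and propagation toward $-\infty$.

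The main obstacle is making the Weyl construction and the tail estimate rigorous in the CMV framework. The GZ cocycle alternates between $P$ and $Q$ factors of determinant $-1$, and the auxiliary coordinate $v = \mathcal L^{-1}u$ couples nontrivially to $u$, so the standard $SL(2,\R)$ manipulations available in the self-adjoint case must be reworked with care. Tracking the factors $\rho_n^{-1}$, the parity of the indices, and the precise Wronskian formula for the GZ system is where the hypothesis $\|\alpha\|_\infty = M < 1$ is essential: it ensures $\rho_n^{-1} \leq (1 - M^2)^{-1/2}$ uniformly, so that the one-step matrices are bounded and all implicit constants (including the final $C_0$) can be chosen to depend only on $M$.
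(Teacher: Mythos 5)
Your high-level plan correctly identifies the starting point (the Parseval identity of Lemma~\ref{l.parseval}, which the paper uses to obtain \eqref{Pr}--\eqref{Pl}) and the reduction to a pointwise bound on a sum of Green's function entries. But the heart of your argument—the tail estimate—contains a genuine gap, and in fact the paper's proof is structured around a device you do not anticipate.

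The problematic step is the claim that ``largeness of $\|Z(n,0;z)\|$ at some $n \in [1,N]$ forces smallness of $\|\Phi_+(n-1)\|$ relative to $\|\Phi_+(-1)\|$.'' The determinant condition $\det Z = \pm 1$ combined with $\Phi_+(n-1) = Z(n,0;z)\Phi_+(-1)$ gives only the \emph{lower} bound $\|\Phi_+(n-1)\| \geq \|\Phi_+(-1)\| / \|Z(n,0;z)\|$; it does not by itself give the upper bound you need. To conclude $\|\Phi_+(n-1)\|$ is small, you must know that $\Phi_+(-1)$ is well-aligned with the contracting singular direction of $Z(n,0;z)$. Whether this is true is governed by the Verblunsky coefficients \emph{beyond} site $N$ (which determine the decaying direction at site $N$), so no information extractable from the cocycle on $[0,N]$ alone can certify it. In short, the transfer-matrix data on the window $[0,N]$ says nothing about how quickly the Weyl solution decays past $N$, and the argument does not close.

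The paper resolves this by \emph{truncating} the operator: it replaces the Verblunsky coefficients beyond site $2N$ (respectively $4N$) with an explicit 2-periodic (respectively 4-periodic) sequence, producing the operators $\E_{2N}, \widetilde\E_{2N}, \E'_{4N}, \widetilde\E'_{4N}$. For these truncations the decaying direction and decay rate beyond the window are computable in closed form (the eigenvalue $\lambda_-$ and eigenvector $e^-$ or $v^-$ of the explicit two- or four-step monodromy matrix), and the truncated Green's functions can be bounded by evaluating $p_+, q_+, r_+, s_+$ at $2N-1$ against $e^-$ (Lemmas~\ref{l:greendecay1}--\ref{l:greendecay2}, \ref{l':greendecay1}--\ref{l':greendecay2}). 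The error from truncation is then absorbed by the resolvent identity in Lemma~\ref{l:MSbounds}. Two regions $\mathcal R$ and $\mathcal R'$ with different truncations are required because no single periodic tail keeps $|\lambda_-|$ uniformly bounded away from $1$ on the whole circle, and two truncations of opposite sign are used within each region to avoid the vanishing of the denominator $e_1^- p_+ \pm r_+$ at a single spectral parameter (see the addition and subtraction leading to \eqref{Mbound.pr}). None of these devices—truncation, the region split, or the sign trick—appears in your sketch, and the tail estimate does not survive without them. The remaining ingredients you list (the bounds $|b|,|d|\leq\varepsilon^{-1}$ playing the role of your Wronskian lower bound, the bookkeeping of powers of $K$) are essentially correct, but they do not compensate for the missing truncation argument.
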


We use $\psi = \delta_{-1}$ in Theorem~\ref{DT07.thm7} purely for convenience; the result also holds for any finitely supported initial state (the constant $C_0$ will depend on the choice of the initial state).

We can also use the GZ matrices to bound the tails without averaging, at the cost of an additional power of $k$.

\begin{theorem} \label{t:dt08}
Let $\E = \E_\alpha$ be an extended CMV matrix, consider the initial state $\psi = \delta_{-1}$, and define
\begin{align}
\label{pr:avbound}
P_r(N,K)
& =
\sum_{n > N} a_\psi(n,K),\\
P_l(N,K)
& =
\sum_{n < -N} a_\psi(n,K).
\end{align}
If $\alpha$ is bounded away from $\partial \D$, there exists a constant $C_0 > 0$ such that the left and right probabilities can be bounded as follows:
\begin{align}
\label{pr:bound}
P_r(N,k)
& \leq
C_0 k^4 \int_0^{2 \pi} \left( \max_{0 \leq n \leq N}
\left\| Z\left(n, 0; e^{1/k + i\theta} \right) \right\|^2 \right)^{-1} \, \frac{d\theta}{2\pi}
 \\
P_l(N,k)
& \leq
C_0 k^4 \int_0^{2 \pi} \left( \max_{-N \leq n \leq 0}
\left\| Z\left(n, 0; e^{1/k + i\theta} \right) \right\|^2 \right)^{-1} \, \frac{d\theta}{2\pi}
\end{align}
for all $N, k \in \Z_+$.
\end{theorem}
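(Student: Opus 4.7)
My plan is to deduce Theorem~\ref{t:dt08} directly from the averaged estimate of Theorem~\ref{DT07.thm7} by trading a single power of $k$ for the removal of the time-averaging. First, keeping only the $j = k$ term in the definition of $\widetilde a_\psi$ gives the pointwise inequality
$$
\widetilde a_\psi(n,K)
\;\geq\;
\frac{2}{K} e^{-2k/K}\, a_\psi(n,k),
$$
so $a_\psi(n,k) \leq \tfrac{K}{2} e^{2k/K}\, \widetilde a_\psi(n,K)$. Summing over $n > N$ and specializing to $K = k$ eliminates the exponential factor and yields $P_r(N,k) \leq \tfrac{e^2}{2}\, k\, \widetilde P_r(N,k)$.

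Next I would apply Theorem~\ref{DT07.thm7} with $K = k$, which bounds the right-hand side by
$$
C_0'\, k^4 \int_0^{2\pi}
\left(
\max_{1 \leq n \leq N}
\| Z(n,0;e^{1/k + i\theta}) \|^2
\right)^{-1}
\frac{d\theta}{2\pi}.
$$
This differs from the bound claimed in \eqref{pr:bound} only in that the maximum is taken over $1 \leq n \leq N$ rather than $0 \leq n \leq N$. To reconcile this, I would use that each one-step matrix $Y(n,z)$ has determinant $-1$, so every $Z(n,0;z)$ satisfies $|\det Z(n,0;z)| = 1$, which forces the larger singular value, hence the operator norm, of $Z(n,0;z)$ to be at least $1$. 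Since $Z(0,0;z) = I$ also has norm $1$, adjoining $n = 0$ to the range of the maximum does not change its value when $N \geq 1$; for $N = 0$ the asserted bound is trivial after an appropriate choice of $C_0$, since $P_r(0,k) \leq 1$.

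The estimate for $P_l(N,k)$ follows by the same three steps applied to the left tail: reduce to $\widetilde P_l(N,k)$ via the pointwise inequality with $K = k$, invoke \eqref{eq:pl:gzbound}, and use the determinant argument to replace the range $-N \leq n \leq -1$ by $-N \leq n \leq 0$. I do not anticipate a serious obstacle here, as Theorem~\ref{DT07.thm7} does essentially all of the analytic work; the de-averaging device—familiar from the Schr\"odinger setting of \cite{DT2010}—is responsible exactly for the extra power of $k$ that converts $k^3$ into $k^4$.
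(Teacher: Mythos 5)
Your argument is correct, and it reaches the claimed bound by a genuinely different route than the paper. The paper proves Theorem~\ref{t:dt08} via Lemmas~\ref{l:contourint} and \ref{l:outbound:no:avg}: it expresses $\langle \delta_n, \E^k \delta_{-1}\rangle$ as a contour integral of the resolvent over the circle $|z|=e^{1/k}$ (Dunford functional calculus), applies Jensen's inequality to bound $P_r(N,k)$ by $\int_0^{2\pi} M_r(N,e^{1/k+i\theta})\,\frac{d\theta}{2\pi}$, and then invokes Lemma~\ref{DT07.Lemma3} directly, so the $k^4$ comes from $\varepsilon^{-4}$ with $\varepsilon=1/k$ and no compensating $\varepsilon$ prefactor. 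Your approach instead uses the elementary pointwise inequality $a_\psi(n,k)\le \tfrac{K}{2}e^{2k/K}\widetilde a_\psi(n,K)$, sets $K=k$, and feeds the result into the already-proved averaged bound of Theorem~\ref{DT07.thm7}; the extra power of $k$ then appears explicitly as the de-averaging cost. Both arguments are sound; yours is shorter and bypasses the contour-integral machinery entirely, at the expense of an extra multiplicative constant $e^2/2$ and of being logically downstream of Theorem~\ref{DT07.thm7} rather than parallel to it. The paper's route is more self-contained and generalizes to contexts (e.g.\ Theorem~\ref{DT07.thm1}-type corollaries) where one wants direct unaveraged resolvent bounds, but for the statement at hand the two are equivalent in content.

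Your handling of the $\max_{1\le n\le N}$ vs.\ $\max_{0\le n\le N}$ discrepancy is also fine: since $\det P(\alpha,z)=\det Q(\alpha,z)=-1$, every $Z(n,0;z)$ has $|\det|=1$ and hence operator norm at least $1$, while $Z(0,0;z)=I$ has norm exactly $1$, so adjoining $n=0$ leaves the maximum unchanged for $N\ge 1$; and $N=0$ is trivial as you note.
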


These bounds on the tails imply corresponding bounds on the transport exponents in terms of bounds on the GZ matrices (averaged over the spectral parameter).

\begin{theorem} \label{DT07.thm1}
Given $\E = \E_\alpha$, with $\alpha$ bounded away from $\partial \D$, suppose that there exist $C \in (0,\infty)$ and $\gamma\in (0,1)$ such that
\begin{equation}\label{RightInequality}
\sup_{K \in \Z_+} K^m \int_0^{2\pi} \! \left(\max_{0 \leq n \leq CK^\gamma}\left\Vert Z \left( n, 0; e^{1/K+i\theta} \right) \right\Vert^2 \right)^{-1} \frac{d\theta}{2\pi}
<
\infty,
\end{equation}
and
\begin{equation}\label{LeftInequality}
\sup_{K \in \Z_+} K^m \int_0^{2\pi} \! \left(\max_{1 \leq -n \leq CK^\gamma}\left\Vert Z \left( n, 0; e^{1/K+i\theta} \right) \right\Vert^2 \right)^{-1} \frac{d\theta}{2\pi}
<
\infty
\end{equation}
for all $m \ge 1$. Then $\beta^+_{\psi}(p) \leq \gamma$ for every $p > 0$ and every finitely supported $\psi \in \ell^2(\Z)$.
\end{theorem}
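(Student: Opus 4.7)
The plan is to combine Theorem~\ref{t:dt08} with the hypotheses \eqref{RightInequality}--\eqref{LeftInequality} to obtain super-polynomial decay of the probability of finding the wave packet outside a ball of radius $\sim K^\gamma$ at time $K$, and then to convert this into the desired polynomial upper bound on $|X|^p_\psi(K)$ via a standard split-and-Cauchy--Schwarz argument involving the universal ballistic upper bound.

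First I would extend Theorem~\ref{t:dt08} to arbitrary finitely supported $\psi$, which is the content of the remark following Theorem~\ref{DT07.thm7}; alternatively, one may invoke translation invariance (Remark~\ref{rem:shift}(1)) together with the cocycle identity $Z(n,m_1;z) = Z(n,m_2;z) Z(m_2,m_1;z)$ to reduce to the case $\psi = \delta_{-1}$. Setting $N_K := \lfloor C K^\gamma \rfloor$ and applying Theorem~\ref{t:dt08} with $N = N_K$, the hypotheses yield, for every integer $m \geq 1$, a constant $C_m > 0$ with
\begin{equation*}
P_r(N_K, K) + P_l(N_K, K) \leq C_m K^{4-m}.
\end{equation*}
Since $m$ is arbitrary, these tail probabilities decay faster than any polynomial in $K$.

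Splitting the moment at $|n| = N_K$ and applying Cauchy--Schwarz together with the universal ballistic upper bound $|X|^{2p}_\psi(K) \leq C K^{2p}$ (cf.\ \cite[Theorem~2.22]{DT2010}) gives
\begin{equation*}
|X|^p_\psi(K) \leq (N_K^p + 1) + \sqrt{P_r(N_K,K) + P_l(N_K,K)} \cdot \sqrt{|X|^{2p}_\psi(K)} \leq C' K^{\gamma p} + C_m'' K^{(4-m)/2 + p}.
\end{equation*}
Since $\gamma < 1$, choosing any $m > 4 + 2p(1 - \gamma)$ makes the exponent $(4-m)/2 + p$ strictly smaller than $\gamma p$; the second term is then subdominant and $|X|^p_\psi(K) \leq C_p K^{\gamma p}$. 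Taking logarithms and dividing by $p \log K$ yields $\beta^+_\psi(p) \leq \gamma$.

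The conceptual content---that super-polynomial tail decay combined with a ballistic upper bound on the next-higher moment gives a sharp polynomial bound on the given moment---is a standard trick, so the argument is essentially bookkeeping. The main subtleties lie in verifying that the ranges of $n$ in the maxima of the hypotheses and of Theorem~\ref{t:dt08} match up when $N = N_K$, and in ensuring that the reduction to $\psi = \delta_{-1}$ preserves all relevant hypotheses; neither step involves a genuinely new analytic idea beyond what is already encoded in Theorem~\ref{t:dt08}.
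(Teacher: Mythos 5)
Your proof is correct, and it differs from the paper's proof in one meaningful way: how the tail contribution is handled after Theorem~\ref{t:dt08} is applied. The paper exploits finite propagation speed directly — since $\E$ is banded, it enlarges $C$ so that $a_{\delta_{-1}}(n,k) = 0$ for $|n| > Ck+C$, and then bounds the tail sum $\sum_{Ck^\gamma < |n| \le Ck + C}(|n|^p+1)a_\psi(n,k)$ pointwise by $(Ck+C)^p\bigl(P_r(Ck^\gamma,k) + P_l(Ck^\gamma,k)\bigr)$, which is super-polynomially small. You instead apply Cauchy--Schwarz to the tail sum and invoke the universal ballistic bound on the $2p$-th moment, which buys the same super-polynomial suppression for $m$ chosen large enough. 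Both arguments are correct and essentially equivalent in strength here; yours is marginally less dependent on the specific band structure of $\E$ (it would work for any operator obeying a ballistic moment bound, not just a finite-range one), while the paper's is more elementary and avoids any loss of a factor $\sqrt{\cdot}$ in the tail exponent, though since $m$ can be arbitrarily large this loss is immaterial. One small inaccuracy to note: Remark~\ref{rem:shift}(1) as stated refers to the hypotheses of Theorem~\ref{t:dt:powerlaw}, not those of the present theorem, so for the reduction from $\delta_n$ to $\delta_{-1}$ you should appeal, as the paper does, to conjugating $\E$ by the shift $S^{n+1}$ and then observing via the cocycle identity $Z(j,0;z) = Z(j,n+1;z)\,Z(n+1,0;z)$ that the bounds \eqref{RightInequality}--\eqref{LeftInequality} are preserved up to adjusting $C$.
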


Let us say a few words to compare and contrast Theorem~\ref{t:dt:powerlaw} with Theorem~\ref{DT07.thm1}. First, there is a (very) rough physical heuristic which identifies the sizes of the transfer matrices with the sizes of the barriers that a wavepacket must tunnel through to escape compact regions. Thus, if all transfer matrices are sufficiently small, a wavepacket only observes small barriers, and hence, one expects it to tunnel effectively through all barriers; this vague statement is exactly what is made precise in the statement of Theorem~\ref{t:dt:powerlaw}. In contrast, if transfer matrices are large, then a wavepacket observes large barriers and has much more difficulty tunneling out of compact regions; this statement is made precise in Theorem~\ref{DT07.thm1}.

Moreover, notice that in contrast to Theorem~\ref{t:dt:powerlaw}, the application of Theorem~\ref{DT07.thm1} requires one to prove effective lower bounds on the transfer matrices on a set with nonzero Lebesgue measure. This contrasts with Theorem~\ref{t:dt:powerlaw} since as noted before, if one can prove effective upper bounds on transfer matrices at a single complex spectral parameter, Theorem~\ref{t:dt:powerlaw} already yields nontrivial conclusions! On physical grounds, this asymmetry in difficulty is unsurprising: to establish lower bounds on wavepacket propagation one need only prove that some portion of the wavepacket moves quickly, whereas to prove upper bounds one must simultaneously obtain effective control of the entire wavepacket.

\begin{remark} Let us conclude with remarks on a few extensions and variations on Theorems~\ref{DT07.thm7}, \ref{t:dt08}, and \ref{DT07.thm1}.
\begin{enumerate}
\item By general principles, $\widetilde \beta^+_\psi(p) \leq \beta^+_\psi(p)$ for any $\psi \in \mathcal S$ and any $p > 0$. Specifically, if $\gamma > \beta^+_\psi(p)$, then $|X|^p_\psi(k) \lesssim k^{\gamma p}$, and it is not hard to use this to show $\left\langle |X|^p_\psi \right\rangle(K) \lesssim K^{\gamma p}$; compare the proof of \cite[Lemma~7.2]{DLLY}, for example. In particular, the previous theorem immediately yields upper bounds on $\widetilde \beta^\pm_\psi(p)$ for finitely supported $\psi \in \ell^2(\Z)$.

\item Over the course of the proofs, we use the hypothesis that $\alpha$ is bounded away from $\partial \D$ in two places: to pass between $Z(N,0;z)$ and $Z(2\lfloor N/2 \rfloor, 0;z)$ in the proof of Lemma~\ref{DT07.Lemma3}, and, similarly, to pass between $Z(N,0;z)$ and $Z(4\lfloor N/4 \rfloor,0;z)$ in the proof of Lemma~\ref{DT07.Lemma3'}. Thus, one can modify the statements of Theorems~\ref{DT07.thm7}, \ref{t:dt08}, and \ref{DT07.thm1} to cover the case of general $\alpha$'s by restricting $n$ to be a multiple of 4 in the maxes which appear in Theorems~\ref{DT07.thm7}--\ref{DT07.thm1}.

\item In the half-line case, the ``left probabilities'' make no sense, since a state cannot propogate to the left of the origin. However, the bounds on the right probabilities (equations \eqref{pr:avbound} and \eqref{pr:bound}) still hold true in the half-line setting. In particular, the correct generalization of Theorem~\ref{DT07.thm1} to the half-line setting is the following: if \eqref{RightInequality} holds (for some $C$, some $\gamma$, and then all $m\ge1$), then $\beta_\psi^+(p) \le \gamma$ for all finitely supported $\psi$ and all $p > 0$.
\end{enumerate}
\end{remark}

\subsection{Motivation: Quantum Walks on $\Z$} \label{sec:qw}

We now precisely describe quantum walks on the integer lattice and their relationship with extended CMV matrices, following \cite{DFV}; see also \cite{CGMV}. A quantum walk is described by a unitary operator on the Hilbert space $\mathcal{H} = \ell^2(\Z) \otimes \C^2$, which models a state space in which a wave packet comes equipped with a ``spin'' at each integer site. As noted before, the elementary tensors of the form $\delta_n \otimes e_\uparrow$, and $\delta_n \otimes e_\downarrow$ with $n \in \Z$ comprise an orthonormal basis of $\Hi$. A time-homogeneous quantum walk scenario is given as soon as coins
\begin{equation}\label{e.timehomocoins}
Q_{n}
=
\begin{pmatrix}
q^{11}_{n} & q^{12}_{n} \\
q^{21}_{n} & q^{22}_{n}
\end{pmatrix}
\in \mathrm U(2), \quad n \in \Z,
\end{equation}
are specified. To avoid degenerate decoupling situations, we will always assume that $q_n^{11}, q_n^{22} \neq 0$. As one passes from time $t$ to time $t+1$, the update rule of the quantum walk is as follows,
\begin{align}
\delta_{n} \otimes e_\uparrow & \mapsto
  q^{11}_{n} \delta_{n+1} \otimes e_\uparrow
+ q^{21}_{n} \delta_{n-1} \otimes e_\downarrow , \label{e.updaterule1} \\
\delta_n \otimes e_\downarrow  & \mapsto
  q^{12}_{n} \delta_{n+1} \otimes e_\uparrow
+ q^{22}_{n} \delta_{n-1} \otimes e_\downarrow \label{e.updaterule2}.
\end{align}
If we extend this by linearity and continuity to general elements of $\mathcal{H}$, this defines a unitary operator $U$ on $\mathcal{H}$. Next, order the basis of $\mathcal{H}$ as in \eqref{eq:orderedbasis}, i.e. $\varphi_{2m} = \delta_m \otimes e_\uparrow$, $\varphi_{2m+1} = \delta_m \otimes e_\downarrow$ for $m \in \Z$. In this ordered basis, the matrix representation of $U : \mathcal{H} \to \mathcal{H}$ is given by
\begin{equation}\label{e.umatrixrep}
U
=
\begin{pmatrix}
\ddots &\ddots & \ddots & \ddots &&&&&&  \\
& 0 & 0 & q_0^{21} & q_0^{22} &&&&& \\
&q_{-1}^{11} & q_{-1}^{12} & 0 & 0 &&& && \\
&& & 0 & 0 & q_1^{21} & q_1^{22} & && \\
&& & q_0^{11} & q_0^{12} & 0 & 0 & && \\
&& &&& 0 & 0 & q_2^{21} & q_2^{22}& \\
&& &&& q_1^{11} & q_1^{12} & 0 & 0 & \\
&& &&&& \ddots & \ddots &  \ddots & \ddots
\end{pmatrix},
\end{equation}
which can be checked readily using the update rule \eqref{e.updaterule1}--\eqref{e.updaterule2}; compare \cite[Section~4]{CGMV}.

We can connect quantum walks to CMV matrices using the following observation. If all Verblunsky coefficients with even index vanish, the extended CMV matrix in \eqref{def:extcmv} becomes
\begin{equation}\label{e.ecmvoddzero}
\mathcal{E}
=
\begin{pmatrix}
\ddots & \ddots & \ddots & \ddots &&&&&& \\
& 0 & 0 & \overline{\alpha_1}  & \rho_1 &&&&& \\
& \rho_{-1} & -\alpha_{-1} & 0 & 0 &&&&& \\
&&  & 0 & 0 & \overline{\alpha_3}  & \rho_3 &&& \\
&& & \rho_1 & - \alpha_1 & 0 & 0 &  &&  \\
&& &&& 0 & 0 & \overline{\alpha_5}  & \rho_5 & \\
&& &&& \rho_3 & - \alpha_3 & 0 & 0  & \\
&& &&&& \ddots & \ddots &  \ddots & \ddots
\end{pmatrix}.
\end{equation}
The matrix in \eqref{e.ecmvoddzero} strongly resembles the matrix representation of $U$ in \eqref{e.umatrixrep}. Note, however, that $\rho_n > 0$ for all $n$, so \eqref{e.umatrixrep} and \eqref{e.ecmvoddzero} may not match exactly when $q_n^{kk}$ is not real and positive. However, this can be easily resolved by conjugation with a suitable diagonal unitary, as shown in \cite{CGMV}. Concretely, given $U$ as in \eqref{e.umatrixrep}, write
$$
q_n^{kk}
=
r_n \omega_n^k,
\quad
n \in \Z, \; k \in \{ 1,2 \}, \; r_n > 0, \; |\omega_n^k| = 1,
$$
define $\{ \lambda_n \}_{n \in \Z}$ by
$$
\lambda_0 = 1, \; \lambda_{-1} = 1, \; \lambda_{2n+2} = \omega_n^1 \lambda_{2n}, \; \lambda_{2n+1} = \overline{\omega^2_n} \lambda_{2n-1},
$$
and let $\Lambda = \mathrm{diag}(\ldots, \lambda_{-1} , \lambda_0 , \lambda_1 , \ldots)$. More precisely $\Lambda:\ell^2(\Z) \to \ell^2(\Z) \otimes \C^2$ maps $\delta_{n}$ to $\lambda_{n} \varphi_n$ for each $n \in \Z$ (where $\varphi_n$ is as in \eqref{eq:orderedbasis}). After a short calculation, we see that
$$
\mathcal{E} = \Lambda^* U \Lambda,
$$
where $\mathcal{E}$ is the extended CMV matrix corresponding to the Verblunsky coefficients
\begin{equation}\label{e.correspondence}
\alpha_{2n} = 0 , \quad
\alpha_{2n+1}
=
\frac{\lambda_{2n-1}}{\lambda_{2n}} \cdot \overline{q_n^{21}}
= - \frac{\lambda_{2n+1}}{\lambda_{2n+2}} q_n^{12}, \quad n \in \Z,
\end{equation}
where equivalence of the two expressions for $\alpha_{2n+1}$ follows from unitarity of $Q_n$. Notice that the hypotheses $q_n^{11}, q_n^{22} \neq 0$ imply $\rho_n > 0$ for all $n \in \Z$.

\bigskip

 The reader should be warned that many authors (e.g.\ \cite{CGMV,DFV}) use the \emph{transpose} of the matrix $U$ in \eqref{e.umatrixrep}. However, since we imagine a vector $\psi$ as a column vector with $U$ acting on the left, our representation is more appropriate for the present paper. For quantum dynamical purposes, this is not a major concern for the following reason. If $\E$ is an extended CMV matrix, then $\E^\top = S^* \E' S$, where $S$ denotes the left shift on $\ell^2(\Z)$ and $\E'$ has Verblunsky coefficients $\alpha_n' = \alpha_{n+1}$. Since all of our theorems are stable under shifting the Verblunsky coefficients, it does not matter whether one considers $U$ or $U^\top$ in what follows.

\bigskip

Let us conclude with a few words on half-line quantum walks. A quantum walk on $\Z_+$ has state space $\Hi_+ = \left( \ell^2(\Z_+) \otimes \C^2 \right) \oplus \langle \delta_0 \otimes e_\downarrow \rangle$, and is specified by coins $Q_n \in \mathrm U(2)$ for $n \ge 1$. Additionally, one specifies a reflecting boundary condition at the origin, e.g.,
$$
Q_0
=
\begin{pmatrix} 0 & 1 \\ -1 & 0 \end{pmatrix}.
$$
With this choice of boundary condition, one can check that \eqref{e.updaterule1} and \eqref{e.updaterule2} define a unitary operator on $\Hi_+$\footnote{In particular, the choice of $Q_0$ implies that $\delta_0 \otimes e_\downarrow$ is mapped to $\delta_1 \otimes e_\uparrow$ so that $U$ maps $\Hi_+$ to itself.}, and that this update rule has the form of a half-line CMV matrix with respect to the ordered basis $\{ \varphi_n : n \in \Z_+ \}$, where $\varphi_n$ is as in \eqref{eq:orderedbasis}.

\section{Power-Law Bounds on Szeg\H{o} Cocycles}

In this section, we will prove Theorem~\ref{t:dt:powerlaw}. We begin by collecting a few useful facts. The following variant of the Parseval formula allows us to connect time-averages of observables to averages of the resolvent over the spectral parameter.

\begin{lemma}\label{l.parseval}
Suppose $U$ is a unitary operator on $\Hi$, $\psi \in \Hi$, and $\{\varphi_n\}$ is an orthonormal basis of $\Hi$. For all $K \in \Z_+$ and $n$, we have
$$
\sum_{k \ge 0} e^{-2k/K} a_\psi(n,k)
=
e^{2/K} \int_0^{2\pi} \left| \left\langle \varphi_n , \left(U - e^{1/K + i\theta} \right)^{-1} \psi \right\rangle \right|^2 \, \frac{d\theta}{2\pi}.
$$
\end{lemma}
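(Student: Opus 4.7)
The plan is a straightforward computation based on expanding the resolvent as a Neumann series and invoking Parseval's identity for Fourier series on the circle. Set $z = e^{1/K + i\theta}$, so that $|z| = e^{1/K} > 1$ is strictly outside the spectrum of the unitary $U$ (which is contained in $\partial\D$). Then $\|U/z\|_{\mathrm{op}} = e^{-1/K} < 1$, so the Neumann series
\[
(U - z)^{-1} = -z^{-1}(I - U/z)^{-1} = -\sum_{k=0}^{\infty} z^{-k-1} U^k
\]
converges in operator norm. First, I would apply this to $\psi$ and pair with $\varphi_n$ to obtain
\[
\left\langle \varphi_n, (U-z)^{-1}\psi \right\rangle = -e^{-1/K}e^{-i\theta}\sum_{k=0}^{\infty} e^{-k/K} e^{-ik\theta}\, c_k,
\qquad c_k := \langle \varphi_n, U^k\psi\rangle.
\]

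Second, I would square and integrate over $\theta$. The sequence $\{e^{-k/K} c_k\}_{k \ge 0}$ lies in $\ell^2(\Z_{\ge 0})$ since $|c_k| \le \|\psi\|$ for all $k$ (as $U$ is unitary and $\{\varphi_n\}$ orthonormal), so the displayed series is the $H^2$ Fourier series of an element of $L^2(\partial\D, d\theta/(2\pi))$. The unimodular prefactor $-e^{-1/K}e^{-i\theta}$ is harmless under $|\cdot|^2$, giving
\[
\int_0^{2\pi} \left|\langle \varphi_n, (U-z)^{-1}\psi\rangle\right|^2 \frac{d\theta}{2\pi} = e^{-2/K} \int_0^{2\pi} \left|\sum_{k=0}^{\infty} e^{-k/K}e^{-ik\theta} c_k\right|^2 \frac{d\theta}{2\pi}.
\]
Parseval's theorem applied to the Fourier series on the right then produces
\[
\int_0^{2\pi} \left|\sum_{k=0}^{\infty} e^{-k/K}e^{-ik\theta} c_k\right|^2 \frac{d\theta}{2\pi} = \sum_{k=0}^{\infty} e^{-2k/K} |c_k|^2 = \sum_{k=0}^{\infty} e^{-2k/K} a_\psi(n,k).
\]
Multiplying through by $e^{2/K}$ yields exactly the claimed identity.

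I do not anticipate any real obstacle. The only points requiring mild care are the convergence of the Neumann series (which is immediate from $|z|>1$ and $\|U\|=1$) and the use of Parseval, which needs the coefficient sequence in $\ell^2$; boundedness of $|c_k|$ and the geometric decay $e^{-k/K}$ supply this with room to spare and simultaneously justify interchanging the sum and integral. No assumption beyond unitarity of $U$ and orthonormality of $\{\varphi_n\}$ is used, which is consistent with the lemma's placement as a general-purpose tool to be deployed later when $U = \E$.
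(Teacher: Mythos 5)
Your computation is correct, and it is exactly the natural argument — Neumann series for the resolvent at $|z|>1$ followed by Parseval on $\partial\D$ — that one would use to establish this identity. The paper itself does not spell out a proof but simply cites \cite[Lemma~3.16]{DFV}, where the same kind of Fourier/Parseval computation is carried out; your write-up supplies the details cleanly, with the convergence and $\ell^2$ hypotheses for Parseval correctly checked.
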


\begin{proof}

This is  \cite[Lemma~3.16]{DFV}.

\end{proof}

Next, we will need a Gronwall-type variational estimate for the Szeg\H{o} matrices. The key ingredient in this estimate is furnished by the following useful variational lemma. Throughout the remainder of this section, let $\alpha$ be a fixed sequence of Verblunsky coefficients which is bounded away from $\partial \D$.

\begin{lemma}
Let $z \in \partial \D$ and $w \in \C$. Then, for $n \ge 0$, we have
\begin{equation}\label{e.energyvariation1}
T(n,0;w)
=
T(n,0;z) - \sum_{\ell = 0}^{n-1} \left( 1 - z^{-1} w \right) T(n,\ell;z) P T(\ell,0;w),
\end{equation}
where
$$
P = \begin{pmatrix} 1 & 0 \\ 0 & 0 \end{pmatrix}.
$$
\end{lemma}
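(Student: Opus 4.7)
The plan is to reduce to a one-step identity and then telescope. First I would compute the difference $S(\alpha,w)-S(\alpha,z)$ directly from the definition: since $S(\alpha,z)=\rho^{-1}\begin{pmatrix}z & -\overline\alpha\\ -\alpha z & 1\end{pmatrix}$, only the first column depends on the spectral parameter, and the dependence is linear. Explicitly one finds
\[
S(\alpha,w)-S(\alpha,z)=\frac{1}{\rho}\begin{pmatrix}w-z & 0\\ -\alpha(w-z) & 0\end{pmatrix}.
\]
The key algebraic observation is that this can be refactored through $P$: a direct computation gives $S(\alpha,z)P=\rho^{-1}\begin{pmatrix}z & 0\\ -\alpha z & 0\end{pmatrix}$, so that
\[
S(\alpha,w)-S(\alpha,z)=-\bigl(1-z^{-1}w\bigr)\,S(\alpha,z)\,P.
\]
This is the one-step version of \eqref{e.energyvariation1}.

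Next I would telescope over the $n$ factors defining $T(n,0;\cdot)$. For $0\le\ell\le n$, set
\[
A_\ell := S(\alpha_{n-1},z)\cdots S(\alpha_\ell,z)\,S(\alpha_{\ell-1},w)\cdots S(\alpha_0,w),
\]
with the convention that the empty product is $I$, so that $A_0=T(n,0;z)$ and $A_n=T(n,0;w)$. Two consecutive terms $A_\ell$ and $A_{\ell+1}$ differ only in the factor at position $\ell$, which carries $z$ in $A_\ell$ and $w$ in $A_{\ell+1}$. Plugging in the one-step identity derived above yields
\[
A_{\ell+1}-A_\ell=-\bigl(1-z^{-1}w\bigr)\,T(n,\ell;z)\,P\,T(\ell,0;w),
\]
after recognizing $S(\alpha_{n-1},z)\cdots S(\alpha_\ell,z)=T(n,\ell;z)$ and $S(\alpha_{\ell-1},w)\cdots S(\alpha_0,w)=T(\ell,0;w)$.

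Finally, summing the telescoping identity $T(n,0;w)-T(n,0;z)=\sum_{\ell=0}^{n-1}(A_{\ell+1}-A_\ell)$ produces exactly \eqref{e.energyvariation1}. There is no real obstacle here; the only step that requires care is getting the sign and the placement of $P$ right in the one-step identity, and keeping track of which factors carry $z$ versus $w$ in the telescoping interpolation. The hypothesis $z\in\partial\D$ is never used in the derivation; the formula is purely algebraic and holds for any $z\in\C\setminus\{0\}$ and any $w\in\C$, which is consistent with how it will later be applied with $z$ on the circle and $w$ slightly off it.
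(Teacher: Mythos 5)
Your proof is correct and complete. Both the one-step computation
\[
S(\alpha,w)-S(\alpha,z)=\frac{w-z}{\rho}\begin{pmatrix}1 & 0\\ -\alpha & 0\end{pmatrix}
=-\bigl(1-z^{-1}w\bigr)\,S(\alpha,z)\,P
\]
and the telescoping bookkeeping (the endpoints $A_0=T(n,0;z)$, $A_n=T(n,0;w)$, and the factorization of $A_{\ell+1}-A_\ell$ into $T(n,\ell;z)\,P\,T(\ell,0;w)$) check out.

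The paper itself does not give a proof; it simply cites Simon's Proposition~10.8.1 and appends a footnote about extending from $w\in\D$ to $w\in\C$. Your telescoping derivation is the standard way to establish identities of this kind, and it makes explicit what the citation leaves implicit; in particular your closing remark — that the identity is purely algebraic and holds for any $z\in\C\setminus\{0\}$, $w\in\C$ — matches the content of the paper's footnote without invoking analytic continuation. So in substance you recover the intended argument and supply the details the paper outsources to Simon's book.
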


\begin{proof}
This is \cite[Proposition~10.8.1]{S2}.\footnote{Actually, \cite[Proposition~10.8.1]{S2} makes this statement only for $w \in \D$, but the proof given there in fact establishes it for all $w \in \C$. Alternatively, once one has \eqref{e.energyvariation1} for all $w \in \D$, the result for arbitrary $w \in \C$ follows immediately from analyticity.}
\end{proof}

The following bound is an OPUC analog of \cite[Lemma~2.1]{DT03}, which is the corresponding result for Schr\"odinger cocycles.

\begin{lemma} \label{l:energyvar}
Suppose $z \in \partial \D$ and $N \ge 1$. Denote
$$
L(N) = \max \{ \| T(n,m;z) \| : |n|, |m| \le N \}.
$$
Then, for $\delta \in \C$ with $|\delta| \leq 1$, we have
\begin{equation}\label{e.energyvariation2}
\left\| T\left(n,0;z e^\delta\right) \right\|
\le
L(N) \exp \big( 2 L(N) \cdot |n| \cdot |\delta| \big)
\end{equation}
for all $n$ with $|n| \leq N$.
\end{lemma}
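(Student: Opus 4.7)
My plan is to combine the variational identity \eqref{e.energyvariation1} with a discrete Gronwall inequality. With $w = ze^\delta$, the perturbative prefactor becomes $|1 - z^{-1}w| = |1 - e^\delta|$, and since $|1 - e^\delta| \leq e^{|\delta|} - 1 \leq (e-1)|\delta| \leq 2|\delta|$ for $|\delta| \leq 1$ (using convexity of $x \mapsto e^x - 1$ on $[0,1]$), the sum in \eqref{e.energyvariation1} is scaled by at most $2|\delta|$.

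Next, for $0 \leq n \leq N$, set $f(n) := \|T(n,0;ze^\delta)\|$. Using $\|P\|=1$ and $\|T(n,\ell;z)\| \leq L(N)$ for all $|n|,|\ell| \leq N$, taking norms in \eqref{e.energyvariation1} gives
\[
f(n) \leq L(N) + 2 L(N) |\delta| \sum_{\ell=0}^{n-1} f(\ell).
\]
The standard discrete Gronwall inequality (proved by a one-line induction) then yields $f(n) \leq L(N)(1 + 2L(N)|\delta|)^n \leq L(N) \exp(2 L(N)|\delta| \cdot n)$, which is \eqref{e.energyvariation2} in the range $0 \leq n \leq N$.

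For $-N \leq n < 0$, the plan is to derive the analogue of \eqref{e.energyvariation1} for the product of inverse Szeg\H{o} matrices $T(n,0;w) = S(\alpha_n,w)^{-1} \cdots S(\alpha_{-1},w)^{-1}$. The key ingredient is the explicit computation $S(\alpha,w) - S(\alpha,z) = -(1 - z^{-1}w)\, S(\alpha,z)\, P$, which one verifies directly from the definition of $S$ together with $|z|=1$. Combined with the identity $A^{-1} - B^{-1} = B^{-1}(B-A)A^{-1}$ applied to $A = S(\alpha_\ell,w)$, $B = S(\alpha_\ell,z)$, telescoping produces a formula of the form
\[
T(n,0;w) = T(n,0;z) - (1 - z^{-1}w) \sum_{\ell=n}^{-1} T(n,\ell;z)\, P\, T(\ell,0;w)
\]
(up to minor index conventions). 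Once this is in hand, the Gronwall step from the previous paragraph applies verbatim to give \eqref{e.energyvariation2} for $-N \leq n \leq 0$.

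The main obstacle will be the bookkeeping in the negative-index telescoping: after using $A^{-1} - B^{-1} = B^{-1}(B-A)A^{-1}$ and substituting $B - A = (1 - z^{-1}w)S(\alpha,z) P$, one needs the residual $S(\alpha,z)^{-1}$ factor to absorb cleanly into an adjacent Szeg\H{o} inverse so that the ``$z$-side'' of the cut becomes a genuine $T(n,\ell;z)$ (bounded by $L(N)$) and the ``$w$-side'' becomes $T(\ell,0;w)$ (which is precisely $f(\ell)$, feeding into the Gronwall iteration). Aside from this, the estimate $|1 - e^\delta| \leq 2|\delta|$, the bound $\|P\| = 1$, and the discrete Gronwall inequality are entirely routine.
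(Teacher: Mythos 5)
The positive-index half of your argument is correct and matches the paper: you take norms in \eqref{e.energyvariation1} and apply a discrete Gronwall inequality, whereas the paper iterates \eqref{e.energyvariation1} until every factor sits at parameter $z$ and then sums a binomial expansion; these are the same argument in two guises, and both yield $L(N)(1+2L(N)|\delta|)^n \le L(N)\exp(2L(N)n|\delta|)$. Your bound $|1-e^\delta|\le 2|\delta|$ is also fine.

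The negative-index half has a genuine gap, and it is not where you anticipated. The absorption you flag as the main obstacle is actually painless: with $A=S(\alpha_\ell,w)$, $B=S(\alpha_\ell,z)$, and $B-A=(1-z^{-1}w)S(\alpha_\ell,z)P$, the identity $A^{-1}-B^{-1}=B^{-1}(B-A)A^{-1}$ simplifies immediately to
$$
S(\alpha_\ell,w)^{-1}-S(\alpha_\ell,z)^{-1}=(1-z^{-1}w)\,P\,S(\alpha_\ell,w)^{-1},
$$
with no leftover $S(\alpha_\ell,z)^{-1}$ at all. The real problem is the index range. Carrying the telescope through with $z$-factors to the left of the cut and $w$-factors to the right (the orientation mirroring \eqref{e.energyvariation1}, and the one you implicitly use) produces, for $n<0$,
$$
T(n,0;w)=T(n,0;z)+(1-z^{-1}w)\sum_{\ell=n}^{-1}T(n,\ell;z)\,P\,T(\ell,0;w),
$$
which is exactly the range $\ell=n,\dots,-1$ you wrote. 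But the $\ell=n$ term equals $(1-z^{-1}w)P\,T(n,0;w)$, so $T(n,0;w)$ appears on \emph{both} sides; taking norms gives $f(n)\le L(N)+2L(N)|\delta|\sum_{\ell=n}^{-1}f(\ell)$ with $f(n)$ inside the sum, and discrete Gronwall does not apply verbatim. (In the positive case \eqref{e.energyvariation1} sums over $\ell<n$ only, which is why the iteration closes there; the asymmetry stems from the fact that the absorbed one-step factor is $S(\alpha,z)$, joining the $z$-prefix, when $n>0$, but $S(\alpha,w)^{-1}$, joining the $w$-suffix, when $n<0$.)

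The correct repair is to telescope in the opposite orientation: put the $w$-product to the left of the cut and the $z$-product to the right, and use the companion identity $S(\alpha,w)^{-1}-S(\alpha,z)^{-1}=-(1-zw^{-1})\,P\,S(\alpha,z)^{-1}$, absorbing $S(\alpha_\ell,z)^{-1}$ into the $z$-suffix. For $n<m\le 0$ this yields
$$
T(n,m;w)=T(n,m;z)-(1-zw^{-1})\sum_{\ell=n}^{m-1}T(n,\ell;w)\,P\,T(\ell,m;z).
$$
Since $|1-zw^{-1}|=|1-e^{-\delta}|\le 2|\delta|$ by the same estimate, setting $g(m)=\|T(n,m;w)\|$ for \emph{fixed} $n$ gives $g(n)=1\le L(N)$ and $g(m)\le L(N)+2L(N)|\delta|\sum_{\ell=n}^{m-1}g(\ell)$, where now the sum is over $\ell<m$ only; Gronwall in the right endpoint $m$ then gives $g(0)=\|T(n,0;w)\|\le L(N)(1+2L(N)|\delta|)^{|n|}\le L(N)\exp(2L(N)|n||\delta|)$. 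The point to internalize is that for $n<0$ the quantity to iterate is the left factor $T(n,\cdot\,;w)$ with varying right endpoint, not the right factor $T(\cdot\,,0;w)$ with varying left endpoint.
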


\begin{proof}
Let us consider the case $0 \leq n \leq N$; the proof for $-N \le n < 0$ is similar. By \eqref{e.energyvariation1}, we have
$$
T\left(n,0;z e^\delta\right)
=
T(n,0;z) - \sum_{\ell = 0}^{n-1} \left( 1 - e^\delta \right) T(n,\ell;z) P T\left(\ell,0;z e^\delta\right).
$$
Now apply \eqref{e.energyvariation1} to each $T\left(\ell,0;z e^\delta\right)$ and iterate this procedure until all matrices correspond to spectral parameter $z$. Then, we obtain the estimate
\begin{align*}
\left\| T\left(n,0;z e^\delta\right) \right\| & \le \sum_{k = 0}^n \binom{n}{k} L(N)^{k+1} (2|\delta|)^k \\
& = L(N) ( 1 + 2 L(N) |\delta| )^n \\
& \le L(N) \exp \big( 2 L(N) n |\delta| \big),
\end{align*}
where we have used $\|P\| = 1$ and that $|1 - e^\delta| \le 2 |\delta|$ for $\delta \in \overline{\D}$.
\end{proof}

With these tools, Theorem~\ref{t:dt:powerlaw} can be deduced along the lines of \cite{DT03}. As noted in \cite{DFLY2}, the Gesztesy-Zinchenko matrices are related to the Szeg\H{o} matrices in a simple fashion. One has
\begin{equation}\label{eq:szego:gz:rel}
z^{-1} S(\alpha,z) S(\beta,z)
=
D_z^{-1} P(\alpha,z) Q(\beta,z) D_z
\end{equation}
for all $\alpha,\beta \in \D$ and $z \in \C \setminus \{0\}$, where
\[
D_z
=
\begin{pmatrix}
z & 0 \\
0 & 1
\end{pmatrix}
\]
We warn the reader: this formula does not exactly match the corresponding identity in \cite{DFLY2}; this happens because \cite{DFLY2} follows the conventions of \cite{S1,S2} with regards to the locations of the matrix elements of $\E$, while we follow the conventions in \cite{GZ06}. Additionally, we take $v = \mathcal L^{-1} u$ here, while \cite{DFLY2} takes $v = \mathcal M u$. With this relationship in hand, we can easily relate norm estimates on the two cocycles.

\begin{proof}[Proof of Theorem~\ref{t:dt:powerlaw}]
Since $M = \sup_n |\alpha_n| < 1$, we see that
\begin{equation} \label{eq:rhobound}
\sup_{n \in \Z} \rho_n
\leq
(1 - M^2)^{-1/2}
<
\infty.
\end{equation}
Using this and \eqref{eq:szego:gz:rel}, we see that there is a constant $C_0 > 0$ (which depends only on $M = \|\alpha\|_\infty$) such that
$$
\left\| Z \left(n, m ; ze^{1/K} \right) \right\|
\leq
C_0 \left\| T\left(n, m; ze^{1/K} \right) \right\|
$$
for all $n, \, m \in \Z$, all $K \in \Z_+$, and every $z \in \partial \D$. Fix $K \in \Z_+$, and let $z \in B_K$ be given. By definition, there exists $z' \in A_{R_K}$ with $|z-z'| < 1/K$, and, by assumption, we have
$$
L
=
L(R_K,z')
:=
\sup_{|n|,|m| \leq R_K} \| T(n,m;z') \|
\leq
CR_K^\gamma
$$
By Lemma~\ref{l:energyvar}, we have the following for each $n \in \Z$ such that $1 \leq n \leq R_K$:
\begin{align*}
\left\| T\left(n, 0; ze^{1/K} \right) \right\|
& \leq
L \exp\left(2L \cdot n \cdot \frac{2}{K} \right) \\
& \leq
C R_K^\gamma \exp\left( \frac{4 C R_K^{\gamma+1}}{K} \right),
\end{align*}
where we have used the power-law bound \eqref{eq:szego:powerbounds} in the second line. Since $R_K^{\gamma + 1} = K$, this yields
$$
\left\| Z \left(n, 0; ze^{1/K}\right) \right\|
\leq
DR_K^{\gamma}
\text{ for all }  1 \le n \le R_K,
$$
where $D = C_0 C e^{4 C}$. By \eqref{eq:rhobound},  we may enlarge $D$ in a $(K,z)$-independent fashion to get
\begin{equation} \label{eq:szego:energyvarest1}
\left\| Z \left(n, j; ze^{1/K}\right) \right\|
\leq
DR_K^{\gamma}
\text{ for all }  1 \le n \le R_K, \, 0 \le j \le 3.
\end{equation}
Now, take $\psi = \delta_0$, and define a pair of $\ell^2$ sequences $u$ and $v$ by $ u := \left(\E - ze^{1/K}\right)^{-1} \psi$ and $v := \mathcal L^{-1} u$; put $\Phi(n) = (u(n),v(n))^\top$, as before. Throughout the remainder of the argument $C_1, \, C_2, \ldots$ will denote constants which depend solely on $\E$ (and neither on $K$ nor $z$). Using \eqref{eq:gzsoltransfer} and \eqref{eq:szego:energyvarest1}, we have
\begin{equation} \label{eq:gzlowerbound}
\|\Phi(n)\|^2
\geq
C_1 R_K^{-2\gamma} \sum_{j=-1}^2 \| \Phi(j) \|^2
\text{ for all $1 \leq n < R_K$}.
\end{equation}
Notice that this uses $\det(Z) = \pm 1$ to get $\|Z^{-1}\| = \|Z\|$ by the singular value decomposition. Since $(\E - ze^{1/K}) u = \psi$, we have
$$
  \overline{\alpha_0} \rho_{-1} u_{-1}
- \overline{\alpha_0} \alpha_{-1} u_0
+ \overline{\alpha_1} \rho_0 u_1
+ \rho_1 \rho_0 u_2
- ze^{1/K} u_0
=
\psi_0
=
1.
$$
Consequently, we have
$$
\sum_{j=-1}^2 \| \Phi(j) \|^2
\ge
C_2.
$$
Summing \eqref{eq:gzlowerbound} over $2R_K / 3 < n < R_K$, we obtain the lower bound
$$
\sum_{n \geq 2R_K/3} \|\Phi(n)\|^2
\geq
\sum_{R_K \geq n \geq 2R_K/3} \|\Phi(n)\|^2
\geq
C_3 R_K^{1-2\gamma}.
$$
Since $v = \mathcal L^{-1} u$, we may use the explicit form of $\mathcal L$ to see that
$$
|v(n)|^2
\leq
C_4 \sum_{j=-1}^1 |u(n+j)|^2
\text{ for all } n \in \Z.
$$
Using this together with the definition of $u$, we obtain the following lower bound:
\begin{equation} \label{eq:pout:lb}
\sum_{n\geq R_K/3} \left| \left\langle \delta_n, \left( \E - ze^{1/K} \right)^{-1} \psi \right\rangle \right|^2
=
\sum_{n \ge R_K/3} |u(n)|^2
\geq
C_5 R_K^{1-2\gamma}.
\end{equation}
By Lemma~\ref{l.parseval}, we have
\begin{align*}
\widetilde a_\psi(n,K)
& =
\frac{2}{K} \sum_{k \ge 0} e^{-2k/K} a_\psi(n,k) \\
& =
\frac{2}{K} e^{2/K} \int_0^{2\pi} \left| \left\langle \delta_n , \left( \E - e^{1/K +i\theta} \right)^{-1} \psi \right\rangle \right|^2 \, \frac{d\theta}{2\pi}.
\end{align*}
To complete the proof, denote $B_K' = \{ \theta \in [0,2\pi) : e^{i\theta} \in B_K \}$, and sum over $n \geq R_K/3$ to get
\begin{align*}
\sum_{n \ge R_K / 3} \widetilde a_\psi(n,K)
& \ge
\frac{2}{K} e^{2/K} \int_{B_K'} \sum_{n \ge R_K/3}  \left| \left\langle \delta_n , \left( \E - e^{1/K +i\theta} \right)^{-1} \psi \right\rangle \right|^2 \, \frac{d\theta}{2\pi} \\
& \ge
\frac{\widetilde C}{K} |B_K| R_K^{1-2\gamma} \\
& =
\widetilde C |B_K| R_K^{-3 \gamma},
\end{align*}
which proves \eqref{eq:poutbound}. In the calculation above, we have used \eqref{eq:pout:lb} to obtain the second line and $K = R_K^{1+\gamma}$ to get the third. Applying \eqref{eq:moments:pout} to \eqref{eq:poutbound} yields \eqref{eq:momentbound}.
\end{proof}

\section{Upper Bounds on Spreading}

In this section, we will prove the bounds on left and right time-averaged probabilities from Theorem~\ref{DT07.thm7}. The key ingredient is again Lemma~\ref{l.parseval}, which gives us the following relation between each time-averaged single-site probability and a corresponding average of a matrix element of the resolvent off of $\partial \D$:
\begin{equation} \label{eq:parseval}
\widetilde a_{\delta_{-1}}(n,K)
=
\frac{2}{K} e^{2/K} \int_0^{2\pi} \left| \left\langle \delta_n , \left( \E - e^{1/K +i\theta} \right)^{-1} \delta_{-1} \right\rangle \right|^2 \, \frac{d\theta}{2\pi}.
\end{equation}
Let's take $\varepsilon = 1/K$, and define $G(z) = (\E - z)^{-1}$ for $z \in \C \setminus \sigma(\E)$. Using \eqref{eq:parseval} and the inequality $e^2 < 3\pi$, we then have
\begin{align}
\label{Pr}
\widetilde P_r(N,K)
& <
3 \varepsilon \int_{0}^{2\pi} \! M_r\left(N, e^{\varepsilon + i\theta}\right) \, d\theta, \\
\label{Pl}
\widetilde P_l(N,K)
& <
3 \varepsilon \int_{0}^{2\pi} \! M_l\left(N, e^{\varepsilon + i\theta}\right) \, d\theta,
\end{align}
where we have defined
\begin{align}
\label{Mr}
M_r(N,z)
=
\sum_{n > N} \vert \left\langle \delta_n, G(z)\delta_{-1} \right\rangle\vert^2, \\
\label{Ml}
M_l(N,z)
=
\sum_{n < -N} \vert \left\langle \delta_n, G(z)\delta_{-1} \right\rangle\vert^2.
\end{align}
By symmetry, we will only prove the bounds on $\widetilde P_{r}$, since the argument which proves the bounds on $\widetilde P_{l}$ is nearly identical. The overall strategy is to approximate $\E$ by suitable sequences of ``truncations''  which converge strongly to $\E$. One can then prove appropriate bounds on the quantities  in \eqref{Mr} corresponding to the truncations (Lemmas~\ref{l:greendecay1} and \ref{l:greendecay2}). Next, one controls the error incurred in passing between $\E$ and the truncations (Lemma~\ref{l:MSbounds}). Finally, one puts all of this together to bound $M_r$ (and hence $\widetilde P_r$) in terms of the GZ cocycle (Lemma~\ref{DT07.Lemma3}). The remainder of the section makes this discussion precise.

\bigskip

Given $N \in \Z_+$, define four ``truncated'' CMV matrices $\mathcal E_{2N}$, $\widetilde{\E}_{2N}$, $\E_{4N}'$, and $\widetilde{\E}_{4N}'$ by choosing Verblunsky coefficients as follows:
\begin{align*}
\alpha_{2N}(n)
& =
\begin{cases} \alpha(n) &\mbox{when } n < 2N,\\
0 &\mbox{when }{n\geq 2N} \mbox{ and } n \mbox{ even},\\
-\frac{3}{4} &\mbox{when }{n> 2N} \mbox{ and } n \mbox{ odd},
\end{cases} \\
\widetilde{\alpha}_{2N}(n)
& =
\begin{cases} \alpha(n) &\mbox{when } n < 2N,\\
0 &\mbox{when }{n\geq 2N} \mbox{ and } n \mbox{ even},\\
\frac{3}{4} &\mbox{when }{n> 2N} \mbox{ and } n \mbox{ odd},
\end{cases} \\
\alpha_{4N}'(n)
& =
\begin{cases} \alpha(n) &\mbox{when } n < 4N,\\
0 &\mbox{when }{n\geq 4N} \mbox{ and } n \mbox{ even},\\
(-1)^{(n+1)/2} \frac{3}{4} &\mbox{when }{n> 4N} \mbox{ and } n \mbox{ odd},
\end{cases}\\
\widetilde{\alpha}_{4N}'(n)
& =
\begin{cases} \alpha(n) &\mbox{when } n < 4N,\\
0 &\mbox{when }{n\geq 4N} \mbox{ and } n \mbox{ even},\\
(-1)^{(n-1)/2}\frac{3}{4} &\mbox{when }{n> 4N} \mbox{ and } n \mbox{ odd}.
\end{cases}
\end{align*}
For convenience, let us also define
\begin{align*}
\varepsilon_0
& =
\log\left(\frac{4}{\sqrt 7} \right), \\
\mathcal R
& =
\left\{ z = e^{i\theta + \varepsilon} \in \C :
\theta \in [-\pi/4, \pi/4] \cup [3\pi/4,5\pi/4], \; 0 < \varepsilon < \varepsilon_0 \right\}, \\
\mathcal R'
& =
\left\{ z = e^{i\theta + \varepsilon} \in \C :
\theta \in [\pi/4, 3\pi/4] \cup [5\pi/4,7\pi/4], \; 0 < \varepsilon < \varepsilon_0 \right\}.
\end{align*}
Let $(p_+(z, n),r_+(z, n))^\top$ be a solution to the GZ recursion subject to the initial conditions $p_+(z, -1) = z, \, r_+(z, -1) = 1$. Similarly, let $(q_+(z, n), s_+(z, n))^\top$ be a solution to the GZ recursion with $q_+(z, -1) = z, \, s_+(z, -1) = -1$. More precisely,
\begin{equation}\label{pqrs.definition}
\begin{pmatrix}
p_+(z, n) & q_+(z, n) \\
r_+(z, n) & s_+(z, n)
\end{pmatrix}
=
Z(n+1,0;z)
\begin{pmatrix}
z & z \\ 1 & -1
\end{pmatrix},
\end{equation}
where $Z$ is as in \eqref{gz:def}. Since $\det(P(\alpha,z)) = \det(Q(\alpha,z)) = -1$, we easily compute the Wronskian of these solutions:
\begin{equation} \label{eq:gzwronsk}
W(z,n)
:=
p_+(z,n) s_+(z,n) - q_+(z,n) r_+(z,n)
=
2(-1)^n z.
\end{equation}
In particular, $|W| = 2|z|$ does not depend on $n$; moreover $W$ is uniformly bounded on $\mathcal R \cup \mathcal R'$.

Let $M_\pm(z)$ denote the half-lattice Weyl-Titchmarsh $m$-functions corresponding to our operator $\mathcal E$. Readers unfamiliar with the connection between these functions and the spectral theory of $\mathcal E$ should consult Appendix A of \cite{GZ06}.

The functions $M_\pm$ are defined precisely in \cite[(2.137)--(2.140)]{GZ06}, but here we will merely summarize the properties of $M_\pm$ most relevant to our proof. $M_+$ is a Carath\'eodory function, that is, an analytic function from $\mathbb D$ to the open right half plane. $M_-$ is an anti-Carath\'eodory function, that is, an analytic function from $\mathbb D$ to the open left half plane. One of the reasons that these functions are important with regard to understanding the spectral theory of $\mathcal E$ is that for $z\in\partial \D$ we have this useful Herglotz representation,
\begin{equation} \label{eq:mdef}
M_+(z)
=
\int_{\partial\D} \! \frac{\tau + z}{\tau - z} \, d\mu_+(\tau),
\end{equation}
where $\mu_+$ denotes the spectral measures of the restriction of $\E$ to the right half-line (see \cite[(2.123) and (2.137)]{GZ06}). There is also a similar formula for $M_-(z)$, but it is not relevant for our purposes. Using \cite[(2.57), (2.61), (2.62), (3.5), and (3.7)]{GZ06}, (splitting the CMV operator the way we do means setting $k_0 = -1$) we have
\begin{equation}\label{GZ.formula}
\left\langle (\mathcal E - z)^{-1}\delta_{-1}, \delta_n \right\rangle
=
d(z)s_+(z, n) + b(z)r_+(z, n),
\end{equation}
for $n > -1$, where
\begin{equation} \label{eq:bd:def}
d(z)
=
-\frac{1 + M_-(z)}{2z(M_+(z) - M_-(z))},
\quad
b(z)
=
-\frac{(1 + M_-(z)) M_+(z)}{2 z (M_+(z) - M_-(z))}.
\end{equation}
Let $\varepsilon > 0$ and $z \in \C$ with $\log|z| = \varepsilon$ be given. From \eqref{pqrs.definition} and \eqref{GZ.formula}, we obtain
\begin{equation} \label{eq:bddiffbound}
|b(z) - d(z)|
=
|\langle \delta_{-1}, (\E-z)^{-1} \delta_{-1} \rangle|
\le
\frac{1}{e^\varepsilon - 1}
\le
\varepsilon^{-1}.
\end{equation}
The penultimate inequality is a standard consequence of the spectral theorem, while the final inequality follows from $\varepsilon > 0$. Notice that $\mathrm{Re}(M_+(z)) < 0$. Indeed, it is easy to see that
$$
\mathrm{Re}\left( \frac{\tau + z}{\tau - z} \right)
=
\frac{1 - |z|^2}{|\tau - z|^2}
<
0
$$
for every $\tau$ with $|\tau| = 1$, so, by \eqref{eq:mdef}, we have $\mathrm{Re}(M_+(z)) < 0$. Consequently,
\begin{equation} \label{eq:bdargbound}
\mathrm{arg}\left(\frac{b(z)}{d(z)} \right)
=
\mathrm{arg}(M_+(z))
\in
\left( \frac{\pi}{2}, \frac{3\pi}{2} \right).
\end{equation}
Combining \eqref{eq:bddiffbound} and \eqref{eq:bdargbound}, we obtain
\begin{equation}\label{bd.bound}
\vert b(z) \vert \le \varepsilon^{-1}
\text{ and }
\vert d(z) \vert \le \varepsilon^{-1}.
\end{equation}
Notice that the bound in \eqref{bd.bound} is a purely deterministic result. It holds for any choice of $\alpha$ and then for any $z \in \C$ with $\log|z| = \varepsilon$.

\bigskip

Let us consider the GZ transfer matrices $Y_{2N}(n, z)$ corresponding to the truncated Verblunsky coefficients $\alpha_{2N}$. Note that for $n \geq N$, we have
$$
Y_{2N}(2n+1, z) Y_{2N}(2n, z)
=
\frac{1}{\sqrt{1 - \left( \frac{3}{4} \right)^2}}
\begin{pmatrix}
z & \frac{3}{4}\\
\frac{3}{4} & z^{-1}
\end{pmatrix}.
$$
This matrix has determinant $1$. We can calculate that the eigenvalues are
\begin{equation}\label{lambda.eigenvalues}
\lambda_\pm
=
\lambda_\pm(z)
:=
\frac{2(z^2 + 1) \pm \sqrt{4z^4 + z^2 + 4}}{\sqrt{7} z}.
\end{equation}
We will wait until after Lemma~\ref{l:evnotoncircle} to choose a branch of the square root  $\sqrt{4z^4 + z^2 + 4}$. Notice that $\lambda_\pm$ in our notation means something different than $\lambda_\pm$ in \cite{DT07}. After a short calculation, we see that the eigenvectors $e^\pm$ of $Y_{2N}(2n + 1, z) Y_{2N}(2n, z)$ are given by
\begin{equation}\label{lambda.eigenvectors}
e^\pm
=
e^\pm(z)
=
\begin{pmatrix}
e^\pm_1(z)\\
e^\pm_2(z)
\end{pmatrix}
=
\frac{1}{3}
\begin{pmatrix}
\sqrt{7} \lambda_\pm(z) - 4z^{-1} \\
3
\end{pmatrix}
\end{equation}
for all $n \geq N$.

\begin{lemma} \label{l:evnotoncircle}
For all $z \in \mathcal R$, one of $\lambda_\pm(z)$ has modulus strictly greater than $1$, and the other has modulus strictly less than $1$. In fact, $\lambda_-$ is bounded away from $\partial \D$ on $\mathcal R$, i.e., there exists a constant $c_0 < 1$ with $|\lambda_-(z)| < c_0$ for all $z \in \mathcal R$.
\end{lemma}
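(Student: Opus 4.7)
The strategy is to exploit two structural facts about the matrix whose eigenvalues are $\lambda_\pm$: it has determinant $1$, and the discriminant of its characteristic polynomial does not vanish on the closed region $\overline{\mathcal{R}}$. From $\lambda_+(z)\lambda_-(z)=1$, neither eigenvalue can lie on $\partial\D$ in isolation: if $|\lambda_+|=1$ then $\lambda_-=\lambda_+^{-1}=\overline{\lambda_+}$ also lies on $\partial\D$. Hence both eigenvalues are on $\partial\D$ if and only if the trace $t(z):=\frac{4}{\sqrt{7}}(z+z^{-1})$ lies in the real interval $[-2,2]$, and the entire claim reduces to ruling this out for every $z\in\overline{\mathcal{R}}$.

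I would do this by direct computation. Writing $z=e^{\varepsilon+i\theta}$ with $\varepsilon\in[0,\varepsilon_0]$ and $\theta\in[-\pi/4,\pi/4]\cup[3\pi/4,5\pi/4]$, one has
$$
t(z)=\tfrac{8}{\sqrt{7}}\cosh(\varepsilon)\cos(\theta)+i\tfrac{8}{\sqrt{7}}\sinh(\varepsilon)\sin(\theta).
$$
When $\theta=0$ (which only occurs in the first arc), $t(z)=\tfrac{8}{\sqrt{7}}\cosh(\varepsilon)\ge 8/\sqrt{7}>2$; when $\theta=\pi$, $t(z)\le -8/\sqrt{7}<-2$. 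When $\varepsilon=0$ and $\theta\in\{\pm\pi/4,3\pi/4,5\pi/4\}$, direct evaluation gives $|t(z)|=4\sqrt{2/7}>2$. In every remaining case $\sinh(\varepsilon)\sin(\theta)\ne 0$, so $t(z)$ has nonzero imaginary part and cannot be in $[-2,2]$. Combined with $\lambda_+\lambda_-=1$, this shows that for each $z\in\overline{\mathcal{R}}$ exactly one of $\lambda_\pm(z)$ has modulus greater than $1$ and the other less than $1$.

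For the uniform bound, I would invoke compactness. The discriminant $4z^4+z^2+4$ has roots satisfying $|z|=1$ and $\mathrm{Re}(z^2)=-1/8$, placing them at arguments $\pm\arccos(\sqrt{7}/4)$ and $\pi\pm\arccos(\sqrt{7}/4)$, which lie strictly outside the angular sectors defining $\overline{\mathcal{R}}$ since $\arccos(\sqrt{7}/4)>\pi/4$. Hence the discriminant is nonzero on the compact set $\overline{\mathcal{R}}$, the two eigenvalues are distinct everywhere, and we can continuously select $\lambda_-(z)$ as the one of smaller modulus. This produces a continuous function on $\overline{\mathcal{R}}$ whose modulus is pointwise strictly less than $1$, hence bounded above by some $c_0<1$.

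The only mildly delicate point is the boundary check at $\varepsilon=0$: the angular endpoints $\pm\pi/4$, $3\pi/4$, and $5\pi/4$ are only barely inside the region where $|t|>2$ on $\partial\D$, since this holds iff $|\cos\theta|>\sqrt{7}/4$ and $\cos(\pi/4)=\sqrt{2}/2$ exceeds $\sqrt{7}/4$ only slightly. This, together with the choice $\varepsilon_0=\log(4/\sqrt{7})$, is precisely what makes $\mathcal{R}$ tight enough for the trace to stay clear of $[-2,2]$ throughout its closure; the rest of the argument is just determinant invariance plus compactness.
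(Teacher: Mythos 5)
Your proof is correct in substance and takes a genuinely different, more elementary route than the paper. The paper's proof appeals to Floquet theory: it identifies the spectrum of the $2$-periodic operator $\E_{-\infty}$ with alternating Verblunsky coefficients $\dots,0,-3/4,0,-3/4,\dots$ as $\Sigma = \{z\in\partial\D : |\Im z|\ge 3/4\}$, then invokes the characterization from \cite{DFLY2} that $z$ is a generalized eigenvalue of $\E_{-\infty}$ if and only if $|\lambda_+(z)|=|\lambda_-(z)|=1$; since $\mathcal R\cap\Sigma=\emptyset$ and $\lambda_\pm$ depend continuously on $z$ away from the origin, both claims follow. You bypass that apparatus entirely: using $\lambda_+\lambda_-=\det=1$ you reduce the claim to the concrete assertion that the trace $t(z)=\frac{4}{\sqrt 7}(z+z^{-1})$ avoids the interval $[-2,2]$ on $\overline{\mathcal R}$, verify this by a direct real/imaginary-part computation, and obtain the uniform bound by compactness plus the nonvanishing of the discriminant $4z^4+z^2+4$ on $\overline{\mathcal R}$. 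What the paper's route buys is that one recognizes $\mathcal R$ as a neighborhood of the complement of a spectrum, which explains \emph{why} the region was carved out this way; what yours buys is a self-contained, fully explicit argument with no external citation.

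There is one small bookkeeping slip. You claim that, after treating $\theta\in\{0,\pi\}$ and the four corner points at $\varepsilon=0$, every remaining point of $\overline{\mathcal R}$ satisfies $\sinh(\varepsilon)\sin(\theta)\ne 0$. That fails on the inner boundary arcs $\varepsilon=0$, $\theta\in(-\pi/4,\pi/4)\setminus\{0\}$ and $\theta\in(3\pi/4,5\pi/4)\setminus\{\pi\}$, where $\sinh\varepsilon=0$. The fix is immediate and is already implicit in your corner computation: on $\varepsilon=0$ one has $t=\frac{8}{\sqrt 7}\cos\theta$, and $|\cos\theta|\ge \frac{\sqrt 2}{2}>\frac{\sqrt 7}{4}$ throughout the angular sectors (since $\sqrt 8>\sqrt 7$), so $|t|\ge 4\sqrt{2/7}>2$ on the entire inner boundary, not only at the four corner points. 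With that amendment the case analysis is airtight.
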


\begin{proof}
Consider $\E_{-\infty}$, i.e., the 2-periodic extended CMV matrix with alternating Verblunsky coefficients $\ldots, 0, -3/4, 0, -3/4, \ldots$. Using standard calculations from Floquet theory, one can verify that
$$
\Sigma
:=
\sigma(\E_{-\infty})
=
\left\{ z \in \partial \D : |\mathrm{Im}(z)| \ge \frac{3}{4} \right\}.
$$
On the other hand, it is easy to see that $z$ is a generalized eigenvalue of $\E_{-\infty}$ if and only if $|\lambda_+(z)| = |\lambda_-(z)| = 1$ (see \cite[Section~3]{DFLY2} for definitions and details). Thus, since $\mathcal R \cap \Sigma = \emptyset$, the first statement of the lemma follows. To verify the second statement of the lemma, simply note that $\lambda_+$ and $\lambda_-$ are continuous functions of the spectral parameter away from $z = 0$ and that $\mathcal R$ is bounded away from $\Sigma$.
\end{proof}

For $N \in \Z_+$ and $|z| > 1$, define $G_{2N}(z) = (\E_{2N} - z)^{-1}$.

\begin{lemma} \label{l:greendecay1}
For $n\geq 2N\geq 4$ and $z = e^{i\theta + \varepsilon} \in \mathcal R$,
\begin{align}
\label{eq:greendecay1}
\left\lvert \left\langle \delta_n, G_{2N}(z) \delta_{-1} \right\rangle \right\rvert
& \lesssim
\varepsilon^{-1}
\frac{|z| \vert \lambda_-(z) \vert^{\frac{n}{2} - N}}{\left\vert e_1^-(z) p_+(z, 2N-1)-r_+(z, 2N-1) \right\vert},\\
\label{eq:greendecay2}
\left\lvert \left\langle \delta_n,  G_{2N}(z) \delta_{-1} \right\rangle \right\rvert
& \lesssim
\varepsilon^{-1}
\frac{|z| \vert \lambda_-(z)  \vert^{\frac{n}{2} - N}}{\left\vert e_1^-(z) q_+(z, 2N-1)-s_+(z, 2N-1) \right\vert}.
\end{align}
\end{lemma}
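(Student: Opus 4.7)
The plan is to exploit the $2$-periodic structure of the Verblunsky coefficients of $\mathcal{E}_{2N}$ on the tail $n\ge 2N$ by combining the Gesztesy--Zinchenko resolvent formula \eqref{GZ.formula} with the decay condition forced on $G_{2N}(z)\delta_{-1}$ by its membership in $\ell^2(\Z)$.

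First I would apply \eqref{GZ.formula} to $\mathcal{E}_{2N}$ in place of $\mathcal{E}$, obtaining
$$
\langle G_{2N}(z)\delta_{-1},\delta_n\rangle = d_{2N}(z)\,s_+^{(2N)}(z,n) + b_{2N}(z)\,r_+^{(2N)}(z,n),
$$
with $d_{2N}, b_{2N}$ defined from the half-line $m$-functions of $\mathcal{E}_{2N}$ as in \eqref{eq:bd:def}. Since the Verblunsky coefficients of $\mathcal{E}_{2N}$ and $\mathcal{E}$ coincide for indices $<2N$, the corresponding GZ-solutions $p_+^{(2N)}, q_+^{(2N)}, r_+^{(2N)}, s_+^{(2N)}$ agree with $p_+, q_+, r_+, s_+$ for $n\le 2N-1$. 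The derivation of \eqref{bd.bound} uses only the spectral theorem for a unitary operator, so it carries over verbatim and gives $|d_{2N}(z)|, |b_{2N}(z)|\le\varepsilon^{-1}$.

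Next I would exploit the $\ell^2$ condition at $+\infty$: for $m\ge N$, the $2$-step transfer matrix $Y_{2N}(2m+1,z)Y_{2N}(2m,z)$ equals the constant matrix $\tfrac{4}{\sqrt 7}\bigl(\begin{smallmatrix}z & 3/4\\ 3/4 & z^{-1}\end{smallmatrix}\bigr)$, with eigenvalues $\lambda_\pm(z)$ and eigenvectors $e^\pm(z)$ given in \eqref{lambda.eigenvalues}--\eqref{lambda.eigenvectors}. Form the GZ-vector
$$
\Phi(n) := d_{2N}(z)\binom{q_+^{(2N)}(z,n)}{s_+^{(2N)}(z,n)} + b_{2N}(z)\binom{p_+^{(2N)}(z,n)}{r_+^{(2N)}(z,n)},
$$
whose lower component coincides (via \eqref{GZ.formula}) with $\langle G_{2N}(z)\delta_{-1},\delta_n\rangle$. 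By Lemma~\ref{l:evnotoncircle}, $|\lambda_+(z)|>1>|\lambda_-(z)|$ on $\mathcal{R}$, so $\ell^2$-ness of $G_{2N}(z)\delta_{-1}$ at $+\infty$ forces the $e^+$-coefficient of $\Phi(2N-1)$ in the eigenbasis decomposition to vanish. This yields a single linear constraint on $d_{2N}, b_{2N}$ whose coefficients are built from the boundary values $p_+(z,2N-1)$, $q_+(z,2N-1)$, $r_+(z,2N-1)$, $s_+(z,2N-1)$ and $e_1^-(z)$. Using the Wronskian identity \eqref{eq:gzwronsk}, which evaluates to $-2z$ at $n=2N-1$, and solving for the $e^-$-coefficient $C$ of $\Phi(2N-1)$ in terms of either $d_{2N}$ or $b_{2N}$ produces
\begin{align*}
|C| & \lesssim \frac{|z|\,\varepsilon^{-1}}{|e_1^-(z)p_+(z,2N-1)-r_+(z,2N-1)|}, \\
|C| & \lesssim \frac{|z|\,\varepsilon^{-1}}{|e_1^-(z)q_+(z,2N-1)-s_+(z,2N-1)|}.
\end{align*}

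Finally I would propagate via $\Phi(2m-1) = C\lambda_-(z)^{m-N}e^-(z)$ for $m\ge N$, with the values at even indices produced by $\Phi(2m) = Q(0,z)\Phi(2m-1)$. The lower component of $\Phi(n)$ is then bounded by $|C|\max(1,|e_1^-(z)|)|\lambda_-(z)|^{n/2-N}$ for $n\ge 2N$, using $|\lambda_-|<1$ to replace the actual integer exponents uniformly by $n/2-N$, and boundedness of $|e_1^-|$ on $\mathcal{R}$ (visible from \eqref{lambda.eigenvectors} together with Lemma~\ref{l:evnotoncircle} and the fact that $|z|$ is bounded above and below on $\mathcal{R}$) to absorb that factor into the implicit constant. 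The main obstacle will be the careful execution of the $\ell^2$-projection inside the GZ formalism---in particular, correctly tracking which component of $\Phi$ represents the Green's function and using the Wronskian cleanly to convert the $e^+$-vanishing condition into the exact denominators appearing in \eqref{eq:greendecay1}--\eqref{eq:greendecay2}.
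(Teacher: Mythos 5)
Your overall strategy matches the paper's: apply the GZ resolvent formula for the truncated operator, use the $\ell^2$ decay at $+\infty$ to kill the growing eigenmode, and invert via the Wronskian. However, you have flagged the key difficulty yourself---``correctly tracking which component of $\Phi$ represents the Green's function''---and indeed that is where your proposal diverges from the paper and from what you need.

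You form $\Phi(n)=d_{2N}(q_+^{(2N)}(n),s_+^{(2N)}(n))^\top+b_{2N}(p_+^{(2N)}(n),r_+^{(2N)}(n))^\top$, so the Green's function $d_{2N}s_+^{(2N)}+b_{2N}r_+^{(2N)}$ sits in the \emph{bottom} slot. If you then impose $\Phi(2N-1)=C\,e^-$ with $e^-=(e_1^-,1)^\top$, the two scalar equations are $d q_++b p_+=Ce_1^-$ and $d s_++b r_+=C$ (all at $n=2N-1$), and eliminating one of $d,b$ via the Wronskian $W=p_+s_+-q_+r_+$ gives
$$
C=\frac{d_{2N}\,W}{p_+(2N-1)-e_1^-(z)\,r_+(2N-1)},\qquad
C=\frac{b_{2N}\,W}{e_1^-(z)\,s_+(2N-1)-q_+(2N-1)},
$$
\emph{not} the denominators $e_1^-p_+-r_+$ and $e_1^-q_+-s_+$ that appear in \eqref{eq:greendecay1}--\eqref{eq:greendecay2} and that you wrote in your conclusion. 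Since $e_1^+(z)e_1^-(z)=-1$ while $(e_1^-)^2\neq 1$ generically on $\mathcal R$, these are genuinely different expressions, and your stated bounds do not follow from the computation your setup produces. The paper proceeds differently: it takes $\Phi=(u,v)^\top$ with $u=G_{2N}(z)\delta_{-1}$ in the \emph{top} slot and $v=\mathcal L^{-1}u$, obtains the top equation $d s_++b r_+=C_-e_1^-$ directly from \eqref{GZ.formula} at $n=2N-1$, and obtains the bottom equation $d q_++b p_+=C_-e_2^-=C_-$ by exploiting the swap $Y_{2N}(2N,z)=\bigl(\begin{smallmatrix}0&1\\1&0\end{smallmatrix}\bigr)$ (forced by $\alpha_{2N}(2N)=0$) to write $v(2N-1)=u(2N)$ and then applying \eqref{GZ.formula} at $n=2N$. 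This swap step is the ingredient your proposal is missing, and it is what produces the specific denominators of the lemma. (As a practical matter, since $|e_1^-|$ is bounded above and away from zero on $\mathcal R$, either pair of denominators would ultimately suffice for Lemma~\ref{DT07.Lemma3}; but as a proof of the statement as written, you need to adopt the paper's placement of the Green's function in the top component and use the $\alpha_{2N}(2N)=0$ swap identity.)
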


\begin{proof}
Fix $N \in \Z_+$ and consider a GZ recurrence corresponding to $\alpha_{2N}$.  Let $u = G_{2N}(z) \delta_{-1} =  (\mathcal E_{2N} - z)^{-1} \delta_{-1}$; then $u \in \ell^2(\Z)$ and $u$ satisfies $\mathcal E_{2N} u = zu + \delta_{-1}$; in particular, $u$ satisfies $\E_{2N} u = zu$ away from site $-1$. Let $v = \mathcal L^{-1} u$ and $\Phi = (u,v)^\top$, as usual (notice that $v \in \ell^2$ also). We can write
\begin{equation} \label{eq:cutoffgz:ic}
\Phi(2N - 1)
=
C_+ e^+ + C_- e^-,
\end{equation}
for some coefficients $C_\pm = C_\pm(z, 2N - 1) \in \C$, where $e^\pm$ are defined in \eqref{lambda.eigenvectors}.   Using \eqref{eq:gzsoltransfer}, and $2N \ge 4$, we have
\begin{equation} \label{eq:cutoffgz:genterm}
\Phi(2m - 1)
=
C_+ \lambda_+^{m-N} e^+ + C_- \lambda_-^{m-N} e^-
\text{ for all } m \geq N.
\end{equation}
Note that \eqref{eq:cutoffgz:genterm} doesn't hold for $\Phi(2m)$, since $e^\pm$ are the eigenvalues of the two-step matrices $Y_{2N}(2n+1,z) Y_{2N}(2n,z)$. However, using
$$
Y_{2N}(2m,z)
=
\begin{pmatrix} 0 & 1 \\ 1 & 0
\end{pmatrix}
\text{ for } m \ge N,
$$
we must have $u(2m) = v(2m-1)$ and $v(2m) = u(2m-1)$ whenever $m \ge N$.

Consequently, since the entries of $\Phi$ are in $\ell^2$ we must have $C_+ = 0$. We thus have
$$
\left\langle \delta_{2N-1}, G_{2N}(z) \delta_{-1} \right\rangle
=
u(2N - 1)
=
C_-(z, 2N - 1) e_1^-(z),
$$
and
$$
\left\langle \delta_{2N}, G_{2N}(z) \delta_{-1} \right\rangle
=
u(2N)
=
v(2N - 1)
=
C_-(z, 2N - 1) e_2^-(z).
$$
Using \eqref{GZ.formula}, we have
\begin{align}
d_{2N}(z) s_+(z, 2N - 1) + b_{2N}(z) r_+(z, 2N-1)
& =
C_-(z,2N-1) e_1^-(z),
\label{Resolvent.eq1}\\
d_{2N}(z) q_+(z,2N-1) + b_{2N}(z) p_+(z,2N-1)
& =
C_-(z,2N-1) e^-_2(z).
\label{Resolvent.eq2.mod}
\end{align}
where $b_{2N}$ and $d_{2N}$ are defined by \eqref{eq:bd:def} with $\E$ replaced by $\E_{2N}$. More specifically, notice that $Z(n+1,0;z)$ only depends on $\alpha_0, \ldots, \alpha_n$. In particular, the $p_+,q_+,r_+,s_+$-functions associated to $\E$ and $\E_{2N}$ agree for $-1 \le n \le 2N-1$, which suffices to prove \eqref{Resolvent.eq1} by applying \eqref{GZ.formula}. For \eqref{Resolvent.eq2.mod}, notice that $\alpha_{2N}(2N) = 0$ implies
\begin{align*}
s_+^{2N}(z,2N)
& =
q_+^{2N}(z,2N-1)
=
q_+(z,2N-1) \\
r_+^{2N}(z,2N)
& =
p_+^{2N}(z,2N-1)
=
p_+(z,2N-1),
\end{align*}
where the superscript $2N$'s denote quantities associated to $\E_{2N}$. Recall \eqref{eq:gzwronsk}: the Wronskian is given by
$$
W(z,n)
:=
p_+(z,n) s_+(z,n) - q_+(z,n) r_+(z,n)
=
2 (-1)^n z.
$$
Solving \eqref{Resolvent.eq1} and \eqref{Resolvent.eq2.mod} for $C_-$, we get
\begin{align}
\label{Cminus-pr}
C_-
& =
\frac{Wd_{2N}}{e_1^- p_+(2N-1) - e_2^- r_+(2N-1)} \\
\label{Cminus-qs}
C_-
& =
\frac{Wb_{2N}}{- e_1^- q_+(2N-1) + e_2^- s_+(2N-1)}.
\end{align}
In light of \eqref{eq:cutoffgz:genterm}, this yields
\begin{align*}
\left\langle \delta_{2m-1}, G_{2N}(z) \delta_{-1} \right\rangle
& =
u(2m - 1)
=
\frac{Wd_{2N} \lambda_-^{m-N} e_1^-}{e_1^- p_+(2N-1) - e_2^- r_+(2N-1)} \\
\left\langle \delta_{2m-1}, G_{2N}(z) \delta_{-1} \right\rangle
& =
u(2m - 1)
=
\frac{Wb_{2N} \lambda_-^{m-N} e_1^-}{-e_1^- q_+(2N-1) + e_2^- s_+(2N-1)}
\end{align*}
The statement of the lemma for $n \geq 2N$ odd then follows immediately from \eqref{eq:gzwronsk},  \eqref{bd.bound}, and \eqref{lambda.eigenvectors}. The statement for even $n \geq 2N$ follows from
$$
\langle \delta_{2m}, G_{2N}(z) \delta_{-1} \rangle
=
u(2m)
=
v(2m-1)
\text{ for all } m \geq N.
$$

\end{proof}

Using a completely analogous calculation, we obtain the following lemma for the Green's function of $\widetilde{\E}$.

\begin{lemma} \label{l:greendecay2}
 Define $\widetilde G_{2N}(z) = \left(\widetilde \E_{2N} - z \right)^{-1}$ for $|z| > 1$. Then, for $z = e^{i\theta + \varepsilon} \in \mathcal R$ and $n \geq 2N \geq 4$,
\begin{align}
\left\lvert \left\langle \delta_n,  \widetilde G_{2N}(z) \delta_{-1} \right\rangle \right\rvert
\lesssim
\varepsilon^{-1} \frac{\vert z\vert \vert \lambda_-(z) \vert^{\frac{n}{2} - N}}{\left\vert e_1^-(z) p_+(2N-1,z) + r_+(2N-1,z) \right\vert}\\
\left\lvert \left\langle \delta_n, \widetilde G_{2N}(z)\delta_{-1} \right\rangle \right\rvert
\lesssim
\varepsilon^{-1} \frac{\vert z\vert \vert \lambda_-(z)\vert^{\frac{n}{2} - N}}{\left\vert e_1^-(z) q_+(2N-1,z) + s_+(2N-1,z) \right\vert}
\end{align}
\end{lemma}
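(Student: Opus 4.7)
The plan is to follow the proof of Lemma~\ref{l:greendecay1} nearly verbatim, tracking the single change that arises from flipping the sign of the tail odd Verblunsky coefficients from $-3/4$ to $+3/4$. First I would compute the relevant two-step transfer matrix for $\widetilde{\E}_{2N}$: using \eqref{gz:onestepmats:def} one obtains
\[
P(3/4,z)\,Q(0,z) = \frac{4}{\sqrt 7}\begin{pmatrix} z & -3/4 \\ -3/4 & z^{-1} \end{pmatrix},
\]
which differs from the matrix appearing in the proof of Lemma~\ref{l:greendecay1} only by signs on the off-diagonal. A direct calculation shows that its eigenvalues are unchanged---they are still $\lambda_\pm(z)$ from \eqref{lambda.eigenvalues}---so Lemma~\ref{l:evnotoncircle} again gives $|\lambda_-(z)| < c_0 < 1$ on $\mathcal R$. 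Solving for the eigenvectors with second component normalized to $1$ yields $\widetilde e^\pm(z) = (-e_1^\pm(z),\,1)^\top$; i.e., only the first entry picks up a sign relative to \eqref{lambda.eigenvectors}.

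Next I would rerun the derivation of Lemma~\ref{l:greendecay1}. Setting $u = (\widetilde{\E}_{2N}-z)^{-1}\delta_{-1}$ and $\Phi = (u,\mathcal L^{-1}u)^\top$, the $\ell^2$ argument again kills the $\widetilde e^+$ component, so $\Phi(2m-1) = \widetilde C_-\,\lambda_-^{m-N}\widetilde e^-$ for all $m \ge N$. Evaluating at $m = N$ gives $u(2N-1) = -\widetilde C_-\,e_1^-$, and $Y(2m,z) = \begin{pmatrix} 0 & 1 \\ 1 & 0 \end{pmatrix}$ for $m \ge N$ gives $u(2N) = v(2N-1) = \widetilde C_-$. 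Since $p_+,q_+,r_+,s_+$ at argument $2N-1$ depend only on $\alpha_0,\ldots,\alpha_{2N-1}$, and $\widetilde\alpha_{2N}(2N) = 0$ exactly as in the $\E_{2N}$ case, applying \eqref{GZ.formula} at $n = 2N-1$ and $n = 2N$ yields the following analogues of \eqref{Resolvent.eq1}--\eqref{Resolvent.eq2.mod}:
\begin{align*}
\widetilde d(z)\,s_+(z,2N-1) + \widetilde b(z)\,r_+(z,2N-1) &= -\widetilde C_-(z,2N-1)\,e_1^-(z), \\
\widetilde d(z)\,q_+(z,2N-1) + \widetilde b(z)\,p_+(z,2N-1) &= \widetilde C_-(z,2N-1).
\end{align*}
Cramer's rule together with the Wronskian identity \eqref{eq:gzwronsk} then gives
\[
|\widetilde C_-| = \frac{|W|\,|\widetilde d|}{|e_1^- p_+(2N-1) + r_+(2N-1)|} = \frac{|W|\,|\widetilde b|}{|e_1^- q_+(2N-1) + s_+(2N-1)|},
\]
and the $+$ signs in the denominators are exactly what distinguishes the statement here from that of Lemma~\ref{l:greendecay1}.

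The bounds $|\widetilde b|,|\widetilde d| \le \varepsilon^{-1}$ on $\mathcal R$ carry over verbatim from \eqref{bd.bound}, because that derivation relies only on the spectral-theorem estimate for the diagonal resolvent entry and the Herglotz argument bound on $M_+$, neither of which is affected by which tail truncation is used. Propagating $\Phi(2m-1) = \widetilde C_-\,\lambda_-^{m-N}\widetilde e^-$ to all $m \ge N$ and again using parity via $u(2m) = v(2m-1)$ to handle even indices then produces the stated bounds on $|\langle\delta_n,\widetilde G_{2N}(z)\delta_{-1}\rangle|$ for all $n \ge 2N$. There is no real obstacle here---the argument is essentially a transcription of Lemma~\ref{l:greendecay1}---and the only item that needs care is tracking the sign flip $\widetilde e_1^\pm = -e_1^\pm$ through Cramer's rule, which is precisely what converts the minus signs in the denominators of Lemma~\ref{l:greendecay1} into the plus signs stated here.
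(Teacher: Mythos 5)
Your proposal is correct and follows exactly the route the paper intends: the paper itself states only that Lemma~\ref{l:greendecay2} follows from ``a completely analogous calculation'' to Lemma~\ref{l:greendecay1}, and you have carried out that calculation faithfully, correctly identifying that the only change is the sign flip $\widetilde e_1^{\pm}=-e_1^{\pm}$ in the tail eigenvectors, which propagates through Cramer's rule to replace the minus signs in the denominators of \eqref{eq:greendecay1}--\eqref{eq:greendecay2} with the plus signs stated here. The bookkeeping (the eigenvalues $\lambda_\pm$ and hence Lemma~\ref{l:evnotoncircle} are unchanged since $\det$ and $\tr$ of the two-step matrix are unaffected by the off-diagonal sign flip; $p_+,q_+,r_+,s_+$ at $2N-1$ are shared because they depend only on $\alpha_0,\ldots,\alpha_{2N-1}$; the bound \eqref{bd.bound} is deterministic) is all accurate.
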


Given $N \in \Z_+$, let us denote $\underline{N} = 2\lfloor N/2\rfloor$. For $z \notin \partial \D$, define
\begin{align*}
S(N,z)
& =
\sum_{n = N + 1}^\infty
\Big\lvert \Big\langle \delta_n, G_{\underline{N}}(z) \delta_{-1} \Big\rangle \Big\rvert^{2},\\
\widetilde S(N,z)
& =
\sum_{n=N+1}^\infty
\left\lvert \left\langle \delta_n, \widetilde G_{\underline{N}}(z) \delta_{-1} \right> \right\rvert^2.
\end{align*}

\begin{lemma} \label{l:MSbounds}
For all $z = e^{i\theta + \varepsilon} \in \mathcal R$, we have
\begin{align}
\varepsilon^2S(N,z)\lesssim M_r(N,z)
& \lesssim
\varepsilon^{-2}S(N, z),\\
\varepsilon^2\widetilde S(N,z)\lesssim M_r(N,z)
& \lesssim
\varepsilon^{-2}\widetilde S (N, z).
\end{align}
\end{lemma}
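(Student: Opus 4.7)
The plan is to compare the two Green's function tails by exploiting that $\E$ and $\E_{\underline N}$ share Verblunsky coefficients for all indices below $\underline N$; this makes their left $m$-functions agree and reduces the problem to controlling the right $m$-functions $M_+$ and $M_+^{\underline N}$ together with the two right Weyl solutions.

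First, I would write $u(n) := \langle \delta_n, G(z) \delta_{-1}\rangle$ and $u_{\underline N}(n) := \langle \delta_n, G_{\underline N}(z) \delta_{-1}\rangle$, and invoke \eqref{GZ.formula} and \eqref{eq:bd:def} to express, for $n \ge -1$,
\begin{align*}
u(n) &= d(z)\,\bigl[s_+(z,n) + M_+(z) r_+(z,n)\bigr], \\
u_{\underline N}(n) &= d_{\underline N}(z)\,\bigl[s_+^{\underline N}(z,n) + M_+^{\underline N}(z) r_+^{\underline N}(z,n)\bigr],
\end{align*}
where $M_-$ appearing in $d$ and $d_{\underline N}$ is the same because $\alpha$ and $\alpha_{\underline N}$ agree on negative indices; moreover $r_+ = r_+^{\underline N}$ and $s_+ = s_+^{\underline N}$ hold on $-1 \le n \le \underline N - 1$. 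The bracketed expressions are, up to normalization, the unique right $\ell^2$ Weyl solutions $\psi_+, \psi_+^{\underline N}$ of the respective GZ recursions.

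Second, the sums decouple as
\[
M_r(N,z) = |d(z)|^2 \sum_{n > N} |\psi_+(z,n)|^2, \qquad S(N,z) = |d_{\underline N}(z)|^2 \sum_{n > N} |\psi_+^{\underline N}(z,n)|^2,
\]
so the comparison reduces to (i) bounding $|d/d_{\underline N}|^2$ and its reciprocal, and (ii) checking that the two tail sums are of comparable size, uniformly in $N$ and in $z \in \mathcal R$. For (i), the deterministic estimate \eqref{bd.bound} gives $|d|, |d_{\underline N}| \le \varepsilon^{-1}$, while lower bounds of order $\varepsilon$ on both follow by combining \eqref{eq:bd:def} with the Herglotz estimate $|\Re M_\pm(z)|, |\Re M_+^{\underline N}(z)| \asymp \varepsilon$ coming from $|z| = e^{\varepsilon}$ (via the representation \eqref{eq:mdef}); together these give $|d/d_{\underline N}|^2 \in [c\varepsilon^2, C\varepsilon^{-2}]$. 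For (ii), both $\psi_+$ and $\psi_+^{\underline N}$ decay exponentially: for $\psi_+^{\underline N}$ this is Lemma~\ref{l:evnotoncircle} together with the explicit eigenvalue $\lambda_-(z)$, and for $\psi_+$ it follows because $z \in \mathcal R$ lies off the unit circle and hence off $\sigma(\E)$. Matching the two solutions at the shared site $n = \underline N - 1$ (where $r_+ = r_+^{\underline N}$ and $s_+ = s_+^{\underline N}$) and using that both recursions propagate $\ell^2$ solutions with decay rates bounded uniformly on $\mathcal R$, one obtains that the two tail sums are comparable up to an absolute constant. The proof for $\widetilde S$ then follows by substituting $\widetilde\E_{\underline N}$ for $\E_{\underline N}$ and invoking Lemma~\ref{l:greendecay2} in place of Lemma~\ref{l:greendecay1}.

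The main obstacle will be step (ii): proving that $\sum_{n > N}|\psi_+|^2$ and $\sum_{n > N}|\psi_+^{\underline N}|^2$ are genuinely comparable, since a priori $\psi_+$ could decay much more slowly than $\psi_+^{\underline N}$ if $z$ happens to sit close to $\sigma(\E)$ even while $\mathcal R \cap \sigma(\E_{\underline N}) = \emptyset$. Handling this cleanly requires using that the matching at $n = \underline N - 1$ forces the two tail sums to share essentially the same boundary amplitude (controlled by $d,b,d_{\underline N},b_{\underline N}$ and the common $r_+, s_+$ at that site), so that any mismatch in decay rates is absorbed into the factor $|d/d_{\underline N}|^2$ already accounted for in step (i). Keeping careful bookkeeping so that no additional factors beyond $\varepsilon^{\pm 2}$ creep in is the technical heart of the argument.
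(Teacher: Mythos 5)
The paper's proof takes a completely different and far simpler route: the second resolvent identity
\[
G(z) - G_{\underline N}(z) = -G(z)\bigl(\E - \E_{\underline N}\bigr)G_{\underline N}(z),
\]
combined with the trivial bound $\|G(z)\| \le \varepsilon^{-1}$, the boundedness of $\E - \E_{\underline N}$, and the elementary inequality $|a+b|^2 \le 2|a|^2 + 2|b|^2$. Summing over $n > N$ gives $M_r(N,z) \lesssim S(N,z) + \varepsilon^{-2}S(N,z) \lesssim \varepsilon^{-2}S(N,z)$ at once, and the reverse inequality follows by swapping the roles of $\E$ and $\E_{\underline N}$. No $m$-function analysis, no Weyl solutions, and no matching of boundary data at $\underline N - 1$ are needed.

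Your proposed route has genuine gaps. First, the claimed estimate $|\Re M_\pm(z)| \asymp \varepsilon$ is false: from \eqref{eq:mdef}, $\Re M_+(z) = (1-|z|^2)\int|\tau - z|^{-2}\,d\mu_+(\tau)$; while $1-|z|^2 \asymp -\varepsilon$, the integral ranges between $O(1)$ and $O(\varepsilon^{-2})$ depending on the distance from $z$ to $\supp \mu_+$, so $|\Re M_+(z)|$ can be as large as $\varepsilon^{-1}$. Second, even granting two-sided bounds $c\varepsilon \le |d|,|d_{\underline N}| \le C\varepsilon^{-1}$, these would only give $|d/d_{\underline N}|^2 \in [c'\varepsilon^4, C'\varepsilon^{-4}]$, not the $[\varepsilon^2,\varepsilon^{-2}]$ range you assert, so step (i) already overshoots the target by two powers of $\varepsilon$. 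Third, and most seriously, step (ii) --- comparability of $\sum_{n>N}|\psi_+|^2$ and $\sum_{n>N}|\psi_+^{\underline N}|^2$ --- is not merely ``technical'' but is false in general: $\psi_+^{\underline N}$ decays at a rate $|\lambda_-(z)|^{n/2}$ bounded away from $1$ uniformly on $\mathcal R$ (Lemma~\ref{l:evnotoncircle}), whereas $\psi_+$ decays at a rate controlled by $\dist(z,\sigma(\E))$, which can approach $1$ for $z \in \mathcal R$ near $\sigma(\E)$. The tail sums thus need not be comparable, and the only way to absorb the mismatch is through $\varepsilon$-dependent factors --- which is exactly what the resolvent identity accomplishes in one step via $\|G(z)\| \le \varepsilon^{-1}$, without ever trying to split the quantity into a prefactor ratio times a tail ratio, each controlled separately.
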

\begin{proof}
Using a standard resolvent identity (e.g.~\cite[Lemma~6.5]{Teschl}), we get
$$
(\E - z)^{-1} - (\E_{\underline{N}} - z)^{-1}
=
-(\E - z)^{-1}\left(\E - \E_{\underline{N}}\right)
(\E_{\underline{N}} - z)^{-1}.
$$
Since $\vert\vert (\E - z)^{-1}\vert\vert\leq \varepsilon^{-1}$ we then have
\begin{align*}
M_r(N, z)
& =
\sum_{n = N+1}^\infty | \langle \delta_n, G(z) \delta_{-1} \rangle |^2 \\
& =
\sum_{n = N+1}^\infty | \langle \delta_n, G_{\underline{N}}(z)\delta_{-1}
-G(z)\left(\E - \E_{\underline{N}}\right) G_{\underline{N}}(z) \delta_{-1} \rangle |^2 \\
& \leq
2 S(N,z) + 2\varepsilon^{-2}\lVert(\E - \E_{\underline{N}})\rVert^2
\sum_{n = N + 1}^\infty
\left\vert \left\langle \delta_n, G_{\underline{N}}(z)\delta_{-1} \right\rangle \right\vert^2\\
& \lesssim
\varepsilon^{-2}S(N,z).
\end{align*}
In the first inequality, we have used $|a+b|^2 \leq 2|a|^2 + 2|b|^2$ for complex numbers $a$ and $b$, which follows readily from Cauchy--Schwarz. The proof is similar for the other inequality. The proof in the $\widetilde S$ case is identical.
\end{proof}

\begin{lemma}\label{DT07.Lemma3}
For all $z = e^{i\theta + \varepsilon} \in \mathcal R$, we have
\begin{equation} \label{eq:greensum:transmatbound}
M_r(N,z)
\lesssim
\varepsilon^{-4}\left( \max_{4\leq n\leq N} \left\|  Z(n, 0; z) \right\|^2 \right)^{-1}.
\end{equation}
The implicit constants depend on $M = \|\alpha\|_\infty$, but nothing else.
\end{lemma}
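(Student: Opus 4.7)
The plan is to leverage the trivial monotonicity $M_r(N, z) \le M_r(n, z)$ valid for any $n \le N$ (immediate from $M_r$ being a tail sum), so that it suffices to produce a bound of the desired shape for $M_r(n, z)$ at every even $n$ with $4 \le n \le N$; taking the best such bound over $n$ then yields the maximum over $n$ in the denominator of the conclusion. The key observation is that Lemmas~\ref{l:MSbounds}, \ref{l:greendecay1}, and \ref{l:greendecay2}, although stated at the canonical truncation $\underline N$, apply verbatim at any even truncation index, since their proofs rely only on $\|\E - \E_n\| \le 2$ and on the Green's function decay bound for $j \ge n$.

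Fix an even $n$ with $4 \le n \le N$, so $\underline n = n$. Lemma~\ref{l:MSbounds} at this truncation gives $M_r(n, z) \lesssim \varepsilon^{-2} S(n, z)$ and $M_r(n, z) \lesssim \varepsilon^{-2} \widetilde S(n, z)$. Invoking Lemmas~\ref{l:greendecay1} and \ref{l:greendecay2} (with their $N$ taken to be $n/2$) and summing the geometric series in $|\lambda_-(z)| < c_0 < 1$, which converges on $\mathcal R$ by Lemma~\ref{l:evnotoncircle}, one obtains four separate upper bounds
\[
M_r(n, z) \lesssim \frac{\varepsilon^{-4}}{|e_1^-(z) p_+(z, n-1) \pm r_+(z, n-1)|^2}, \quad M_r(n, z) \lesssim \frac{\varepsilon^{-4}}{|e_1^-(z) q_+(z, n-1) \pm s_+(z, n-1)|^2}.
\]
Let $D_n(z)$ denote the maximum of the four absolute values appearing in these denominators, so $M_r(n, z) \lesssim \varepsilon^{-4} / D_n(z)^2$.

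To convert $D_n(z)^2$ into $\|Z(n, 0; z)\|^2$, apply the parallelogram identity to the pairs $(e_1^-(z) p_+(z,n-1), r_+(z,n-1))$ and $(e_1^-(z) q_+(z,n-1), s_+(z,n-1))$, which yields
\[
4 D_n(z)^2 \ge 2 |e_1^-(z)|^2 \bigl(|p_+(z,n-1)|^2 + |q_+(z,n-1)|^2\bigr) + 2 \bigl(|r_+(z,n-1)|^2 + |s_+(z,n-1)|^2\bigr).
\]
A short calculation using \eqref{lambda.eigenvectors}, $|\lambda_-(z)| < 1$, and $|z| \le 4/\sqrt{7}$ on $\mathcal R$ shows that $|e_1^-(z)|$ is bounded below by a positive constant on $\mathcal R$, so the right-hand side dominates $\|M_n\|_F^2 \ge \|M_n\|^2$, where $M_n$ is the $2\times 2$ matrix with entries $p_+,q_+,r_+,s_+$ at index $n-1$. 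By \eqref{pqrs.definition}, $M_n = Z(n,0;z) A$ with $A = \left(\begin{smallmatrix} z & z \\ 1 & -1 \end{smallmatrix}\right)$ having both $\|A\|$ and $\|A^{-1}\|$ uniformly bounded on $\mathcal R$, so $\|M_n\| \asymp \|Z(n,0;z)\|$. Combining yields
\[
M_r(N, z) \le M_r(n, z) \lesssim \varepsilon^{-4} / \|Z(n, 0; z)\|^2
\]
for every even $n \in [4, N]$. Taking the maximum over such $n$, and then extending to all $n \in [4, N]$ by comparing $\|Z(n, 0; z)\|$ with $\|Z(n-1, 0; z)\|$ for odd $n$ via the uniform two-sided bounds on $Y(n, z)$ on $\mathcal R$ (which rely on $\|\alpha\|_\infty < 1$), completes the proof.

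The main obstacle is the algebraic step that repackages the four individual denominator bounds into a single clean estimate involving $\|Z(n, 0; z)\|$. The parallelogram identity is the correct tool, but for it to deliver an estimate in terms of the full transfer matrix norm one needs both the $\E_n$ and $\widetilde{\E}_n$ truncations so that all four sign combinations are simultaneously available; this justifies the otherwise mysterious introduction of two distinct truncation schemes earlier in the section. The nonvanishing of $e_1^-(z)$ on $\mathcal R$, though not previously singled out, is exactly what makes the parallelogram step produce a lower bound proportional to $\|M_n\|^2$, and the boundedness of $\|\alpha\|_\infty$ is what allows the final comparison between even and odd indices of $Z$.
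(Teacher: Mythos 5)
Your proposal is correct and follows the same overall scaffolding as the paper: monotonicity of $M_r$ in $N$, the four Green's function bounds from the two truncation pairs via Lemmas~\ref{l:MSbounds}--\ref{l:greendecay2}, combination into a bound involving $\|Z\|$, and the final even/odd adjustment using $\|\alpha\|_\infty < 1$. The one place you genuinely deviate is the algebraic step that fuses the four denominator bounds into a transfer-matrix norm: you apply the parallelogram identity to both pairs simultaneously, lower-bound the result by the Frobenius norm of the solution matrix (using $|e_1^-|$ bounded below), and then pass to the operator norm of $Z$ via the factorization $M_n = Z(n,0;z)A$ with $A$ uniformly well-conditioned on $\mathcal R$. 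The paper instead introduces the auxiliary quantities $U$ and $\widetilde U$, uses triangle inequalities $|U|+|\widetilde U| \ge |U\pm\widetilde U|$ to extract $|p_+|$ and $|r_+|$, and then bounds $\|Z(\underline N,0;z)\|^2$ directly by a Rayleigh-quotient calculation. Your route is cleaner and more symmetric; you also correctly sharpen the paper's remark that $\sqrt 7\lambda_-(z) - 4z^{-1} \neq 0$ into the quantitative statement that $|e_1^-(z)|$ is uniformly bounded below on $\mathcal R$ (via $|4z^{-1}| > \sqrt 7$ and $|\lambda_-| < c_0 < 1$), which is in fact what the implicit constant in the paper's display \eqref{Mbound.pr} requires.
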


\begin{proof}
The previous lemmas show us that
\begin{align}
\label{eq:mrboundpr}
M_r(N,z)
\lesssim
\varepsilon^{-4} \left\vert \left( \frac{\sqrt 7}{3} \lambda_-(z) - \frac{4}{3} z^{-1} \right) p_+(\underline{N}-1,z) - r_+(\underline{N}-1,z) \right\vert^{-2},\\
\label{eq:mrboundpr2}
M_r(N,z)
\lesssim
\varepsilon^{-4} \left\vert \left( \frac{\sqrt 7}{3} \lambda_-(z) - \frac{4}{3} z^{-1} \right) p_+(\underline{N}-1,z) + r_+(\underline{N}-1,z) \right\vert^{-2},\\
\label{eq:mrboundqs}
M_r(N,z)
\lesssim
\varepsilon^{-4} \left\vert \left( \frac{\sqrt 7}{3} \lambda_-(z) - \frac{4}{3} z^{-1} \right) q_+(\underline{N}-1,z) - s_+(\underline{N}-1,z) \right\vert^{-2},\\
\label{eq:mrboundqs2}
M_r(N,z)
\lesssim
\varepsilon^{-4} \left\vert \left( \frac{\sqrt 7}{3} \lambda_-(z) - \frac{4}{3} z^{-1} \right) q_+(\underline{N}-1,z) + s_+(\underline{N}-1,z) \right\vert^{-2}.
\end{align}
Notice that \eqref{eq:mrboundpr} and \eqref{eq:mrboundqs} implicitly use the claim from Lemma~\ref{l:evnotoncircle} that $|\lambda_-| < c_0$ on $\mathcal R$ to uniformly bound
\begin{equation} \label{eq:lambdageosumbound}
\sum_{j=0}^\infty |\lambda_-(z)|^j
\leq
\frac{1}{1-c_0}
<
\infty
\end{equation}
for all $z \in \mathcal R$. This step explains the need for several truncations, since we cannot have a bound like \eqref{eq:lambdageosumbound} on $\mathcal R'$, which means that we lose control of the implicit constants in that region with this pair of truncations. Let us denote
\begin{align}
U
=
\left( \frac{\sqrt 7}{3} \lambda_-(z) - \frac{4}{3} z^{-1} \right) p_+(\underline{N}-1,z)-r_+(\underline{N}-1,z),\\
\widetilde U
=
\left(\frac{4}{3} z^{-1} - \frac{\sqrt 7}{3} \lambda_-(z) \right) p_+(\underline{N}-1,z)-r_+(\underline{N}-1,z).
\end{align}
Adding and subtracting the two equations we note that
$$
\vert U\vert+\vert\widetilde U\vert
\geq
\frac{2}{3} \left| \sqrt 7 \lambda_-(z) - 4z^{-1} \right| \vert p_+(\underline{N}-1,z)\vert,
$$
and also
$$
\vert U\vert +\vert\widetilde U\vert
\geq
2\vert r_+(\underline{N}-1,z)\vert.
$$
Since $\vert \lambda_-\vert < 1$ and $1 < \vert z\vert < 4 / \sqrt 7$, we must have $\sqrt 7 \lambda_-(z)- 4z^{-1} \neq 0$.

This implies

\begin{equation}\label{Mbound.pr}
M_r(N,z)
\lesssim \varepsilon^{-4}\frac{1}{(\lvert U\rvert+\lvert\widetilde U\rvert)^2}\lesssim\varepsilon^{-4} \left( \vert p_+(\underline{N}-1,z)\vert^2+\vert r_+(\underline{N}-1,z)\vert^2\right)^{-1}.
\end{equation}
We can similarly also show that
\begin{equation}\label{Mbound.qs}
M_r(N,z)
\lesssim
\varepsilon^{-4} \left( \vert q_+(\underline{N}-1,z)\vert^2+\vert s_+(\underline{N}-1,z)\vert^2 \right)^{-1}.
\end{equation}
For the sake of notational convenience, in the following calculation let us use $p_+,q_+,r_+,s_+$ as a shorthand for $p_+(\underline{N}-1,z), q_+(\underline{N}-1,z), r_+(\underline{N}-1,z), s_+(\underline{N}-1,z)$.
We can use \eqref{pqrs.definition} to observe that
\begin{align*}
\lVert Z(\underline{N},0;z)\rVert^2
& =
\sup_{(A,B)\neq(0,0)}\frac{\left\lVert A\begin{pmatrix}
p_+\\r_+
\end{pmatrix}+B\begin{pmatrix}
q_+\\s_+
\end{pmatrix}\right\rVert^2}{\left\lVert A\begin{pmatrix}
z\\1
\end{pmatrix}+B\begin{pmatrix}
z\\-1
\end{pmatrix}\right\rVert^2}\\
& \leq
\sup_{(A,B)\neq(0,0)}\frac{\left( \lvert A\rvert^2+\lvert B\rvert^2 \right) \left( \lvert p_+\rvert^2+\lvert q_+\rvert^2+\lvert r_+\rvert^2+\lvert s_+\rvert^2 \right)}
{\lvert A + B \rvert^2+\lvert A-B\rvert^2}\\
& =
\frac{1}{2} \left( \lvert p_+\rvert^2+\lvert q_+\rvert^2+\lvert r_+\rvert^2+\lvert s_+\rvert^2 \right)\\
& \le
\max(\lvert p_+\rvert^2+\lvert r_+\rvert^2, \lvert q_+\rvert^2+\lvert s_+\rvert^2).
\end{align*}
Since $\alpha$ is bounded away from $\partial \D$, we have $Z(N,0;z) \lesssim Z(\underline{N},0;z)$, which implies
$$
M_r(N,z)
\lesssim
\varepsilon^{-4}\lVert Z(N,0;z)\rVert^{-2}.
$$
Observe that $M_r(n,z) \geq M_r(N,z)$ whenever $n \le N$. Thus, the previous argument gives the following for $4 \le n \le N$:
$$
M_r(N,z)
\leq
M_r(n,z)
\lesssim
\varepsilon^{-4} \left( \| Z(n,0;z) \|^{2} \right)^{-1}.
$$
Since the preceding inequality holds uniformly in $4 \le n \le N$, we get \eqref{eq:greensum:transmatbound}.
\end{proof}

We  want to extend all these results to arbitrary $\theta\in [0,2\pi)$ by using the second pair of truncations of $\E$. We can then obtain a new version of Lemmas~\ref{l:greendecay1} and \ref{l:greendecay2}. Notice that the four-step GZ matrices of the truncated CMV operators take the form
\begin{align*}
Z_{4N}'(4n+4,4n;z)
=
\frac{1}{7} \begin{pmatrix}
16z^2 - 9 & 12(z-z^{-1}) \\
-12(z-z^{-1}) & 16z^{-2} - 9
\end{pmatrix} \\
\widetilde Z_{4N}'(4n+4,4n;z)
=
\frac{1}{7} \begin{pmatrix}
16z^2 - 9 & -12(z-z^{-1}) \\
12(z-z^{-1}) & 16z^{-2} - 9
\end{pmatrix}
\end{align*}
for $n \ge N$. Both have eigenvalues
$$
\lambda_\pm'(z)
=
\frac{8(z^2 + z^{-2}) - 9 \pm \sqrt{(8(z^2 + z^{-2}) - 9)^2 - 49 } }{7}
$$
The first has eigenvectors $v_\pm = ( v_1^\pm, v_2^\pm)^\top$,
given by
$$
v_1^\pm
=
\frac{1}{12(z-z^{-1})}\left(7\lambda_\pm' - 8(z^2+z^{-2})  + 9 \right) - \frac{2}{3}(z+z^{-1}),
\quad
v_2^\pm =1.
$$
The second has eigenvector $(-v_1^\pm, 1)^\top$. Following previous arguments, we can prove bounds for the decay of the corresponding Green's functions as before.

\begin{lemma} \label{l':greendecay1}
Define $G'_{4N}(z) = \left(\E'_{4N} - z \right)^{-1}$ for $|z| > 1$. For $n\geq 4N\geq 8$ and $z = e^{i\theta + \varepsilon} \in \mathcal R'$,
\begin{align}
\label{eq':greendecay1}
\left\lvert \left\langle \delta_n, G'_{4N}(z) \delta_{-1} \right\rangle \right\rvert
& \lesssim
\varepsilon^{-1}
\frac{|z| \vert \lambda'_-(z) \vert^{\frac{n}{4} - N}}{\left\vert v_1^-(z) p_+(z, 4N-1)-r_+(z, 4N-1) \right\vert},\\
\label{eq':greendecay2}
\left\lvert \left\langle \delta_n,  G'_{4N}(z) \delta_{-1} \right\rangle \right\rvert
& \lesssim
\varepsilon^{-1}
\frac{|z| \vert \lambda'_-(z)  \vert^{\frac{n}{4} - N}}{\left\vert v_1^-(z) q_+(z, 4N-1)-s_+(z, 4N-1) \right\vert}.
\end{align}
\end{lemma}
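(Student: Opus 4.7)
The argument parallels that of Lemma~\ref{l:greendecay1}, with the two-step transfer matrix of $\E_{2N}$ replaced by the four-step transfer matrix of $\E'_{4N}$ on the region $\mathcal R'$. The need for this second pair of truncations, and the reason the proof cannot be carried out on all of $\partial \D$ simultaneously, is precisely that the two-periodic spectrum of Lemma~\ref{l:evnotoncircle} fails to avoid $\mathcal R'$, whereas a suitably chosen four-periodic background will.

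First I would establish the four-periodic analog of Lemma~\ref{l:evnotoncircle}: the eigenvalues $\lambda'_\pm(z)$ satisfy $|\lambda'_-(z)| < c < 1 < |\lambda'_+(z)|$ uniformly on $\mathcal R'$. The spectrum of the associated four-periodic extended CMV matrix is computed via the Floquet criterion $|\mathrm{tr}| \le 2$ applied to the displayed four-step matrix, whose trace on $\partial \D$ equals $(32\cos(2\theta) - 18)/7$; this yields spectrum $\{e^{i\theta} : \cos(2\theta) \ge 1/8\}$, concentrated near $\theta = 0$ and $\theta = \pi$. Since $\arccos(1/8)/2 < \pi/4$, the closure of $\mathcal R'$ is strictly disjoint from this spectrum on $\partial \D$, and continuity plus compactness give uniform hyperbolicity throughout $\mathcal R'$.

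Second, set $u := G'_{4N}(z)\delta_{-1} \in \ell^2(\Z)$, $v := \mathcal L^{-1} u$, and $\Phi(n) := (u(n), v(n))^\top$. Expand $\Phi(4N-1) = C_+ v_+(z) + C_- v_-(z)$ in the eigenbasis of the four-step matrix; since past site $4N-1$ the propagation is four-periodic, we have $\Phi(4m-1) = C_+ (\lambda'_+)^{m-N} v_+ + C_- (\lambda'_-)^{m-N} v_-$ for $m \ge N$, and the $\ell^2$ condition forces $C_+ = 0$. The intermediate positions $4m, 4m+1, 4m+2$ are reached by a bounded number of one-step GZ transfers whose norms are uniformly bounded on $\mathcal R'$, so $|u(n)| \lesssim |C_-|\cdot|\lambda'_-(z)|^{n/4 - N}$ for every $n \ge 4N$. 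Next, applying \eqref{GZ.formula} to $\E'_{4N}$ at $n = 4N-1$ and $n = 4N$, and using $\alpha'_{4N}(4N) = 0$ to identify $s_+^{4N}(z,4N) = q_+(z,4N-1)$ and $r_+^{4N}(z,4N) = p_+(z,4N-1)$ exactly as in the proof of Lemma~\ref{l:greendecay1}, I obtain a $2 \times 2$ linear system for $b'_{4N}(z)$ and $d'_{4N}(z)$ whose determinant is the Wronskian $W(z,4N-1) = -2z$. Cramer's rule gives
\begin{align*}
C_-
&= \frac{W\, d'_{4N}(z)}{v_1^-(z)\, p_+(z,4N-1) - r_+(z,4N-1)} \\
&= \frac{W\, b'_{4N}(z)}{s_+(z,4N-1) - v_1^-(z)\, q_+(z,4N-1)}.
\end{align*}
Combining this with the universal estimate $|b'_{4N}(z)|, |d'_{4N}(z)| \le \varepsilon^{-1}$ from \eqref{bd.bound} and the uniform bound $|z| \le 4/\sqrt{7}$ on $\mathcal R'$ yields the desired inequalities \eqref{eq':greendecay1} and \eqref{eq':greendecay2}.

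The main obstacle is the first step. The numerical margin for hyperbolicity is tight ($\arccos(1/8)/2 \approx 0.7228$ versus $\pi/4 \approx 0.7854$), so one must be careful at the boundary of $\mathcal R'$, and the explicit formulas for $v_1^\pm$ in the lemma statement already become singular as $z \to \pm 1$ (where $z - z^{-1} \to 0$) — fortunately $\pm 1 \notin \overline{\mathcal R'}$, but this illustrates that the geometric choice of $\mathcal R'$ is dictated by the location of the singularities and of the spectrum. Once Step 1 is in hand, the remaining steps are direct transcriptions of the corresponding arguments from Lemma~\ref{l:greendecay1}, with the slight additional bookkeeping of tracking the geometric decay across one-step transfers within each period of length four.
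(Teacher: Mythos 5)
Your proposal is correct and follows essentially the same route as the paper, which dispatches Lemmas~\ref{l':greendecay1} and \ref{l':greendecay2} by simply referring back to the argument for Lemmas~\ref{l:greendecay1} and \ref{l:greendecay2} (``Following previous arguments, we can prove bounds for the decay of the corresponding Green's functions as before''). Your filled-in details are sound: the Floquet computation giving the four-periodic background spectrum $\{e^{i\theta} : \cos(2\theta) \ge 1/8\}$, the check that $\arccos(1/8)/2 < \pi/4$ so that $\overline{\mathcal R'} \cap \partial\D$ avoids that spectrum, the $\ell^2$ argument forcing $C_+ = 0$, and the use of \eqref{GZ.formula}, the Wronskian $W(z,4N-1) = -2z$, and \eqref{bd.bound} via Cramer's rule all mirror the proof of Lemma~\ref{l:greendecay1} and yield the stated denominators $v_1^- p_+ - r_+$ and $v_1^- q_+ - s_+$ (using $v_2^\pm = 1$). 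The one genuinely new wrinkle you correctly flag is that, unlike the two-periodic truncation where $Y_{2N}(2m,z)$ simply swaps coordinates, the four-periodic block forces one to propagate through up to three intermediate one-step GZ matrices to reach a general $n \ge 4N$; since $\|P\|, \|Q\|$ are uniformly bounded (the $\alpha'$'s take values in $\{0, \pm 3/4\}$ and the original coefficients are bounded away from $\partial\D$) and $|\lambda'_-|$ is bounded above and away from zero on $\mathcal R'$, this only costs a multiplicative constant, matching $|\lambda'_-|^{n/4 - N}$ up to implicit constants. Your observation about the tight numerical margin and the singularity of $v_1^\pm$ as $z \to \pm 1$ is a nice sanity check on why $\mathcal R'$ was chosen as it was.
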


\begin{lemma} \label{l':greendecay2}
Define $\widetilde G'_{4N}(z) = \left(\widetilde \E'_{4N} - z \right)^{-1}$ for $|z| > 1$. For $n\geq 4N\geq 8$ and $z = e^{i\theta + \varepsilon} \in \mathcal R'$,
\begin{align}
\left\lvert \left\langle \delta_n,  \widetilde G'_{4N}(z) \delta_{-1} \right\rangle \right\rvert
\lesssim
\varepsilon^{-1} \frac{\vert z\vert \vert \lambda'_-(z) \vert^{\frac{n}{4} - N}}{\left\vert v_1^-(z) p_+(4N-1,z) + r_+(4N-1,z) \right\vert}\\
\left\lvert \left\langle \delta_n, \widetilde G'_{4N}(z)\delta_{-1} \right\rangle \right\rvert
\lesssim
\varepsilon^{-1} \frac{\vert z\vert \vert \lambda'_-(z)\vert^{\frac{n}{4} - N}}{\left\vert v_1^-(z) q_+(4N-1,z) + s_+(4N-1,z) \right\vert}
\end{align}
\end{lemma}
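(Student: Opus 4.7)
The plan is to mirror the proof of Lemma~\ref{l':greendecay1} essentially verbatim; the only substantive change is that the four-step GZ matrix $\widetilde Z'_{4N}(4n+4,4n;z)$ for $n \ge N$ has the same eigenvalues $\lambda'_\pm(z)$ as $Z'_{4N}(4n+4,4n;z)$ but eigenvectors $\widetilde v_\pm(z) = (-v_1^\pm(z),1)^\top$ instead of $(v_1^\pm(z),1)^\top$, as recorded in the paragraph preceding the lemma. This sign flip in the first component is precisely what converts the ``$-$'' signs in the denominators of Lemma~\ref{l':greendecay1} into the ``$+$'' signs claimed here.

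Concretely, I would set $u = \widetilde G'_{4N}(z)\delta_{-1} \in \ell^2(\Z)$, $v = \mathcal L^{-1} u$, $\Phi = (u,v)^\top$, and expand $\Phi(4N-1) = \widetilde C_+ \widetilde v_+ + \widetilde C_- \widetilde v_-$. Iterating the four-step propagation for $n \ge 4N-1$ gives
$$
\Phi(4m-1) = \widetilde C_+ (\lambda'_+(z))^{m-N} \widetilde v_+(z) + \widetilde C_-(\lambda'_-(z))^{m-N} \widetilde v_-(z), \quad m \ge N.
$$
Since $|\lambda'_+(z)| > 1$ and $|\lambda'_-(z)| < 1$ uniformly on $\mathcal R'$ (the analogue of Lemma~\ref{l:evnotoncircle} applied to the $4$-periodic reference operator with alternating Verblunsky coefficients $\ldots,0,3/4,0,-3/4,\ldots$), the condition $u \in \ell^2$ forces $\widetilde C_+ = 0$. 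Reading off components then yields $u(4m-1) = -\widetilde C_-(\lambda'_-)^{m-N} v_1^-$ and $u(4m) = v(4m-1) = \widetilde C_-(\lambda'_-)^{m-N}$.

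Next, I would apply \eqref{GZ.formula} to $\widetilde \E'_{4N}$ at $n=4N-1$ and $n=4N$, using that the $p_+,q_+,r_+,s_+$-functions for $\widetilde \E'_{4N}$ and $\E$ agree on $-1 \le n \le 4N-1$ (the Verblunsky coefficients coincide on that range, and $\widetilde \alpha'_{4N}(4N)=0$ reduces the $n=4N$ step to a coordinate swap). This produces a $2\times 2$ linear system for $\widetilde b'_{4N},\widetilde d'_{4N}$ in terms of $\widetilde C_-$ whose determinant is the Wronskian $W$ from \eqref{eq:gzwronsk}; solving it gives
$$
\widetilde C_- = \frac{-W\,\widetilde d'_{4N}(z)}{v_1^-(z)\,p_+(4N-1,z) + r_+(4N-1,z)} = \frac{W\,\widetilde b'_{4N}(z)}{v_1^-(z)\,q_+(4N-1,z) + s_+(4N-1,z)},
$$
where the sign change relative to \eqref{Cminus-pr}--\eqref{Cminus-qs} comes directly from the flipped first component of $\widetilde v_-$. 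Because \eqref{bd.bound} is purely deterministic it applies equally to $\widetilde b'_{4N},\widetilde d'_{4N}$, yielding the two claimed estimates whenever $n \equiv 4N-1$ or $4N \pmod 4$. For the remaining residues, one advances $\Phi$ by one or two one-step GZ matrices $Y'_{4N}(\cdot,z)$ and by $\mathcal L$; the hypothesis $\|\alpha\|_\infty < 1$ provides uniform control of the $\rho$'s, so this introduces only a $z$- and $N$-independent multiplicative constant which is absorbed into the $\lesssim$.

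The main obstacle I anticipate is the uniform eigenvalue split on $\mathcal R'$: one must verify that the spectrum of the relevant $4$-periodic reference operator is disjoint from the closure of $\mathcal R'$, so that $|\lambda'_-(z)| \le c_0 < 1$ uniformly there. This is a Floquet-theoretic bookkeeping exercise identical in spirit to Lemma~\ref{l:evnotoncircle}, and it is exactly this bound which ensures that the geometric sum analogue of \eqref{eq:lambdageosumbound} converges uniformly on $\mathcal R'$ when the estimates proved here are later combined to control $M_r(N,z)$ (in the spirit of Lemma~\ref{DT07.Lemma3}). Once that spectral check is in place, the rest of the argument is a mechanical transcription of the proof of Lemma~\ref{l':greendecay1}.
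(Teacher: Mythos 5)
Your proposal is correct and follows exactly the route the paper implicitly prescribes: it transcribes the proof of Lemma~\ref{l:greendecay1} to the four-step truncation $\widetilde\E'_{4N}$, correctly uses the flipped eigenvector $(-v_1^\pm,1)^\top$ recorded just before the lemma to produce the ``$+$'' signs in the denominators, solves the resulting $2\times 2$ system via the Wronskian, and invokes the deterministic bound \eqref{bd.bound}. The one point you flag (the uniform split $|\lambda'_-| \le c_0 < 1$ on $\mathcal R'$) is indeed the only genuine verification beyond mechanical substitution, and your proposed Floquet-theoretic argument in the spirit of Lemma~\ref{l:evnotoncircle} is the right tool for it.
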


For $z \notin \partial \D$, define
\begin{align*}
S'(N,z)
& =
\sum_{n = N + 1}^\infty
\Big\lvert \Big\langle \delta_n, G'_{4\lfloor N/4 \rfloor}(z) \delta_{-1} \Big\rangle \Big\rvert^{2},\\
\widetilde S'(N,z)
& =
\sum_{n=N+1}^\infty
\left\lvert \left\langle \delta_n, \widetilde G'_{4\lfloor N/4 \rfloor}(z) \delta_{-1} \right> \right\rvert^2.
\end{align*}
A new version of Lemma~\ref{l:MSbounds} is also immediate.

\begin{lemma} \label{l':MSbounds}
For $z = e^{i\theta + \varepsilon} \in \mathcal R'$, we have
\begin{align}
\varepsilon^2S'(N,z)\lesssim M_r(N,z)
& \lesssim
\varepsilon^{-2}S'(N, z),\\
\varepsilon^2\widetilde S'(N,z)\lesssim M_r(N,z)
& \lesssim
\varepsilon^{-2}\widetilde S' (N, z).
\end{align}
\end{lemma}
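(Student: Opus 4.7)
The plan is to mimic the argument of Lemma~\ref{l:MSbounds} verbatim, swapping the roles of the two-step truncation $\mathcal{E}_{\underline{N}}$ for the four-step truncation $\mathcal{E}'_{4\lfloor N/4\rfloor}$ (and similarly for the tilded version). The only thing we really used about the truncations in Lemma~\ref{l:MSbounds} was (i) that the resolvents $(\mathcal{E} - z)^{-1}$ and $(\mathcal{E}_{\underline{N}} - z)^{-1}$ both satisfy the a priori bound $\|\cdot\| \leq \varepsilon^{-1}$ on the shifted circle $\{|z| = e^\varepsilon\}$, and (ii) that the perturbation $\mathcal{E} - \mathcal{E}_{\underline{N}}$ has operator norm bounded by an absolute constant (coming from the fact that the Verblunsky coefficients are uniformly bounded by $3/4$ in modulus, and that the two operators only differ in a finite band around site $\underline{N}$). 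Both of these properties hold equally well for $\mathcal{E}'_{4\lfloor N/4\rfloor}$ and $\widetilde{\mathcal{E}}'_{4\lfloor N/4\rfloor}$, since $|\alpha| \leq 3/4$ everywhere and we have not altered the essential structure of the sequence.

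The concrete steps are as follows. First, apply the second resolvent identity
\[
(\mathcal{E} - z)^{-1} - (\mathcal{E}'_{4\lfloor N/4\rfloor} - z)^{-1}
=
-(\mathcal{E} - z)^{-1}(\mathcal{E} - \mathcal{E}'_{4\lfloor N/4\rfloor})(\mathcal{E}'_{4\lfloor N/4\rfloor} - z)^{-1}.
\]
Noting that for $z = e^{i\theta + \varepsilon}$ we have $\|(\mathcal{E} - z)^{-1}\|, \|(\mathcal{E}'_{4\lfloor N/4\rfloor} - z)^{-1}\| \leq (e^\varepsilon - 1)^{-1} \leq \varepsilon^{-1}$ by the spectral theorem (together with the unitarity of both operators, whose spectra lie on $\partial \D$), taking matrix elements against $\delta_n$ on the left and $\delta_{-1}$ on the right, and then using the elementary inequality $|a+b|^2 \leq 2|a|^2 + 2|b|^2$, we obtain
\[
M_r(N,z)
\leq
2 S'(N,z) + 2\varepsilon^{-2}\|\mathcal{E} - \mathcal{E}'_{4\lfloor N/4\rfloor}\|^2 \sum_{n \geq N+1} \left| \langle \delta_n, G'_{4\lfloor N/4\rfloor}(z) \delta_{-1} \rangle \right|^2.
\]
Since the perturbation $\mathcal{E} - \mathcal{E}'_{4\lfloor N/4\rfloor}$ has a uniformly bounded operator norm, the second term is $\lesssim \varepsilon^{-2} S'(N,z)$, and the desired upper bound $M_r(N,z) \lesssim \varepsilon^{-2} S'(N,z)$ follows.

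For the lower bound, simply swap the roles of $\mathcal{E}$ and $\mathcal{E}'_{4\lfloor N/4\rfloor}$ in the resolvent identity to write
\[
(\mathcal{E}'_{4\lfloor N/4\rfloor} - z)^{-1} - (\mathcal{E} - z)^{-1}
=
-(\mathcal{E}'_{4\lfloor N/4\rfloor} - z)^{-1}(\mathcal{E}'_{4\lfloor N/4\rfloor} - \mathcal{E})(\mathcal{E} - z)^{-1},
\]
and argue in the same way to get $S'(N,z) \lesssim \varepsilon^{-2} M_r(N,z)$, which rearranges to $\varepsilon^2 S'(N,z) \lesssim M_r(N,z)$. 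The proof of the analogous pair of inequalities for $\widetilde{S}'$ is identical after replacing $\mathcal{E}'_{4\lfloor N/4\rfloor}$ by $\widetilde{\mathcal{E}}'_{4\lfloor N/4\rfloor}$. There is no real obstacle here; the only thing to double-check is that the operator-norm bound on the perturbation $\mathcal{E} - \mathcal{E}'_{4\lfloor N/4\rfloor}$ (and $\mathcal{E} - \widetilde{\mathcal{E}}'_{4\lfloor N/4\rfloor}$) is indeed $O(1)$ uniformly in $N$, which is immediate from the explicit form of the truncations together with $|\alpha_n|, 3/4 < 1$.
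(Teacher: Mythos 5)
Your proposal is correct and matches the paper's intent: the paper itself states that the primed version of Lemma~\ref{l:MSbounds} is ``immediate,'' and your plan to rerun the second resolvent identity with $\E'_{4\lfloor N/4\rfloor}$ (resp.\ $\widetilde\E'_{4\lfloor N/4\rfloor}$) in place of $\E_{\underline N}$ is exactly the right move, since the only inputs to that argument are the a priori resolvent bound $\|(\,\cdot\, - z)^{-1}\| \le (e^\varepsilon - 1)^{-1} \le \varepsilon^{-1}$ on the circle $|z| = e^\varepsilon$ and a uniform operator-norm bound on the perturbation, both of which transfer without change.

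One small inaccuracy in your justification of the latter point: it is not true that $\E$ and $\E'_{4\lfloor N/4\rfloor}$ ``only differ in a finite band around site $\underline N$.'' The truncation replaces the Verblunsky coefficients on the entire half-line $n \ge 4\lfloor N/4\rfloor$, so the difference operator has nonzero entries along a semi-infinite stretch of the diagonal band (not a finite one). Likewise, the original $\alpha_n$ need not be bounded by $3/4$ -- only by some $M < 1$. The clean reason the perturbation is $O(1)$ uniformly in $N$ is simply that both $\E$ and $\E'_{4\lfloor N/4\rfloor}$ are unitary, so $\|\E - \E'_{4\lfloor N/4\rfloor}\| \le 2$ outright; no structural locality of the difference is needed for this particular norm bound. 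With that correction the argument is sound.
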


Notice that $v_1^-(z)$ is bounded away from zero for $z \in \mathcal R'$. To see this, notice that $(0,1)^\top$ is an eigenvector of $Z'_{4N}(4N+4,4N;z)$ if and only if $z-z^{-1} = 0$, that is $z = \pm 1$. Since $\mathcal R'$ is bounded away from $\{ \pm 1 \}$, it follows that $v_1^-(z)$ is bounded away from zero. This observation allows us to extend the arguments (and hence the conclusions) of Lemma~\ref{DT07.Lemma3} to spectral parameters in $\mathcal R'$.

\begin{lemma}\label{DT07.Lemma3'}
For $z = e^{i\theta + \varepsilon} \in \mathcal R'$, we have
\begin{equation} \label{eq':greensum:transmatbound}
M_r(N,z)
\lesssim
\varepsilon^{-4}\left( \max_{4\leq n\leq N} \left\|  Z(n, 0; z) \right\|^2 \right)^{-1}.
\end{equation}
The implicit constants depend on $M = \|\alpha\|_\infty$, but nothing else.
\end{lemma}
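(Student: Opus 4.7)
The plan is to mirror the proof of Lemma~\ref{DT07.Lemma3} step by step, with the four-step truncations $\E'_{4N}$ and $\widetilde{\E}'_{4N}$ replacing the two-step truncations, and with the pair $(\lambda_-',v_1^-)$ playing the role of $(\lambda_-,e_1^-)$. The second pair of truncations was introduced precisely because the two-step argument breaks down on $\mathcal{R}'$, and the geometry of the four-step spectrum is such that the obstructions disappear.

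First I would combine Lemmas~\ref{l':greendecay1}, \ref{l':greendecay2} and \ref{l':MSbounds} to produce four bounds of the form
\[
M_r(N,z) \lesssim \varepsilon^{-4}|v_1^-(z)p_+ \pm r_+|^{-2}, \qquad M_r(N,z) \lesssim \varepsilon^{-4}|v_1^-(z)q_+ \pm s_+|^{-2},
\]
with $p_+,q_+,r_+,s_+$ evaluated at $(\underline{N}'-1,z)$, where $\underline{N}' = 4\lfloor N/4\rfloor$. The implicit $\ell^2$-summation of the bounds from Lemmas~\ref{l':greendecay1} and \ref{l':greendecay2} converges uniformly on $\mathcal{R}'$ because $|\lambda'_-(z)| < c_0 < 1$ there; this is the four-step analog of Lemma~\ref{l:evnotoncircle} and follows from the explicit formula for $\lambda'_\pm$ together with the fact that the 4-periodic essential spectra of $\E'_{4N}$ and $\widetilde{\E}'_{4N}$ lie in $\partial\D\setminus\mathcal{R}'$.

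Next, following the same algebraic maneuver as in Lemma~\ref{DT07.Lemma3}, set $U' = v_1^-(z)p_+ - r_+$ and $\widetilde{U}' = -v_1^-(z)p_+ - r_+$ (the signs are dictated by the eigenvectors $(v_1^-,1)^\top$ and $(-v_1^-,1)^\top$ associated with the two four-step truncations). Adding and subtracting gives
\[
|U'| + |\widetilde{U}'| \gtrsim |v_1^-(z)|\,|p_+|, \qquad |U'| + |\widetilde{U}'| \gtrsim |r_+|.
\]
Here the crucial input is the remark made just before the lemma statement: $v_1^-(z)$ is bounded away from $0$ on $\mathcal{R}'$, since $(0,1)^\top$ is an eigenvector of $Z'_{4N}(4N+4,4N;z)$ only at $z = \pm 1$, points avoided by $\mathcal{R}'$. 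This yields $M_r(N,z)\lesssim \varepsilon^{-4}(|p_+|^2+|r_+|^2)^{-1}$, and the parallel argument with $q_+,s_+$ produces $M_r(N,z)\lesssim \varepsilon^{-4}(|q_+|^2+|s_+|^2)^{-1}$.

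The remainder of the proof is identical to the last portion of Lemma~\ref{DT07.Lemma3}. The singular-value style estimate shown there gives $\|Z(\underline{N}',0;z)\|^2 \leq \max(|p_+|^2+|r_+|^2,|q_+|^2+|s_+|^2)$, hence $M_r(N,z)\lesssim\varepsilon^{-4}\|Z(\underline{N}',0;z)\|^{-2}$. Since $\|\alpha\|_\infty < 1$, the one-step GZ matrices and their inverses have uniformly bounded norm, so $\|Z(\underline{N}',0;z)\| \asymp \|Z(N,0;z)\|$ with constants depending only on $M$. Finally, $M_r(\cdot,z)$ is nonincreasing in its first argument, so the bound $M_r(N,z)\le M_r(n,z)\lesssim\varepsilon^{-4}\|Z(n,0;z)\|^{-2}$ holds for every $4\le n\le N$, and taking the maximum over such $n$ gives \eqref{eq':greensum:transmatbound}. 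The only genuine obstacle is verifying the two uniform bounds $|\lambda'_-(z)|\le c_0<1$ and $|v_1^-(z)|\gtrsim 1$ on $\mathcal{R}'$, which is exactly what the second pair of truncations was engineered to secure.
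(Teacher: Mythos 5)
Your proposal is correct and follows the same route the paper intends: the four-step truncations $\E'_{4N}$, $\widetilde{\E}'_{4N}$ with eigenpairs $(\lambda'_\pm, (\pm v_1^\pm,1)^\top)$ replace the two-step ones, the Green's-function estimates in Lemmas~\ref{l':greendecay1}--\ref{l':MSbounds} substitute for Lemmas~\ref{l:greendecay1}--\ref{l:MSbounds}, the add-and-subtract trick on $U'=v_1^-p_+-r_+$ and $\widetilde U'=-v_1^-p_+-r_+$ gives the bound $M_r\lesssim\varepsilon^{-4}(|p_+|^2+|r_+|^2)^{-1}$ once $|v_1^-|$ is uniformly bounded below on $\mathcal R'$ (the fact the paper isolates in the remark preceding the lemma), and the singular-value estimate plus the bounded-coefficients argument then yields \eqref{eq':greensum:transmatbound}. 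You correctly pinpoint the two uniform bounds $|\lambda'_-|\le c_0<1$ and $|v_1^-|\gtrsim 1$ on $\mathcal R'$ as the only nontrivial inputs; these are precisely what the four-step truncations were designed to guarantee, and the paper gives only the remark on $v_1^-$ before declaring the argument extends, so your write-up fills in exactly what the paper leaves implicit.
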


\begin{proof}[Proof of Theorem~\ref{DT07.thm7}]

Follows immediately from Lemmas \ref{DT07.Lemma3} and \ref{DT07.Lemma3'}, except that we have the $\max$ in \eqref{eq:pr:gzbound} running over $n\geq 1$ instead of $n\geq 4$. Since $\alpha$ is bounded away from $\partial \D$, making the appropriate adjustments only modifies the implicit constants in an $N$-independent fashion, so this is fine.

\end{proof}

\section{Upper Bounds Without Averaging}

By using techniques from complex analysis and a version of the residue calculus for normal operators, we can prove variants of the upper bounds from the previous section without averaging in time; compare \cite{DT08}.

\begin{lemma} \label{l:contourint}
Let $\E$ be an extended CMV matrix. We have
\begin{equation} \label{eq:contourint}
\left\langle \delta_n, \E^k \delta_{-1} \right\rangle
=
-\frac{1}{2\pi i}
\int_{\Gamma} \! z^k \left\langle \delta_n, (\E - z)^{-1} \delta_{-1} \right\rangle \, dz
\end{equation}
for every $k \in \Z$, all $n \in \Z$, and every positively oriented contour $\Gamma$ in $\C$ which encloses $\overline{\D}$.
\end{lemma}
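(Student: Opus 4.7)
The plan is to recognize the identity as an instance of the Riesz holomorphic functional calculus applied to $f(z)=z^k$ and the unitary operator $\E$. Since $\E$ is unitary, $\sigma(\E)\subseteq\partial\D$, so the resolvent $R(z) := (\E-z)^{-1}$ is operator-norm analytic on $\C\setminus\sigma(\E)$, and in particular on a neighborhood of the annular region bounded on the outside by $\Gamma$ and on the inside by a large circle $|z|=R$, $R>1$. For $k\ge 0$, $z^k$ is entire, so $z^kR(z)$ is analytic on $\C\setminus\sigma(\E)$, and Cauchy's theorem for operator-valued analytic functions lets me deform $\Gamma$ to the circle $|z|=R$ without changing the integral. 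I will present the argument for $k\ge 0$, which is the range actually used in the sequel (the time index $k$ in $a_\psi(n,k)$ is non-negative).

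On $|z|=R>1$, the Neumann series
\[
R(z) = -z^{-1}(I - z^{-1}\E)^{-1} = -\sum_{j\ge 0}\E^j z^{-j-1}
\]
converges in operator norm, uniformly in $z$ on the circle, because $\|z^{-1}\E\|=R^{-1}<1$. Uniform convergence justifies swapping sum and integral, reducing matters to the scalar identity $\oint_{|z|=R}z^{k-j-1}\,dz = 2\pi i\,\delta_{j,k}$. Only the index $j=k$ contributes (this is precisely where $k\ge 0$ is used, so that the non-negative index $j$ can equal $k$), giving the operator identity
\[
\E^k \;=\; -\frac{1}{2\pi i}\int_\Gamma z^k(\E-z)^{-1}\,dz.
\]

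The last step is to pair with $\delta_n$ on the left and $\delta_{-1}$ on the right. Because $T\mapsto\langle\delta_n,T\delta_{-1}\rangle$ is a bounded linear functional on the space of bounded operators on $\ell^2(\Z)$, it commutes with the operator-valued integral, producing the scalar identity stated in the lemma. I do not expect a serious obstacle here: the three ingredients (Cauchy's theorem in the operator-valued setting, operator-norm convergence of the Neumann series, and continuity of matrix-element functionals) are each standard, and the whole proof is essentially a specialization of the holomorphic functional calculus to $f(z)=z^k$. The only point worth noting is that $\Gamma$ enclosing $\overline\D$ also encloses $0$, which is harmless for $k\ge 0$ since $z^k$ is then entire and so the extra interior point contributes nothing.
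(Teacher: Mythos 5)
Your argument is correct and is essentially the same as the paper's: the paper simply cites the Dunford holomorphic functional calculus for normal operators, while you verify the relevant instance $f(z)=z^k$ directly by deforming to a large circle and expanding $(\E-z)^{-1}$ in its Neumann series. The unpacked version you give is a standard way of proving the functional-calculus identity for polynomials, so there is no real methodological divergence.

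One remark worth preserving: your restriction to $k\ge 0$ is not merely a convenience — the statement as printed, ``for every $k\in\Z$,'' is actually false for $k<0$ with $\Gamma$ enclosing $\overline{\D}$. Your own computation shows this: for $k<0$ no index $j\ge 0$ in the Neumann series hits $j=k$, so the right-hand side vanishes while $\E^k\ne 0$. Equivalently, $z^k$ has a pole at $0$ which the contour $\Gamma$ surrounds, so the Cauchy integral formula for the functional calculus would require a second, negatively oriented contour around the origin to produce $\E^k$. Since the lemma is only invoked in the paper with $k\in\Z_+$ (the contour $\{\exp(k^{-1}+it)\}$ in the proof of Lemma~\ref{l:outbound:no:avg} only lies outside $\overline{\D}$ when $k>0$), this does not affect the paper's subsequent arguments, but the hypothesis of the lemma should read $k\ge 0$.
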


\begin{proof}
This follows from the Dunford functional calculus for normal operators. See \cite{Dun43, DunfordSchwartz1, ReedSimon1}.
\end{proof}

\begin{lemma} \label{l:outbound:no:avg}
Let $\E$ be an extended CMV matrix, and consider the unaveraged left and right probabilities $P_r$ and $P_l$. We have
\begin{align}
\label{eq:rightbd:no:avg}
P_r(N,k)
& \lesssim
\int_0^{2 \pi}  \sum_{n > N} \left| \left\langle \delta_n,
\left( \E - e^{1/k + i \theta} \right)^{-1} \delta_{-1}
\right\rangle \right|^2 \frac{d\theta}{2\pi} \\
\label{eq:leftbd:no:avg}
P_l(N,k)
& \lesssim
\int_0^{2 \pi}  \sum_{n < -N} \left| \left\langle \delta_n,
\left( \E - e^{1/k + i \theta} \right)^{-1} \delta_{-1}
\right\rangle \right|^2 \frac{d\theta}{2\pi}
\end{align}
for all $k \in \Z \setminus \{ 0 \}$.
\end{lemma}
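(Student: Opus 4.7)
The plan is to specialize the Dunford-calculus formula from Lemma~\ref{l:contourint} to a contour at distance $1/k$ from $\partial \D$ and then convert the resulting integral representation into an $L^2$ bound via Cauchy--Schwarz. Focusing first on $k \in \Z_+$, I would apply Lemma~\ref{l:contourint} with $\Gamma$ the positively oriented circle of radius $e^{1/k}$, parametrized by $z(\theta) = e^{1/k + i\theta}$ for $\theta \in [0, 2\pi)$. Since $dz = i z \, d\theta$, the formula becomes
\[
\langle \delta_n, \E^k \delta_{-1} \rangle
= -\frac{1}{2\pi}\int_0^{2\pi} z(\theta)^{k+1} \, \langle \delta_n, (\E - z(\theta))^{-1} \delta_{-1} \rangle \, d\theta,
\]
so the matrix element of $\E^k$ is expressed as a weighted average of resolvent matrix elements at the very spectral parameters appearing in the target inequalities.

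Next, I would take absolute values, observe that $|z(\theta)^{k+1}| = e^{(k+1)/k} \le e^2$ holds uniformly in $k \ge 1$, and apply the Cauchy--Schwarz inequality in the $\theta$ variable to obtain the pointwise estimate
\[
a_{\delta_{-1}}(n, k) = |\langle \delta_n, \E^k \delta_{-1} \rangle|^2
\le e^4 \int_0^{2\pi} \left| \langle \delta_n, (\E - e^{1/k + i\theta})^{-1} \delta_{-1} \rangle \right|^2 \frac{d\theta}{2\pi}.
\]
Summing this bound over $n > N$ (respectively $n < -N$) and interchanging sum and integral by Tonelli's theorem yields \eqref{eq:rightbd:no:avg} (respectively \eqref{eq:leftbd:no:avg}) with an absolute implicit constant (at most $e^4$).

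For the case $k \le -1$, essentially the same argument applies. One clean option is to use $\E^* = \E^{-1}$ to rewrite $\E^k = (\E^*)^{|k|}$, apply Lemma~\ref{l:contourint} to the unitary $\E^*$ with positive exponent $|k|$, and then pass back to the resolvent of $\E$ via the identity $(\E^* - z)^{-1} = ((\E - \bar z)^{-1})^*$; a change of variables $\theta \mapsto -\theta$ absorbs the conjugation and leaves the shape of the bound unchanged. Alternatively, one can observe that the expression $\sum_{n > N} |\langle \delta_n, (\E - z)^{-1} \delta_{-1} \rangle|^2$ is invariant under $z \mapsto \bar z$ in the relevant symmetric sense needed for the negative-$k$ case.

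The main obstacle is really just bookkeeping: Lemma~\ref{l:contourint} has already done the heavy lifting of setting up the Dunford functional calculus, so the only delicate choices are the contour radius $e^{1/k}$ (which is dictated by the spectral parameter appearing on the right-hand side of \eqref{eq:rightbd:no:avg}--\eqref{eq:leftbd:no:avg}) and the Cauchy--Schwarz step, which is lossless up to the universal constant $e^4$ because of the uniform bound on $|z|^{k+1}$. No further structure of $\E$ beyond unitarity is used.
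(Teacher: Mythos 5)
Your argument for $k \ge 1$ is correct and is essentially the paper's argument: parametrize the circle $|z| = e^{1/k}$, bound the factor $|z(\theta)^{k+1}| \le e^2$ uniformly, and apply Cauchy--Schwarz (the paper invokes Jensen's inequality, which here is the same estimate). This is the only case actually used downstream, since Theorem~\ref{t:dt08} is stated for $k \in \Z_+$.

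Your handling of $k \le -1$, however, has a subtle gap. If you apply Lemma~\ref{l:contourint} to $\E^*$ and then pass back to $\E$ via $(\E^* - w)^{-1} = \big((\E - \bar w)^{-1}\big)^*$, you end up with
\[
\langle \delta_n, (\E^* - w)^{-1}\delta_{-1}\rangle
=
\overline{\langle \delta_{-1}, (\E - \bar w)^{-1}\delta_n\rangle},
\]
which is the $(-1,n)$ matrix element of the resolvent, not the $(n,-1)$ element that appears in \eqref{eq:rightbd:no:avg}--\eqref{eq:leftbd:no:avg}; these are generally different for a non-self-adjoint $\E$. Your alternative observation, that the sum is invariant under $z \mapsto \bar z$, also does not hold in general, since $\overline{\langle\delta_n, (\E-z)^{-1}\delta_{-1}\rangle} = \langle\delta_{-1}, (\E^* - \bar z)^{-1}\delta_n\rangle$ involves $\E^*$, not $\E$. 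A cleaner route for $k \le -1$ is to bypass Lemma~\ref{l:contourint} entirely and use Cauchy's formula on a contour inside $\D$: for $|z|<1$ the resolvent has the expansion $(\E-z)^{-1} = \sum_{j\ge 0} z^j \E^{-j-1}$, so $\langle\delta_n, \E^k\delta_{-1}\rangle$ is the $(-k-1)$-th Taylor coefficient, and integrating over $|z| = e^{1/k}$ gives exactly the needed bound with the same $e^4$ constant. It is worth noting that the paper's own proof silently has the same issue: its contour $\{\exp(k^{-1}+it)\}$ does not enclose $\overline\D$ when $k < 0$, so the cited Lemma~\ref{l:contourint} (whose statement requires such a contour) does not literally apply there, and one would need the Dunford-calculus contour for $f(z) = z^k$, $k<0$, to avoid the singularity at $z=0$ (an annular pair of contours, not a single large circle).
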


\begin{proof}

Consider the contour
$$
\Gamma
=
\left\{ \exp\left(k^{-1} + it\right) : 0 \leq t \leq 2 \pi \right\}.
$$
We apply Lemma~\ref{l:contourint} and Jensen's inequality to get
\begin{align*}
P_r(N,k)
& =
\sum_{n > N} a(n,k) \\
& =
\sum_{n > N} \left| \left\langle \delta_n, \E^k \delta_{-1} \right\rangle \right|^2 \\
& \lesssim
\sum_{n > N} \int_0^{2 \pi}  \left| \left\langle \delta_n,
\left( \E - e^{1/k + i \theta} \right)^{-1} \delta_{-1}
\right\rangle \right|^2 \frac{d\theta}{2\pi},
\end{align*}
which proves \eqref{eq:rightbd:no:avg}. The proof of \eqref{eq:leftbd:no:avg} is nearly identical.

\end{proof}

With these pieces in place, we are now able to prove Theorem~\ref{t:dt08} very easily.

\begin{proof}[Proof of Theorem~\ref{t:dt08}]
Follow the proof of Theorem~\ref{DT07.thm7}, but replace \eqref{Pr} and \eqref{Pl} with Lemma~\ref{l:outbound:no:avg}. More specfically, we have
\begin{align*}
P_r(N,k)
& \lesssim
\int_0^{2 \pi} M_r\left(N, e^{1/k + i \theta} \right) \, \frac{d\theta}{2\pi} \\
& \lesssim
k^4  \int_0^{2\pi} \!
\left(
\max_{0 \leq n \leq N} \left\|Z\left(n, 0; e^{1/k + i \theta}\right) \right\|^2
\right)^{-1} \, \frac{d \theta}{2 \pi}.
\end{align*}
The first inequality is \eqref{eq:rightbd:no:avg}, and the second is Lemma~\ref{DT07.Lemma3}. The bound on $P_l$ works similarly.
\end{proof}

We can use this to prove the bound on the $\beta$'s from Theorem~\ref{DT07.thm1}.

\begin{proof}[Proof of Theorem~\ref{DT07.thm1}]

For $\psi = \delta_{-1}$, this follows immediately from Theorem~\ref{t:dt08}. More precisely, the assumption \eqref{RightInequality} together with Theorem~\ref{t:dt08} implies that $P_r(Ck^\gamma,k) + P_l(Ck^\gamma,k)$ goes to zero as $k \to \infty$ faster than $k^{-s}$ for any $s > 0$. Enlarging $C$ if necessary, we may assume that $a_\psi(n,k) = 0$ whenever $|n| > Ck+C$. This implies
\begin{align*}
|X|^p_\psi(k)
& =
\sum_{n \in \Z} \left(|n|^p + 1\right) a_\psi(n,k) \\
& =
\sum_{|n| \le Ck^\gamma}\left(|n|^p + 1\right) a_\psi(n,k)
+
\sum_{Ck^\gamma < |n| \le Ck+C} \left(|n|^p + 1\right)  a_\psi(n,k) \\
& \le
Ck^{\gamma p} + 1 + (Ck+C)^p\left( P_r(Ck^\gamma,k) + P_l(Ck^\gamma,k) \right).
\end{align*}
for all $k \in \Z_+$ and $p > 0$. Consequently, we obtain the theorem for $\psi = \delta_{-1}$. For $\psi = \delta_n$ with $n \in \Z$, the theorem follows by applying the case $\psi = \delta_{-1}$ to the CMV operator $S^{n+1} \E S^{-n-1}$, where $S: \ell^2(\Z) \to \ell^2(\Z)$ denotes the left shift. The result for finitely supported $\psi$'s is then obvious.

\end{proof}

\part{Examples}

\section{Delocalization for Quantum Walks in Polymers}\label{sec.6}

\subsection{Setting}

We describe a quantum walk analog of polymer models; compare \cite{JSS}. Fix a finite subset $\A \subseteq \mathrm{U}(2)$, and let $\A^* = \bigcup_{j=0}^\infty \A^j$ denote the free monoid over $\A$, that is, the set of all finite words which can be obtained by concatenating elements of $\A$. A \emph{polymer model} is given as soon as one chooses $n \ge 2$ and elements $u_1,\ldots,u_n \in \A^*$, which we call the \emph{basic chains} of the model. We can then define the family of quantum walks generated by these basic chains to be those whose sequences of coins may be obtained by concatenating elements of $\{u_1,\ldots,u_n\}$. More formally, let $\Omega = \{1,\ldots,n\}^{\Z}$, and, for each $\omega \in \Omega$, let $U_{\omega}$ be a quantum walk obtained by concatenating $\ldots,u_{\omega_{-1}}, u_{\omega_0}, u_{\omega_1},\ldots$ with the position of the origin normalized to be at the start of $u_{\omega_0}$. For each $j$ let $T_j(z)$ denote the Szeg\H{o} transfer matrix across the coin sequence $u_j$ at spectral parameter $z$. We say that a spectral parameter $z \in \partial \D$ is \emph{critical} for the random polymer model if the following three conditions hold:
\begin{enumerate}
\item $|\tr(T_j(z))| \le 2$ for every $1 \le j \le n$.
\item $T_j(z) T_k(z) = T_k(z) T_j(z)$ for all $1 \le j,k \le n$.
\item If $|\tr(T_j(z))| = 2$, then $T_j(z) = \pm I$.
\end{enumerate}

We distinguish such spectral parameters as ``critical'' in analogy with the critical energies for the Schr\"odinger equation with random polymer potential; in the context of the Schr\"odinger equation, these correspond to energies at which localization lengths diverge and Anderson Localization breaks down \cite{DT03},  \cite{DeBievreGerminet}, \cite{DunlapWuPhilips}, \cite{JSS}.
At each of these critical values, it is easy to produce a uniform bound on the associated transfer matrices; by general arguments \cite{DFLY2}, this ensures that the critical energy is in the spectrum.

Similarly, in our CMV operator setting any critical spectral parameter is contained in the spectrum of every operator of the form $\E_{\omega}$ with $\omega \in \Omega$.

\begin{ex}
It is not too hard to construct polymer models with critical spectral parameters. For example, notice that
$$
S(\alpha, 1) S(0,1) S(-\alpha,  1) S(0,1)
=
I
$$
for all $\alpha \in \D$. In particular, $z = 1$ is a critical spectral parameter for any random polymer model with basic chains of the form $(R_\theta,R_{-\theta})$ with $\theta \in \left( -\frac{\pi}{2}, \frac{\pi}{2} \right)$, where $R_\theta$ denotes the rotation matrix defined by
\[
R_\theta
=
\begin{pmatrix}
\cos \theta & -\sin \theta \\
\sin \theta &  \cos \theta
\end{pmatrix}.
\]

\bigskip

In Section~\ref{sec.8}, we will study the a quantum walk modelled on the Thue--Morse subshift. This can also be incorporated into the framework of polymer models, and we will see that this model has many critical spectral parameters.
\end{ex}

\begin{theorem} \label{t:poly:deloc}
Given a polymer model with a critical spectral parameter, one has $\widetilde\beta^\pm_{\delta_0}(p) \geq 1 - \frac{1}{p}$.
\end{theorem}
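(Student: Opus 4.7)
The plan is to invoke Theorem~\ref{t:dt:powerlaw} with the one-point set $A_R \equiv \{z_0\}$ for every $R$ and with exponent $\gamma = 0$, where $z_0 \in \partial\D$ is the critical spectral parameter. Under these choices, $R_K = K$, and $B_K$ is a $(1/K)$-arc around $z_0$ in $\partial\D$, so its normalized Lebesgue measure satisfies $|B_K| \asymp K^{-1}$. The conclusion \eqref{eq:momentbound} then reads $\langle|X|^p_{\delta_0}\rangle(K) \gtrsim K^{p-1}$, and passing to $\liminf$ of $\log\langle|X|^p_{\delta_0}\rangle(K)/(p\log K)$ gives $\widetilde\beta^{-}_{\delta_0}(p) \geq (p-1)/p = 1 - \tfrac{1}{p}$, whence also $\widetilde\beta^{+}_{\delta_0}(p) \geq 1-\tfrac{1}{p}$ since $\widetilde\beta^{+}\geq \widetilde\beta^{-}$. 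The hypothesis $\|\alpha\|_\infty<1$ holds automatically here: the Verblunsky coefficients attached to a polymer satisfy $|\alpha_{2n+1}| = |q_n^{21}|$ for coins $Q_n=(q_n^{ij})$ drawn from the finite set $\A$, and the nondegeneracy assumption $q^{11},q^{22}\neq 0$ together with finiteness of $\A$ forces $\sup_n|q_n^{21}|<1$.

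All the content therefore lies in verifying the uniform bound $\sup_{n,m\in\Z}\|T(n,m;z_0)\|<\infty$. The first step is to renormalize the basic-chain transfer matrices by setting $\widetilde T_j := z_0^{-\ell_j/2}\, T_j(z_0)$, where $\ell_j$ is the length of the chain $u_j$; since $|z_0|=1$ this preserves operator norms, while producing matrices of determinant $1$ with the same trace modulus as $T_j(z_0)$. Using the standard fact that $z^{-1/2}S(\alpha,z)$ lies in $\mathrm{SU}(1,1)$ for $z\in\partial\D$, each $\widetilde T_j$ belongs to $\mathrm{SU}(1,1)$, and the three critical-parameter conditions force each $\widetilde T_j$ to be either elliptic ($|\tr|<2$) or equal to $\pm I$. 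The commutativity relations $T_j(z_0)T_k(z_0)=T_k(z_0)T_j(z_0)$ transfer verbatim to the $\widetilde T_j$, and the classical structure theory of $\mathrm{SU}(1,1)\cong\mathrm{SL}(2,\R)$ says that pairwise commuting non-central elliptic elements lie in a common maximal compact torus. Consequently $\{\widetilde T_j\}$ generates a relatively compact subgroup of $\mathrm{SU}(1,1)$, so any word in the $\widetilde T_j$ is uniformly bounded in operator norm.

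An arbitrary Szeg\H o transfer matrix $T(n,m;z_0)$ then factors as a unimodular scalar times an ordered product of $\widetilde T_j$'s, multiplied at each end by a ``partial-chain'' Szeg\H o matrix accounting for endpoints landing strictly inside some $u_{\omega_k}$. Because $\A$ and $\{u_1,\ldots,u_n\}$ are both finite, only finitely many such partial-chain matrices can arise, and they contribute only a bounded multiplicative factor. Combining this with the boundedness from the previous paragraph yields $\sup_{n,m\in\Z}\|T(n,m;z_0)\|<\infty$, which is \eqref{eq:szego:powerbounds} with $\gamma=0$, and Theorem~\ref{t:dt:powerlaw} then closes the argument. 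The only genuinely delicate step is the structural observation about commuting elliptic elements of $\mathrm{SU}(1,1)$ and, in particular, handling the mixed case in which some $T_j(z_0)=\pm I$ while others are genuinely elliptic; once one simultaneously diagonalizes the non-central elliptic pieces in a common basis, this reduces to a short computation with diagonal unitary matrices, and the rest is bookkeeping.
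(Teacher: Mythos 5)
Your proof follows the same route as the paper's: apply Theorem~\ref{t:dt:powerlaw} with $A_R \equiv \{z_0\}$ and $\gamma = 0$ to obtain $\left\langle |X|^p_{\delta_0}\right\rangle(K) \gtrsim |B_K| K^p \gtrsim K^{p-1}$, and read off $\widetilde\beta^\pm_{\delta_0}(p) \ge 1 - \tfrac{1}{p}$. The paper simply asserts that criticality yields the uniform Szeg\H{o} cocycle bound, so your verification via $\mathrm{SU}(1,1)$ structure theory (commuting non-central elliptic elements share a maximal compact torus, plus finiteness of the partial-chain matrices) is the correct argument filling in a step the paper leaves implicit; the only quibble is that the renormalizing power should be $z_0^{-\ell_j}$ when $\ell_j$ counts coins, since each coin contributes two Szeg\H{o} factors, each of determinant $z_0$.
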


\begin{proof}
Let $z_0$ denote a critical parameter, and put $A_R = \{z_0\}$ for all $R>0$. Using the definition of criticality, we can choose $C_0 > 0$ so that \eqref{eq:szego:powerbounds} holds for all $R>0$ with $\gamma = 0$. Since $|B_K| = 2 K^{-1}$, we deduce $\left\langle |X|^p_{\delta_0} \right\rangle(K) \gtrsim K^{p - 1}$ from \eqref{eq:momentbound}. Thus, we get $\widetilde\beta^-_{\delta_0}(p) \ge 1 - \frac{1}{p}$.
\end{proof}

\section{The Fibonacci Quantum Walk} \label{sec:fqw}

\subsection{Setting}

In this section, we will discuss a time-homogeneous quantum walk on $\Z$ with two coins, whose distribution on the line is modulated by the Fibonacci subshift. We will apply both of our general theorems to deduce upper and lower bounds on the corresponding quantum walk. This is by far the most substantial application of our methods; it will require the most effort in order to verify the inputs needed to apply the general theorems. The key observation here is that one can restrict the range of parameters in such a way that the spectra of the canonical periodic approximants exhibit the same combinatorial structure that one sees in the Schr\"odinger case; compare \cite{Raymond}. Since our goal is to demonstrate the use of the transfer matrix method in quantum dynamics, we will freely make various simplifications throughout this section in order to increase clarity (without removing any of the essential obstacles). In particular, we do not attempt to locate optimal parameter ranges, nor do we attempt to find optimal constants in our proofs.

First, we recall how the Fibonacci sequence and subshift are generated. Consider an alphabet $\A$ with two symbols, $\A = \{a,b\}$, and let $\A^*$ denote the free monoid generated by $\A$, i.e., the set of all finite words obtained by concatenating elements of $\A$. The Fibonacci substitution $S$ sends $a$ to $ab$ and $b$ to $a$. Naturally, $S$ can be extended by concatenation to maps $\A^* \to \A^*$ and $\A^{\Z_+} \to \A^{\Z_+}$ which we will also denote by $S$. The \emph{Fibonacci sequence}, which we denote by $u_\Fib$, is the unique element of $\A^{\Z_+}$ that is invariant under $S$. It is the limit of the words $s_k := S^k(a)$ as $k \to \infty$ in an obvious sense. More precisely, with $s_0 = a$, $s_1 =ab$, $s_2= aba$, etc., we see that $s_k$ is a prefix of $s_{k+1}$ for all $k$, so we define $u_\Fib$ to be the unique sequence in $\A^{\Z_+}$ which has $s_k$ as a prefix for every $n \ge 0$:
$$
u_\Fib
=
abaababaabaab \ldots.
$$
The \emph{Fibonacci subshift} $\Omega_\Fib$ is then defined to be the dynamical hull of $u_\Fib$ in $\A^{\Z}$, the space of two-sided sequences over $\A$; more precisely,
$$
\Omega_\Fib
=
\left\{
\omega \in \A^{\Z} : \text{every finite subword of $\omega$ occurs in } u_\Fib
\right\}.
$$
To generate a quantum walk in a Fibonacci environment, choose $\theta_a , \theta_b \in \left(-\frac{\pi}{2}, \frac{\pi}{2} \right)$, and consider the rotations
$$
R_a
=
R_{\theta_a}
=
\begin{pmatrix}
\cos \theta_a & -\sin \theta_a \\
\sin \theta_a & \cos \theta_a
\end{pmatrix},
\quad
R_b
=
R_{\theta_b}
=
\begin{pmatrix}
\cos \theta_b & -\sin \theta_b \\
\sin \theta_b & \cos \theta_b
\end{pmatrix}.
$$
Given $\omega \in \Omega_\Fib$, we associate a sequence of coins $\{ Q_{\omega,n} \}_{n \in  \Z}$ via $Q_{\omega,n} = R_{\omega_n}$. The associated unitary operator will be denoted by $U_\omega$. Inspecting \eqref{e.correspondence} one sees that $U_\omega$ already has the form of an extended CMV matrix with Verblunsky coefficients
$$
\alpha_{2n+1}
=
\sin(\theta_{\omega_n}),
\quad
\alpha_{2n}
=
0, \text{ for all } n \in \Z.
$$
We will therefore denote $U_\omega$ by $\E_\omega$ to emphasize this fact. Notice also that we must avoid $\pm\pi/2$, for if either $\theta_a$ or $\theta_b$ is $\pm\pi/2$, then we have $|\alpha_{2n + 1}| = 1$ for infinitely many $n \in \Z$, so $\E_\omega$ decouples into a direct sum of finite blocks, which means that the dynamics are trivially localized. By a standard argument using minimality and strong operator approximation, there is a uniform compact set $\Sigma = \Sigma(\theta_a,\theta_b) \subseteq \partial \D$ such that $\Sigma = \sigma(\E_\omega)$ for every $\omega \in \Omega_\Fib$. Specifically, if $\omega_1, \omega_2 \in \Omega_\Fib$, then $\omega_1$ may be approximated by shifts of $\omega_2$ in the product topology, and thus, $\E_{\omega_1}$ may be approximated in the strong operator topology by operators which are unitarily equivalent to $\E_{\omega_2}$. One may then apply standard results (e.g.\ \cite[Theorem~VIII.24]{ReedSimon1}) to deduce $\sigma(\E_{\omega_1}) \subseteq \sigma(\E_{\omega_2})$; equality of spectra follows by symmetry (one may exchange the roles of $\omega_1$ and $\omega_2$ in the foregoing argument).

 To avoid cluttering the notation, we will suppress the dependence of the various objects in this section on the choice of $\theta_a$ and $\theta_b$.

We consider the initial state $\psi = \delta_0$ and study the spreading in space of $\mathcal{E}_\omega^n \psi$ as $|n| \to \infty$ with respect to this basis. For the sake of clarity and simplicity, we will focus on the dynamics generated by the element $\omega_0 \in \Omega_\Fib$ obtained by applying $S^2$ iteratively to the germ $b|a$, where the vertical bar separates sites $-1$ and $0$. We may write $\omega_0 = u_\Fib^R ab| u_\Fib$, where $u^R$ denotes the reversal of the word $u$. Thus, $\omega_0$ agrees with $u_\Fib$ on the nonnegative semiaxis, and it has the reflection symmetry
\begin{equation} \label{eq:fib:refl}
\omega_0(-n)
=
\omega_0(n-3)
\text{ for all } n \in\Z\setminus\{1,2\}.
\end{equation}
Equivalently, one may write
\begin{equation} \label{eq:omega0}
\omega_0(n)
=
\begin{cases}
a & \left\{ (n+1) \varphi^{-1} \right\} \in \left[ 1 - \varphi^{-1}, 1\right) \\
b & \left\{ (n+1) \varphi^{-1} \right\} \in \left[ 0, 1 - \varphi^{-1} \right)
\end{cases},
\end{equation}
where $\{x\} = x - \lfloor x \rfloor$ denotes the fractional part of $x \in \R$ and $\varphi = \frac{\sqrt 5 + 1}{2}$ denotes the golden ratio. One can deduce \eqref{eq:omega0} from \cite[Lemma~1]{BIST}, for example.

The structure of the remainder of the section is as follows: In Subsection~\ref{ssec:trace}, we describe the trace-map formalism for the operators $\E_\omega$, and we state our dynamical bounds precisely in terms of quantities associated to the trace map. Subsection~\ref{ssec:fibproof} contains a complete proof of the main theorem, modulo a pair of lemmas which contain some technical estimates on the trace map and on the growth of the Gesztesy--Zinchenko cocycle for this model. Subsection~\ref{ssec:fbands} works out a version of Raymond's combinatorial analysis of the spectrum in this setting. Finally, in Subsection~\ref{ssec:fibbounds}, we prove the technical estimates from Lemmas~\ref{l:dtkklbounds} and Lemma~\ref{l:dt:it}.

\subsection{Trace-Map Formalism and Fricke-Vogt Invariant} \label{ssec:trace}

In this section, we fix $\alpha = \alpha_{\omega_0}$ and $\E_0 = \E_{\alpha_{\omega_0}}$. The renormalization map for the Fibonacci sequence will play a key role in our analysis. More concretely, define
$$
M_n(z)
=
Z(2F_n , 0 ; z),
\; n \geq 0, \; z \in \C \setminus \{0\},
$$
where $Z$ denotes the Gesztesy--Zinchenko cocycle as before, and $F_n$ denotes the $n$th Fibonacci number, normalized by $F_{-1} = F_0 = 1$. We also define
$$
M_{-1}(z)
=
\sec\theta_b
\begin{pmatrix}
z & - \sin\theta_b \\
-\sin\theta_b & z^{-1}
\end{pmatrix}, \;
z \in \C \setminus \{0\}.
$$
The hierarchical structure of the Fibonacci substitution word produces a recursive relationship amongst these matrices, viz.
\begin{equation} \label{eq:transmat:renorm}
M_{n+1}(z)
=
M_{n-1}(z) M_n(z)
\text{ for all } n \ge 0 \text{ and all } z \in \C.
\end{equation}
This recursion leads to a number of nice consequences, which are well-known and not hard to prove; for the corresponding statements and proofs in the context of the discrete Schr\"odinger equation, consult \cite[Proposition~1(ii)]{Suto1987}. By the Cayley-Hamilton Theorem, \eqref{eq:transmat:renorm} leads to a recursion amongst the half-traces $x_n := \frac{1}{2} \tr(M_n(z))$, namely,
\begin{equation} \label{eq:trace:renorm}
x_{n+2}
=
2 x_n x_{n+1} - x_{n-1}
\text{ for all } n \ge 0.
\end{equation}
As a consequence of this recursion, the traces have a first integral given by the so-called Fricke--Vogt invariant. More precisely, if we define
$$
I(u,v,w)
=
u^2 + v^2 + w^2 - 2uvw - 1,
$$
then
\begin{equation} \label{eq:trace:inv}
I(x_{n+1}, x_n, x_{n-1})
=
I(x_{n+2}, x_{n+1}, x_n)
\text{ for all } n \ge 0.
\end{equation}
Consequently, by abusing notation, we will write $I(z) := I(x_{k+1}(z), x_k(z), x_{k-1}(z))$ for $k \ge 0$ and $z \in \C$. Evidently, $M_k(z)$ is the monodromy matrix for a periodic operator whose sequence of coins is defined by the sequence $\omega_k \in \A^{\Z}$ which repeats $s_k$ periodically. By Floquet theory, the spectra of these periodic operators are given by
\begin{equation}\label{sigma_k}
\sigma_k
:=
\{ z \in \partial \D : |x_k(z)| \le 1 \},
\;
k \ge -1.
\end{equation}
For more details on Floquet theory and the spectral characteristics of CMV matrices with periodic coefficients, the reader is referred to \cite[Section~11.2]{S2}. In this setting, the role of the coupling constant may be played by
\begin{equation} \label{eq:coup:def}
\mu
=
\mu(\theta_a,\theta_b)
:=
\inf_{k\ge-1} \min_{z \in \sigma_k} I(z).
\end{equation}
Let us also define
\begin{equation}\label{eq:fqw:kappadef}
\kappa
=
\kappa(\theta_a,\theta_b)
=
|\sec\theta_a\tan\theta_b - \tan\theta_a\sec\theta_b|
=
|\sec\theta_a||\sec\theta_b||\sin\theta_a - \sin\theta_b|.
\end{equation}
We can now state our main theorem, which describes the dynamics defined by $\E_0$ in the quantum walk analog of the large coupling limit.

\begin{theorem} \label{t:fqw:larginv}
Let $\pi/4 < \theta_b < \pi/2$ be given. There exist constants $m = m(\theta_b)$, $M = M(\theta_b)$, and $\lambda = \lambda(\theta_b)$ such that if $\theta_b < \theta_a < \pi/2$ with $\mu \ge \lambda$, one has
\begin{equation} \label{eq:fqwbetas}
\frac{1}{1+\tau} - \frac{3\tau+\eta}{p(1+\tau)}
=
\frac{p-3\tau-\eta}{p(1+\tau)}
\leq
\widetilde\beta^-_{\delta_0}(p)
\leq
\beta^+_{\delta_0}(p)
\leq
\frac{2\log\varphi}{\log\xi},
\end{equation}
where
\begin{equation}\label{def:zetaeta}
\xi = m\sqrt\mu,
\quad
\Xi = M\sqrt\mu,
\quad
\eta
=
\frac{\log\Xi}{\log\varphi} - 1,
\quad
\tau
=
\frac{2\log\left((\kappa + 2)(2\kappa + 5)^2\right)}{\log\varphi},
\end{equation}
and $\kappa$ is as in \eqref{eq:fqw:kappadef}. In particular, $\beta^+_{\delta_0}(p)$ goes to zero at least as fast as $\text{constant}/\log\sqrt\mu$ as $\mu \to \infty$.
\end{theorem}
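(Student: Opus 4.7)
The plan is to derive both bounds in \eqref{eq:fqwbetas} from the two general dynamical theorems of Part~I: for the lower bound on $\widetilde{\beta}^-_{\delta_0}(p)$ I will apply Theorem~\ref{t:dt:powerlaw} with growth exponent $\gamma = \tau$, while for the upper bound on $\beta^+_{\delta_0}(p)$ I will apply Theorem~\ref{DT07.thm1} with $\gamma = 2\log\varphi/\log\xi$. The substantive content lies in extracting the required cocycle estimates from the trace-map recursion \eqref{eq:transmat:renorm}--\eqref{eq:trace:inv} and the Fricke--Vogt invariant \eqref{eq:trace:inv}, in the large-coupling regime $\mu \geq \lambda$. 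These estimates will be packaged into two technical lemmas (to be referenced as Lemmas~\ref{l:dtkklbounds} and~\ref{l:dt:it} and proved in Subsection~\ref{ssec:fibbounds}) together with a combinatorial band analysis carried out in Subsection~\ref{ssec:fbands}.

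For the upper bound I would take as input a pointwise-in-$\theta$ lower bound on the Gesztesy--Zinchenko cocycle at Fibonacci scales: for every $z = e^{1/K + i\theta}$ with $K$ large there exists some index $k$ with $2F_k \lesssim K^{2\log\varphi/\log\xi}$ such that $\|M_k(z)\| \geq \xi^{F_k}$. The mechanism is the well-known escape dynamics of the Fibonacci trace map: since $|z|>1$, the orbit $\{(x_{k-1}, x_k, x_{k+1})\}_k$ cannot remain bounded, and once any of these three quantities exceeds $1$ the invariant \eqref{eq:trace:inv} forces subsequent half-traces to grow by a factor of order $\Xi$ per Fibonacci step. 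A S\"ut\H{o}-type bound relating $\|M_k\|$ to the three consecutive half-traces then transfers this growth to the cocycle norm. With such a lower bound at hand, the integrand in \eqref{RightInequality} and \eqref{LeftInequality} is at most $\xi^{-2F_k}$, which is super-polynomially small in $K$, so the hypotheses of Theorem~\ref{DT07.thm1} are satisfied for every $m \geq 1$ and the stated upper bound follows. The reflection symmetry \eqref{eq:fib:refl} of $\omega_0$ reduces the left-side estimate to the right-side one.

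For the lower bound I would apply Theorem~\ref{t:dt:powerlaw} with $A_R$ a carefully chosen finite subset of the spectrum $\Sigma$: one sample point per band of the periodic approximant $\sigma_k$, where $k = k(R)$ is the largest index with $F_k \leq R$. Two inputs are needed. First, a polynomial on-spectrum upper bound $\|Z(n,0;z)\| \leq Cn^{\tau/2}$ for $z \in \Sigma$ verifying \eqref{eq:szego:powerbounds} with $\gamma = \tau$ (here one uses \eqref{eq:szego:gz:rel} together with Remark~\ref{rem:shift}(2) to pass freely between the Szeg\H{o} and GZ cocycles); this is obtained by iterating the Cayley--Hamilton consequence of \eqref{eq:transmat:renorm}, noting that on $\Sigma$ each Fibonacci level contributes at most the multiplicative factor $(\kappa+2)(2\kappa+5)^2$ to $\|M_k(z)\|$, and then extending to arbitrary $n$ via the Zeckendorf decomposition. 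Second, I need $|B_K| \geq CK^{-\eta/(1+\tau)}$ for the normalized Lebesgue measure of the $1/K$-neighborhood of $A_{R_K}$; Raymond's combinatorial analysis of the bands of $\sigma_k$ (Subsection~\ref{ssec:fbands}) produces $\sim F_k$ disjoint bands at level $k$, giving $|B_K| \gtrsim F_k/K$ after choosing $k$ to balance the two scales via $R_K = K^{1/(1+\tau)}$ and $F_k \sim R_K$. Plugging these estimates into \eqref{eq:momentbound} yields the claimed bound on $\widetilde\beta^-_{\delta_0}(p)$.

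The main obstacles lie in the preparatory work of Subsections~\ref{ssec:fbands} and~\ref{ssec:fibbounds}, not in the reduction itself. Lemma~\ref{l:dt:it} requires a careful quantitative tracking of when the trace-map orbit transitions from its ``bounded'' to its ``escaping'' regime, with the constant $\xi = m\sqrt\mu$ arising as the effective per-step multiplier; Lemma~\ref{l:dtkklbounds} and the band analysis both rely on the large-coupling hypothesis $\mu \geq \lambda$ to guarantee that the periodic approximants $\sigma_k$ have the expected Raymond-type combinatorial structure, which is where the $\theta_b$-dependent constants $m$, $M$, $\lambda$ enter. Once these lemmas are in place, the application of the abstract dynamical theorems of Part~I is essentially formal, and the final statement that $\beta^+_{\delta_0}(p) \to 0$ at rate $\sim 1/\log\sqrt\mu$ is an immediate consequence of $\log\xi = \log m + \tfrac{1}{2}\log\mu$.
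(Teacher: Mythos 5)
Your high-level roadmap matches the paper's — lower bound via Theorem~\ref{t:dt:powerlaw} with $\gamma=\tau$, upper bound via Theorem~\ref{DT07.thm1} with $\gamma^{-1}=2\log\varphi/\log\xi$, supported by Lemmas~\ref{l:dtkklbounds} and~\ref{l:dt:it} and the band combinatorics, with the reflection symmetry handling the left half-line. However, both of your detailed reductions contain gaps that would prevent you from reaching the exponents stated in \eqref{eq:fqwbetas}.

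For the lower bound, your choice of $A_R$ as a \emph{finite} set of sample points (one per band of $\sigma_k$) yields $|B_K|\lesssim F_k/K\sim K^{-\tau/(1+\tau)}$, and you then assert without derivation that this gives $|B_K|\gtrsim K^{-\eta/(1+\tau)}$. These are not the same: in the regime of the theorem one has $\eta<\tau$ (indeed $\tau\approx 6\eta$ for $\mu$ large, since $\mu\sim\kappa^2$ makes $\eta\approx\log\kappa/\log\varphi$ but $\tau\approx 6\log\kappa/\log\varphi$), so $K^{-\tau/(1+\tau)}\ll K^{-\eta/(1+\tau)}$ and your estimate is strictly too weak — it would only yield $\widetilde\beta^-_{\delta_0}(p)\geq\frac{p-4\tau}{p(1+\tau)}$. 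The paper instead takes $A_R=\sigma_{\ell(R)}$, the full union of $2F_\ell$ nondegenerate bands, so that $B_K\supseteq A_{R_K}$ and $|B_K|\geq|\sigma_\ell|$; the crucial additional ingredient is the derivative \emph{upper} bound $|x_\ell'(z)|\leq\Xi^\ell$ from Lemma~\ref{l:dtkklbounds}, which combined with the Mean Value Theorem gives each band length $\geq 4\Xi^{-\ell}$, hence $|\sigma_\ell|\gtrsim F_\ell\Xi^{-\ell}\sim F_\ell^{-\eta}$. You never invoke the band-length lower bound at all, which is exactly where $\eta$ (rather than $\tau$) enters.

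For the upper bound, the stated claim that there exists $k$ with $2F_k\lesssim K^{2\log\varphi/\log\xi}$ and $\|M_k(z)\|\geq\xi^{F_k}$ is false. Lemma~\ref{l:escape} only guarantees $|x_q(z)|\geq(1+\delta/2)^{F_{q-k_0-1}}$ once escape begins at some index $k_0$; if $k_0\approx n(K)\sim 2\log K/\log\xi$, then at $q=k_0$ one has $|x_q|\geq 1+\delta$, which is $O(1)$, not $\xi^{F_q}$. Taking $N(K)=2F_{n(K)}$ therefore does not yield super-polynomial growth. The paper's fix is to set $N(K)=2F_{n(K)+\sqrt{n(K)}}$: this gives $\|Z(N(K),0;z)\|\gtrsim(1+\delta/2)^{F_{\sqrt{n(K)}-1}}$, super-polynomially large, while costing only an arbitrary $\epsilon>0$ in the exponent $\gamma^{-1}$, which is then sent to zero. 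You also omit the Koebe Distortion Theorem step, which converts the derivative \emph{lower} bound $|x_k'|\geq\xi^{k/2}$ into the containment $\sigma_k^\delta\subseteq B(0,\exp(C_\delta\xi^{-k/2}))$; without this, you cannot pinpoint that escape begins by level $n(K)\sim 2\log K/\log\xi$ uniformly in $\theta$, which is precisely what ties $n(K)$ to the exponent $2\log\varphi/\log\xi$.
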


\begin{remark} A few remarks are in order.

\begin{enumerate}
\item The conclusion of the theorem likely still holds with $\mu = \min_{z \in \Sigma} I(z)$. However, our choice of $\mu$ makes the proofs somewhat easier, so we use \eqref{eq:coup:def} to make the exposition more digestible.
\item Our method of proof for the lower bounds can allow $0 < \theta_b \le \pi /4$, but the proof of upper bounds breaks at $\theta_b = \pi/4$. However, on heuristic grounds, one may still expect that $\beta \searrow 0$ as $\theta_a \nearrow \pi/2$ for other values of $\theta_b$.
\item  For any $k \ge 0$, recall that $s_k = S^k(a)$; additionally, define $s_k' := S^k(b)$. By \cite[Lemma~3.2]{DamanikLenz}, every $\omega \in \Omega_\Fib$ may be decomposed as a concatenation of subfactors of the form $s_k$ and $s_k'$ in a unique way; this is known as the $k$-\emph{partition} of $\omega$. Theorem~\ref{t:fqw:larginv} can be extended to arbitrary elements of $\Omega_{\Fib}$ by analyzing various possibilities for the $k$-partition near the origin, but we will not present the details, since this clutters the presentation without adding substantial new content.
\end{enumerate}

\end{remark}

\subsection{Proof of Theorem~\ref{t:fqw:larginv}} \label{ssec:fibproof}

First, for Theorem~\ref{t:fqw:larginv} to have any interesting content, we need to verify that $\mu$ may indeed be made arbitrarily large by suitably choosing $\theta_a$ and $\theta_b$. To that end, let us begin by computing $I$ as a function of $z$. This is done in \cite{DMY2}, but there are a couple of typos and a sign error, and, most importantly, we have a substantially simpler expression for $I(z)$. For these reasons, we produce a corrected calculation here for the official record.\footnote{Note that \cite{DMY2} uses the Szeg\H{o} transfer matrices $\widetilde M_n(z) = z^{-F_n} T(2F_n,0;z)$, and we use the GZ transfer matrices. In light of \eqref{eq:szego:gz:rel}, this does not affect the value of $x_k(z)$ for $z \in \partial \D$, and therefore does not affect $I(z)$ for such $z$.} We have
\begin{align*}
M_0(z)
& =
\sec\theta_a
\begin{pmatrix}
z & - \sin\theta_a \\
-\sin\theta_a & z^{-1}
\end{pmatrix} \\
M_1(z)
& =
\sec\theta_a \sec\theta_b
\begin{pmatrix}
z^2 + \sin\theta_a \sin\theta_b
& * \\
*
& \sin\theta_a \sin\theta_b + z^{-2}
\end{pmatrix}.
\end{align*}
Thus, for $z \in \partial \D$, we have
\begin{align}
x_{-1}(z)
& =
\mathrm{Re}(z)\sec\theta_b, \quad
x_0(z)
=
\mathrm{Re}(z)\sec\theta_a, \\
\label{eq:fibx1}
x_1(z)
& =
\mathrm{Re}(z^2) \sec\theta_a \sec\theta_b
+
\tan\theta_a \tan\theta_b.
\end{align}
Consequently,
\begin{align*}
I(z)
 & =
\mathrm{Re}(z)^2(\sec^2\theta_a + \sec^2\theta_b)
+
\left( \mathrm{Re}(z^2) \sec\theta_a \sec\theta_b + \tan\theta_a \tan\theta_b \right)^2 \\
&
- 2\mathrm{Re}(z)^2 \sec\theta_a \sec\theta_b \left( \mathrm{Re}(z^2) \sec\theta_a \sec\theta_b + \tan\theta_a \tan\theta_b \right) - 1
\end{align*}
for all $z \in \partial \D$.

\begin{lemma} \label{l:fqw:inv}
For every $z \in \partial \D$, we have
\begin{equation} \label{eq:fqw:inv}
I(z)
=
\kappa^2 ( \mathrm{Im}(z))^2.
\end{equation}
\end{lemma}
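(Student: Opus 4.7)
The plan is to verify the identity by direct computation. I parametrize $z = e^{i\phi}$ on $\partial\D$ and write $c = \cos\phi$, $s = \sin\phi$, so that $\Re(z) = c$, $\Im(z) = s$, and $\Re(z^2) = c^2 - s^2 = 1 - 2s^2$. To keep the algebra tidy I further set $A = \sec\theta_a$, $B = \sec\theta_b$, $a = \tan\theta_a$, $b = \tan\theta_b$, so that the Pythagorean relations $A^2 - a^2 = 1$ and $B^2 - b^2 = 1$ are available. In this notation $x_{-1}(z) = cB$, $x_0(z) = cA$, and $x_1(z) = (1-2s^2)AB + ab$, and $\kappa^2 = (Ab - aB)^2$.

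The next step is to expand
$$I = x_{-1}^2 + x_0^2 + x_1^2 - 2 x_{-1} x_0 x_1 - 1$$
into a polynomial in $s^2$. I collect the two terms $(1-2s^2)^2 A^2B^2$ coming from $x_1^2$ and $-2c^2(1-2s^2)A^2B^2$ coming from $-2x_{-1}x_0 x_1$, factor out $A^2B^2(1-2s^2)$, and use $1 - 2s^2 - 2c^2 = -1$ to reduce them to $-A^2B^2(1-2s^2)$. Similarly, the two cross terms $2(1-2s^2)ABab$ and $-2c^2 ABab$ combine, using $(1-2s^2) - c^2 = -s^2$, to $-2s^2 ABab$. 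After these consolidations I arrive at
\begin{equation*}
I = c^2(A^2 + B^2) + A^2B^2(2s^2 - 1) - 2s^2 ABab + a^2b^2 - 1,
\end{equation*}
which, upon substituting $c^2 = 1 - s^2$, becomes
\begin{equation*}
I = \big(A^2 + B^2 - A^2B^2 + a^2b^2 - 1\big) + s^2\big(2A^2B^2 - A^2 - B^2 - 2ABab\big).
\end{equation*}

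It now remains to simplify each bracket using $A^2 = 1 + a^2$ and $B^2 = 1 + b^2$. For the constant term, $A^2 + B^2 - A^2B^2 = 2 + a^2 + b^2 - (1 + a^2)(1 + b^2) = 1 - a^2 b^2$, so the constant bracket is identically zero; this is the nontrivial cancellation that makes the lemma work. For the $s^2$-coefficient I expand $2A^2B^2 - A^2 - B^2 = a^2 + b^2 + 2a^2b^2$, so the bracket is $a^2 + b^2 + 2a^2b^2 - 2ABab$. Recognizing this as $A^2b^2 + a^2B^2 - 2ABab = (Ab - aB)^2 = \kappa^2$ finishes the proof, yielding $I(z) = \kappa^2 s^2 = \kappa^2 (\Im z)^2$.

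The only real obstacle is bookkeeping — the formula for $I(z)$ displayed just before the lemma is bulky, and making sure the degree-$0$-in-$s^2$ part vanishes requires treating the two contributions $A^2B^2(1-2s^2)^2$ and $-2c^2 A^2B^2(1-2s^2)$ together rather than separately. With that grouping done, each remaining step is a one-line identity in $a,b,A,B$.
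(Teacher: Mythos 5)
Your proof is correct, and it is essentially the same direct-verification approach as the paper's: substitute $x_{-1}, x_0, x_1$ into the Fricke--Vogt invariant and simplify with trigonometric identities. The only difference is bookkeeping — you organize the expansion as a polynomial in $s^2 = (\Im z)^2$ and use the Pythagorean relations in the $\sec^2 - \tan^2 = 1$ form, whereas the paper factors out $\sec^2\theta_a\sec^2\theta_b$ early and works toward $(\sin\theta_a - \sin\theta_b)^2$; both lead to the same $\kappa^2$ by the identity noted in \eqref{eq:fqw:kappadef}.
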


\begin{proof}
Let $z = e^{it} \in \partial \D$ be given. To compactify notation a bit, denote
\begin{align*}
c_a = \cos\theta_a, \quad
\rho_a = \sec\theta_a, \quad
s_a & = \sin\theta_a, \quad
t_a = \tan\theta_a \\
c_b = \cos\theta_b, \quad
\rho_b  = \sec\theta_b, \quad
s_b & = \sin\theta_b, \quad
t_b = \tan\theta_b \\
c = \mathrm{Re}(z) = \cos t, \quad
s = \mathrm{Im}(z) & = \sin t, \quad
c_2 = \mathrm{Re}(z^2) = \cos(2t).
\end{align*}
We have seen above that
$$
I(z)
=
c^2(\rho_a^2 + \rho_b^2)
+ \left(c_2 \rho_a \rho_b + t_a t_b \right)^2
- 2c^2 \rho_a^2 \rho_b^2 \left( c_2 + s_a s_b \right)
- 1.
$$
After liberally using trigonometric identities, one can deduce \eqref{eq:fqw:inv}. The detailed calculation follows:
\begin{align*}
I
& =
c^2(\rho_a^2 + \rho_b^2)
+ \left(c_2 \rho_a \rho_b + t_a t_b \right)^2
- 2c^2 \rho_a^2 \rho_b^2 \left( c_2 + s_a s_b \right)
- 1 \\
& =
\rho_a^2 \rho_b^2
\left( c^2(c_a^2 + c_b^2) + (c_2 + s_a s_b)^2 - 2c^2 c_2 - 2c^2 s_a s_b
\right) - 1 \\
& =
\rho_a^2 \rho_b^2
\left( c^2(c_a^2 + c_b^2) -c_2 + s_a^2 s_b^2 - 2s^2 s_a s_b
\right) - 1 \\
& =
\rho_a^2 \rho_b^2
\left( (1-s^2)(c_a^2 + c_b^2) + 2s^2 - 1 + (1-c_a^2)(1 - c_b^2) - 2s^2 s_a s_b
\right) - 1 \\
& =
\rho_a^2 \rho_b^2 \left(s^2 (s_a^2 + s_b^2) - 2s^2 s_a s_b + c_a^2 c_b^2 \right) - 1 \\
& =
\rho_a^2 \rho_b^2 s^2(s_a - s_b)^2 \\
& =
( \mathrm{Im}(z))^2
\left( \sec\theta_a \tan\theta_b - \tan\theta_a \sec\theta_b \right)^2,
\end{align*}
which proves \eqref{eq:fqw:inv}. The third line is obtained by expanding the binomial square and using the identities $c_2 = 2c^2 - 1 = c^2 - s^2$. The fourth line uses $c_2 = 1-2s^2$ and the Pythagorean identity. The fifth line is simple algebra (and uses the Pythagorean identities some more).
\end{proof}

\begin{remark}
From the expression in \eqref{eq:fqw:inv}, we make several observations.
\begin{enumerate}
\item The invariant is always nonnegative on the spectrum. Indeed, for any $z \in \partial \D$, $I(z) \ge 0$.
\item One cannot make the invariant uniformly large on $\partial \D$ by suitably choosing $\theta_a$, $\theta_b$. Indeed, one always has $I(1) = I(-1) = 0$.
\item For any fixed $\theta_b$, one has
$$
\lim_{\theta_a \nearrow \pi/2}
\kappa(\theta_a,\theta_b)
=
\infty.
$$
Thus, for fixed $\theta_b$, we would like to think of $\theta_a \approx \pi/2$ as the quantum walk analog of the large coupling regime. We will see later that one can force the spectrum away from $\pm 1$ for some parameters, which allows us to make the invariant uniformly large on $\Sigma$.
\item Similarly, we think of $\theta_a \approx \theta_b$ as the ``small coupling regime,'' since we can make $I$ uniformly small on $\partial \D$ by making $\theta_a$ and $\theta_b$ sufficiently close.
\end{enumerate}
\end{remark}

Let us note that we can characterize the uniform spectrum of $\E_\omega$ as the dynamical spectrum, i.e., the set of complex numbers at which the associated trace map has a bounded orbit. More precisely:

\begin{prop}
Fix $\theta_a, \theta_b \in \left( -\frac{\pi}{2}, \frac{\pi}{2} \right)$, let $\Sigma$ denote the uniform spectrum of every $\E_\omega$ with $\omega \in \Omega_\Fib$, and put
$$
B_\infty
=
\left\{
z \in \C : (x_n(z))_{n = -1}^\infty \text{ is a bounded sequence}
\right\}.
$$
Then $\Sigma = B_\infty$.
\end{prop}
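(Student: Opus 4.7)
My plan is to establish both inclusions, using the trace-map renormalization already in the paper together with the characterization from \cite{DFLY2}: for an extended CMV matrix $\E$, a point $z \in \C$ lies in $\sigma(\E)$ if and only if $\|Z(n,0;z)\|$ does not grow uniformly exponentially. Since $\Omega_\Fib$ is minimal, $\Sigma = \sigma(\E_{\omega_0})$, and it suffices to verify the claim for this one operator.

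For $B_\infty \subseteq \Sigma$, suppose $X := \sup_{n \geq -1} |x_n(z)| < \infty$. Each one-step GZ matrix has determinant $-1$, so $\det M_n(z) = (-1)^{2F_n} = 1$, and Cayley--Hamilton gives $M_n^{-1} = 2 x_n I - M_n$. Combining this with $M_{n+1} = M_{n-1} M_n$ and the shifted form $M_n = M_{n-2} M_{n-1}$ (both consequences of the substitution identity $s_{n+1} = s_n s_{n-1}$) yields the key relation
\[
M_{n+1}(z) = 2 x_n(z) \, M_{n-1}(z) - M_{n-2}(z)^{-1}.
\]
Because $\det M_k = 1$ forces $\|M_k^{-1}\| = \|M_k\|$, induction on the inequality $\|M_{n+1}\| \leq 2X \|M_{n-1}\| + \|M_{n-2}\|$ produces a polynomial bound $\|M_n(z)\| \leq C F_n^q$ with $q = q(X)$. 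Interpolating with the uniform pointwise bound on $\|Y(j,z)\|$ (valid because the Verblunsky coefficients are bounded strictly inside $\D$) upgrades this to sub-exponential growth of $\|Z(n,0;z)\|$ for all $n$, and \cite{DFLY2} then gives $z \in \sigma(\E_{\omega_0}) = \Sigma$.

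For the reverse inclusion I argue by contrapositive: assume $(x_n(z))$ is unbounded. The standard trace-map escape lemma for the recursion $x_{n+2} = 2 x_n x_{n+1} - x_{n-1}$, whose engine is the conservation of the Fricke--Vogt invariant $I$, produces an index $n_0$ with $|x_n(z)| > 1$ for all $n \geq n_0$ and $|x_n(z)| \geq \rho^{F_{n-n_0}}$ for some $\rho > 1$. Since $\|M_n(z)\| \geq |\tr M_n(z)| - 1 = 2|x_n(z)| - 1$, the GZ cocycle at $z$ grows uniformly exponentially in $F_n$. Continuity of $x_n$ in $z$ together with the open nature of the escape condition yields the same lower bound locally uniformly in a neighborhood of $z$, so the cocycle is uniformly hyperbolic there; by \cite{DFLY2}, $z$ lies outside $\sigma(\E_{\omega_0}) = \Sigma$. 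The same mechanism, applied to $|z|\neq 1$ where the initial values $|x_{-1}(z)|$ and $|x_0(z)|$ readily drive escape, simultaneously confirms $B_\infty \subseteq \partial\D$.

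The main technical point is the quantitative escape lemma together with the required local uniformity in $z$, since all of the Fibonacci-specific dynamics enter through it. The classical proof depends only on the recursion and on the conservation of $I$, both of which hold verbatim in our setup, so I would import it with only cosmetic changes. The Cayley--Hamilton bound of the first inclusion is then essentially routine once the three-term identity above is in hand.
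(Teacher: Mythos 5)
Your argument splits into the two inclusions in the same way the paper does, but the two directions are treated quite differently.

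For $B_\infty \subseteq \Sigma$, your Cayley--Hamilton route is essentially the paper's argument: the paper also derives a power-law bound $\|Z(n,0;z)\| \lesssim |n|^q$ from bounded half-traces (referring to the Iochum--Testard linearization and noting that the argument is ``nearly identical to the proof of Lemma~\ref{l:dt:it}'') and then invokes the characterization of spectrum via failure of uniform hyperbolicity from \cite{DFLY2}. Your three-term identity $M_{n+1} = 2x_n M_{n-1} - M_{n-2}^{-1}$, together with $\|M_k^{-1}\| = \|M_k\|$ for unimodular $M_k$, is a clean way to set up the same linear recursive inequalities, and the resulting bound $\|M_n\| \lesssim \lambda_0^n \sim F_n^{q}$ for the dominant root $\lambda_0$ of $\lambda^3 - 2X\lambda - 1$ is correct. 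This direction is fine.

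For $\Sigma \subseteq B_\infty$ you take a genuinely different route, and here there is a gap. The paper does not re-prove this inclusion at all; it simply cites \cite[Lemma~2.3]{DMY2}. Your contrapositive argument hinges on the implication ``$(x_n(z))$ unbounded $\Rightarrow$ the escape condition \eqref{eq:escape:cond} holds at some index $n_0$.'' This is \emph{not} what the escape lemma (Lemma~\ref{l:escape}, quoted from \cite{DGLQ}) asserts --- that lemma is the one-way implication ``escape condition $\Rightarrow$ unbounded, with $|x_{k_0+j}| \ge (1+\delta/2)^{F_{j-1}}$.'' The converse is a separate, substantive dichotomy for the Fibonacci trace map. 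It is true, and the mechanism is the one you allude to (conservation of $I$: if one never sees two consecutive traces exceeding $1$ in modulus, then $|x_n| \le 1 + \sqrt{1+I(z)}$ for all $n$), but it needs either a proof or an explicit citation. The paper only establishes a version of this dichotomy in Lemma~\ref{l:easyesc}, and only under the restrictive hypotheses $0 < \theta_b < \theta_a < \pi/2$ and $\mu \ge 32$, which are not assumed in the proposition being proved. So you must either import a Sütő/Casdagli-type argument valid for arbitrary $\theta_a, \theta_b$ and arbitrary nonnegative $I(z)$, or fall back on the reference the paper itself uses.

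A smaller point: your closing remark that $|z| \ne 1$ ``readily drive[s] escape'' is both imprecise (the initial half-traces off the circle need not be large in modulus) and unnecessary. Once $B_\infty \subseteq \Sigma$ is established, $B_\infty \subseteq \partial\D$ is automatic, since $\E_{\omega_0}$ is unitary and therefore $\Sigma = \sigma(\E_{\omega_0}) \subseteq \partial\D$. There is no need for a separate escape argument off the circle.
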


\begin{proof}
The inclusion $\Sigma \subseteq B_\infty$ is in \cite[Lemma~2.3]{DMY2}. The other inclusion can be proved using ideas from Iochum--Testard \cite{IT91}. Namely, given $z \in B_\infty$, one can linearize the matrix recursion as in \cite{IT91} and use the assumption of bounded traces to derive a power-law upper bound on the norms of the matrices $\|Z(n,0;z)\|$ with $n \in \Z$. Indeed, this proof is nearly identical to the proof of Lemma~\ref{l:dt:it}. Once one has a power-law bound on the growth of the transfer matrices, the $z$ in question belongs to the spectrum by general principles (e.g., \cite[Theorem~6]{DFLY2}).
\end{proof}

In order to decide whether or not a given $z \in \partial \D$ is a member of $B_\infty$, it is helpful to have a computable condition which guarantees escape of the corresponding trace orbit. The following statement is not hard to prove; it is \cite[Lemma~4.3]{DGLQ} applied to the special case $\alpha = \varphi^{-1}$.

\begin{lemma} \label{l:escape}
Fix $z \in \C$ and $\delta \ge 0$. If there exists $k_0 \ge 0$ such that
\begin{equation} \label{eq:escape:cond}
|x_{k_0}(z)| > 1 + \delta,
\quad
|x_{k_0 + 1}(z)| > 1 + \delta,
\quad
\text{and}
\quad
|x_{k_0 + 1}(z)| > |x_{k_0-1}(z)|,
\end{equation}
then the sequence $(x_k(z))_{k=-1}^\infty$ is unbounded. Moreover, if \eqref{eq:escape:cond} holds, then $|x_{k_0 + j}(z)| \ge (1+\delta/2)^{F_{j-1}}$ for all $j \ge 0$.
\end{lemma}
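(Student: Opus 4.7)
The plan is to prove the super-multiplicative bound
\[
|x_{k_0+j+1}|
\ge
|x_{k_0+j}| \cdot |x_{k_0+j-1}|
\qquad (j \ge 1),
\]
after which both conclusions of the lemma follow from a straightforward Fibonacci induction. Writing $a_j := |x_{k_0+j}|$ and applying the triangle inequality to the trace-map recursion \eqref{eq:trace:renorm} gives
\[
a_{j+1}
\ge
2 a_j a_{j-1} - a_{j-2}
\qquad (j \ge 1),
\]
while the hypothesis becomes $a_0, a_1 > 1+\delta$ together with $a_{-1} < a_1$.

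I will establish by simultaneous induction on $j$ the statements (I) $a_j \ge 1$ for all $j \ge 0$ and (II) $a_{j+1} \ge a_j a_{j-1}$ for all $j \ge 1$. Assertion (I) holds for $j = 0, 1$ by hypothesis and propagates to $j \ge 2$ by (II) at $j-1$ applied to two terms already known to exceed $1$. The genuinely delicate step is the base case $j = 1$ of (II), which is precisely where the third hypothesis in \eqref{eq:escape:cond} is consumed: from $a_0 \ge 1$ we obtain $a_{-1} < a_1 \le a_0 a_1$, and hence
\[
a_2
\ge
2 a_0 a_1 - a_{-1}
\ge
2 a_0 a_1 - a_0 a_1
=
a_0 a_1.
\]
For the inductive step ($j \ge 2$), rearranging (II) at $j-1$ as $a_{j-2} \le a_j/a_{j-1}$ (valid since $a_{j-1} \ge 1 > 0$ by (I)) and substituting into the recursion yields
\[
a_{j+1}
\ge
2 a_j a_{j-1} - \frac{a_j}{a_{j-1}}
=
\frac{a_j \bigl(2 a_{j-1}^2 - 1\bigr)}{a_{j-1}}
\ge
a_j a_{j-1},
\]
where the final inequality reduces to $a_{j-1}^2 \ge 1$, again granted by (I).

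With (II) established, the Fibonacci induction on top of the seeds $a_0, a_1 > 1+\delta/2$ and the recursion $F_j = F_{j-1} + F_{j-2}$ produces $a_j \ge (1+\delta/2)^{F_{j-1}}$ for every $j \ge 0$. When $\delta > 0$ this directly forces $a_j \to \infty$; the degenerate case $\delta = 0$ is handled by observing that the strict inequalities $a_0, a_1 > 1$ permit the choice of some $\delta' > 0$ with $a_0, a_1 > 1+\delta'$, reducing matters to the previous case. The only subtle step in the whole argument is the base case $j = 1$ of (II), which is the only place where the asymmetric hypothesis $|x_{k_0+1}| > |x_{k_0-1}|$ is actually used; everything else is a mechanical propagation through the Fibonacci recursion.
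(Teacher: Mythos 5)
Your proof is correct. The paper does not give its own argument for this lemma: it cites \cite[Lemma~4.3]{DGLQ} (applied with $\alpha = \varphi^{-1}$), so you have supplied a self-contained proof where the paper relies on an external reference. Your route — establishing the super-multiplicative inequality $a_{j+1}\ge a_j a_{j-1}$ by simultaneous induction with the auxiliary bound $a_j\ge 1$, and isolating the third hypothesis of \eqref{eq:escape:cond} as the one ingredient in the base case $j=1$ — is the standard mechanism in the trace-map escape literature (it goes back to S\"ut\H{o} \cite{Suto1987} in the Schr\"odinger setting), and every step checks out: the triangle inequality applied to \eqref{eq:trace:renorm} is valid, the base case at $j=1$ correctly exploits $a_{-1}<a_1\le a_0a_1$, and the inductive step uses only $a_{j-1}^2\ge 1$. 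Two minor observations. First, your argument actually establishes the sharper bound $|x_{k_0+j}|>(1+\delta)^{F_{j-1}}$; the exponent $\delta/2$ in the statement is looser, and appears to be inherited from the more general Sturmian version in \cite{DGLQ}, where the analogue of the recursion is less clean. Second, your inductive structure deserves a careful reading to confirm it is well-founded (each of (I) at level $j$ and (II) at level $j$ only draws on (I), (II) at levels $<j$ once the bases $j\le 1$ are in place) — it is, so there is no circularity.
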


We are now able to verify that the Fricke--Vogt invariant can be made uniformly large on the spectrum in the large coupling regime.

\begin{prop} \label{p:largecoup}
For any $\lambda >0$, there exist choices of $\theta_a, \theta_b > 0$ such that $\mu \ge \lambda$. More specifically, for any $0 < \theta_b < \frac{\pi}{2}$, there exists $\phi_0 = \phi_0(\theta_b) > \theta_b$ such that $I(z) \geq \lambda$ for all $z \in \sigma_k$ and all $k \ge -1$ whenever $\phi_0 < \theta_a < \frac{\pi}{2}$.
\end{prop}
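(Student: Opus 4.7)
\emph{Proof strategy.} The plan is to exploit Lemma~\ref{l:fqw:inv}, which gives the explicit formula $I(z) = \kappa^2(\mathrm{Im}(z))^2$ on $\partial\D$, so that $I$ vanishes precisely at $z = \pm 1$. The sublevel set $\{z\in\partial\D : I(z) < \lambda\}$ therefore consists of two small arcs centered at $\pm 1$ of total arclength $O(\sqrt{\lambda}/\kappa)$, and since $\kappa\to\infty$ as $\theta_a\nearrow\pi/2$ with $\theta_b$ fixed, it suffices to separate $\bigcup_{k\ge -1}\sigma_k$ from $\{\pm 1\}$ by an arc whose size depends only on $\theta_b$, not on $\theta_a$.

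To achieve this uniform separation, set $t_0 := \theta_b/2$ and consider $z = e^{it}$ with $|t|\le t_0$. From the formulas computed just before Lemma~\ref{l:fqw:inv}, namely
\[
x_{-1}(z) = \cos(t)\sec\theta_b, \quad x_0(z) = \cos(t)\sec\theta_a, \quad x_1(z) = \cos(2t)\sec\theta_a\sec\theta_b + \tan\theta_a\tan\theta_b,
\]
we read off $|x_{-1}(z)| \ge \cos(\theta_b/2)\sec\theta_b > 1$ and $|x_0(z)| \ge \cos(\theta_b/2)\sec\theta_a$, and, since $2|t|\le\theta_b<\pi/2$ forces $\cos(2t)\ge\cos\theta_b>0$, we also have $x_1(z) \ge \sec\theta_a + \tan\theta_a\tan\theta_b > 0$. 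The last two lower bounds diverge as $\theta_a\nearrow\pi/2$, while $|x_{-1}(z)|\le\sec\theta_b$ stays bounded. Consequently, one can choose $\phi_0 = \phi_0(\theta_b) > \theta_b$ such that for every $\theta_a\in(\phi_0,\pi/2)$, the three hypotheses of Lemma~\ref{l:escape} hold at $z$ with $k_0 = 0$ and some $\delta > 0$; that lemma then yields $|x_k(z)| > 1$ for all $k\ge 0$, which combined with the direct bound on $|x_{-1}|$ proves $z \notin \sigma_k$ for every $k\ge -1$. The reflection symmetry $z\mapsto -z$ (which preserves each $|x_k|$) handles the arc around $-1$ identically.

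Combining the two cases, every $z\in\bigcup_{k\ge -1}\sigma_k$ satisfies $|\mathrm{Im}(z)|\ge\sin(\theta_b/2)$ and hence $I(z)\ge\kappa^2\sin^2(\theta_b/2)$. Enlarging $\phi_0$ once more so that $\kappa^2\sin^2(\theta_b/2)\ge\lambda$ for $\theta_a\in(\phi_0,\pi/2)$—possible because $\kappa\to\infty$—completes the proof that $\mu\ge\lambda$. The main obstacle is enforcing the escape hypothesis uniformly in $k$ on an arc whose radius $t_0$ does \emph{not} shrink with $\theta_a$; this is resolved cleanly by Lemma~\ref{l:escape} because the margin $\min(|x_0(z)|,|x_1(z)|) - 1$ and the gap $|x_1(z)|-|x_{-1}(z)|$ can be made as large as desired on a fixed arc by pushing $\theta_a$ toward $\pi/2$, so no delicate continuity estimate in $\theta_a$ is needed.
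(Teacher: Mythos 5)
Your proof is correct and follows essentially the same strategy as the paper's: combine the formula $I(z)=\kappa^2(\mathrm{Im}\,z)^2$ from Lemma~\ref{l:fqw:inv} with the escape criterion of Lemma~\ref{l:escape} applied at $k_0=0$ to push $\bigcup_{k\ge -1}\sigma_k$ away from $\{\pm 1\}$ by a margin independent of $\theta_a$, and then send $\kappa\to\infty$. The execution differs in small ways that are worth noting. The paper works directly on the set $\{z : |\mathrm{Re}(z)|>\cos\theta_b\}=\partial\D\setminus\sigma_{-1}$, derives the explicit algebraic sufficient condition \eqref{eq:coupling:angbound} for $|x_1|>|x_{-1}|$ there, and deduces the cleaner containment $\sigma_k\subseteq\sigma_{-1}$ (hence the margin $\sin\theta_b$), which also sidesteps any need for a $z\mapsto -z$ symmetry argument since $|\mathrm{Re}(z)|$ is already even. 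You instead work on the smaller arc $|t|\le\theta_b/2$, argue qualitatively (both $|x_0|$ and $|x_1|$ diverge while $|x_{-1}|$ stays bounded, so escape holds with some $\delta>0$ for $\theta_a$ near $\pi/2$), and separately invoke the reflection $z\mapsto -z$ to treat the arc at $-1$; your margin is the slightly weaker $\sin(\theta_b/2)$, which is of course still adequate. One small point you should make explicit: your parenthetical ``which preserves each $|x_k|$'' is only needed (and is immediate) for $k=-1,0,1$, since the escape hypothesis at $k_0=0$ involves only those three traces; verifying it for all $k$ would require tracking signs through the trace recursion, which is unnecessary here. Finally, your use of $\delta>0$ is actually a touch tidier than the paper's $\delta=0$, since Lemma~\ref{l:escape} with $\delta=0$ only literally yields $|x_{k_0+j}|\ge 1$ rather than the strict inequality needed to exclude $z$ from $\sigma_{k_0+j}$.
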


\begin{proof}
We will show that we can force the spectrum into the cone where the invariant is large by forcing the escape condition to hold for $k_0 = 0$ and $\delta = 0$ in a suitable cone containing the real axis. Fix $\theta_b > 0$, and suppose $\theta_a > \theta_b$ is large enough that
\begin{align} \label{eq:coupling:Ibound}
\kappa^2
=
(\sec\theta_a \tan\theta_b - \tan\theta_a \sec\theta_b)^2
& \ge
\frac{\lambda}{\sin^2\theta_b} \\
\label{eq:coupling:angbound}
(2\cos^2\theta_b - 1) + \sin\theta_a \sin\theta_b
& >
\cos\theta_a.
\end{align}
Notice that the second condition can be satisfied by taking $\theta_a$ sufficiently close to $+\pi/2$, since
$$
2\cos^2\theta_b - 1 + \sin\theta_b
>
\cos^2\theta_b
>
0.
$$
Now, recall:
\begin{align*}
x_{-1}(z)
& =
\mathrm{Re}(z)\sec \theta_b,
\quad
x_0(z)
=
\mathrm{Re}(z) \sec \theta_a, \\
x_1(z)
& =
\mathrm{Re}(z^2)
\sec\theta_a \sec\theta_b + \tan \theta_a \tan \theta_b.
\end{align*}
It is easy to see that
$$
|x_{-1}(z)| , \, |x_0(z)| > 1
$$
whenever $|\mathrm{Re}(z)| > \cos\theta_b$, since $\theta_a > \theta_b > 0$. We also want to show $|x_1| > |x_{-1}|$, which holds whenever
\begin{equation} \label{eq:Rez2bound}
|\mathrm{Re}(z^2) + \sin\theta_a \sin\theta_b|
>
|\mathrm{Re}(z)|\cos\theta_a.
\end{equation}
By \eqref{eq:coupling:angbound}, we see that \eqref{eq:Rez2bound} holds for all $z$ with $|\mathrm{Re}(z)| > \cos\theta_b$. Since $|x_{-1}(z)| > 1$ for such $z$, we also get $|x_1(z)| > 1$ for those values of $z$, so the escape condition \eqref{eq:escape:cond} holds for $k_0 = 0$, $\delta = 0$, and all $z \in \partial \D$ with $|\mathrm{Re}(z)| > \cos\theta_b$. It follows that
$$
\sigma_k
\subseteq
\sigma_{-1}
=
\{z \in \partial \D : |\mathrm{Im}(z)| \ge \sin\theta_b \}
\text{ for all } k \ge -1.
$$
Consequently, we have $I(z) \ge \lambda$ on $\sigma_k$ for all $k \ge -1$ by \eqref{eq:fqw:inv} and \eqref{eq:coupling:Ibound}.
\end{proof}

The proof of Theorem~\ref{t:fqw:larginv} relies on the following pair of technical lemmas. The first lemma establishes upper and lower bounds on the derivatives of the iterates of the trace map, while the second proves power-law bounds on the growth of the GZ cocycle on the periodic spectra; compare \cite[Lemma~5.2]{KKL}, \cite[Proposition~3.2 and Lemma~3.5]{DT03}. The proofs would distract from the overall narrative flow of the paper; thus we postpone them until the end of the section.

\begin{lemma} \label{l:dtkklbounds}
For each $\theta_b > \pi/4$, there exist constants $m = m(\theta_b)$, $M = M(\theta_b)$, and $\lambda = \lambda(\theta_b)$ such that the following holds. If $\theta_b < \theta_a < \pi/2$ and $\mu \ge \lambda$, then, for every $z \in \sigma_k$, we have
\begin{equation} \label{eq:xibounds}
\xi^{k/2}
\le
|x_k'(z)|
\le
\Xi^k,
\end{equation}
where $\xi =m\sqrt\mu$ and $\Xi = M \sqrt\mu$, as in the statement of Theorem~\ref{t:fqw:larginv}.
\end{lemma}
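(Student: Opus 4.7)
\textbf{Proof Proposal for Lemma~\ref{l:dtkklbounds}.}

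The plan is to mirror, in the Fibonacci OPUC setting, the derivative estimates of Killip--Kiselev--Last and Damanik--Tcheremchantsev for the Schr\"odinger Fibonacci operator. The two basic ingredients are already in place: the trace recursion \eqref{eq:trace:renorm}, which when differentiated reads
\begin{equation}\label{eq:deriv:rec}
x_{n+2}'(z) = 2 x_n'(z)\, x_{n+1}(z) + 2 x_n(z)\, x_{n+1}'(z) - x_{n-1}'(z),
\end{equation}
and the Fricke--Vogt invariant \eqref{eq:trace:inv}, which together with the escape Lemma~\ref{l:escape} controls the size of the entire trace orbit once we know one term. Throughout, ``derivative'' means differentiation in the natural arclength coordinate on $\partial \D$, and the explicit expressions for $x_{-1}, x_0, x_1$ from Subsection~\ref{ssec:trace} give uniform bounds on $|x_j'|$ for $j = -1, 0, 1$.

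\emph{Upper bound.} Fix $z \in \sigma_k$. Applying Lemma~\ref{l:escape} contrapositively, the assumption $|x_k(z)| \leq 1$ forbids any earlier escape event, and the invariant identity $I(z) = x_{j-1}^2 + x_j^2 + x_{j+1}^2 - 2 x_{j-1} x_j x_{j+1} - 1$ plus standard bounds on $I(z)$ force $|x_j(z)| \leq C\sqrt{\mu}$ for every $-1 \leq j \leq k$, with $C = C(\theta_b)$. Feeding this into \eqref{eq:deriv:rec} yields
$$|x_{n+2}'(z)| \leq 2C\sqrt{\mu}\bigl(|x_n'(z)| + |x_{n+1}'(z)|\bigr) + |x_{n-1}'(z)|,$$
and a straightforward induction, with $\Xi = M\sqrt{\mu}$ for $M$ large enough (depending only on $\theta_b$), gives $|x_n'(z)| \leq \Xi^n$ for all $1 \leq n \leq k$.

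\emph{Lower bound.} This is where the real work lies. The key mechanism is that on $\sigma_k$ the invariant $I(z) \geq \mu$ prevents any two consecutive traces $x_{j-1}(z), x_j(z)$ from both being small: from
$$x_{j-1}^2 + x_j^2 + x_{j+1}^2 - 2x_{j-1}x_j x_{j+1} = I(z) + 1 \geq \mu + 1,$$
together with the escape constraint (no triple $|x_{j-1}|, |x_j|, |x_{j+1}|$ can all satisfy the escape condition without contradicting $z \in \sigma_k$), one derives that in any window of two consecutive indices $\{j-1, j\}$ at least one trace has modulus $\geq c\sqrt{\mu}$. The strategy is then to rewrite \eqref{eq:deriv:rec} as a matrix cocycle
$$\begin{pmatrix} x_{n+2}' \\ x_{n+1}' \\ x_n' \end{pmatrix} = A_n(z) \begin{pmatrix} x_{n+1}' \\ x_n' \\ x_{n-1}' \end{pmatrix}, \qquad A_n(z) = \begin{pmatrix} 2x_n & 2x_{n+1} & -1 \\ 1 & 0 & 0 \\ 0 & 1 & 0 \end{pmatrix},$$
and show that the determinant of any two consecutive factors $A_{n+1}(z) A_n(z)$ (which is $\pm 1$) together with the large entries on alternating $n$ forces the product $A_{k-1}(z) \cdots A_0(z)$ to have norm $\gtrsim (\sqrt{\mu})^{k/2}$. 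Combined with nonvanishing initial data $(x_1', x_0', x_{-1}')$, this yields $|x_k'(z)| \geq \xi^{k/2}$ for $\xi = m\sqrt{\mu}$.

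\emph{Main obstacle.} The hard part is ruling out the possibility of delicate cancellations in \eqref{eq:deriv:rec} that could cause $|x_k'(z)|$ to be much smaller than the matrix product norm suggests. This requires the band structure of $\sigma_k$ worked out in Subsection~\ref{ssec:fbands}: on each band of $\sigma_k$, the function $x_k$ is monotonic with range $[-1,1]$, and Raymond's combinatorial description tells us precisely how the bands of $\sigma_{k-1}$, $\sigma_{k-2}$ interlace with those of $\sigma_k$, which in turn controls the relative signs and sizes of $x_{k-1}', x_{k-2}'$ at each point of $\sigma_k$. This combinatorial input is what will be used to show that the growing direction in the matrix cocycle cannot be orthogonal to the final projection, completing the lower bound and the lemma.
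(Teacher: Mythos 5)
Your upper-bound sketch is plausible (though cruder than the paper's, which bounds the ratios $|x_{k+1}'/x_k'|$ or $|x_{k+2}'/x_k'|$ rather than $|x_k'|$ directly), but your lower bound has a genuine gap that the matrix-cocycle reformulation does not resolve.

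The central difficulty in the lower bound is exactly what you name but do not solve: a large norm for the $3\times 3$ cocycle $A_{k-1}(z)\cdots A_0(z)$ does \emph{not} imply a lower bound on the particular scalar $x_k'(z)$, because the initial vector $(x_1',x_0',x_{-1}')^\top$ might sit in a contracting or neutral direction. Your plan to ``show that the growing direction in the matrix cocycle cannot be orthogonal to the final projection'' via Raymond's band combinatorics is a restatement of the problem, not a step toward solving it; no mechanism is offered for relating the band interlacing of $\sigma_k, \sigma_{k-1}, \sigma_{k-2}$ to the alignment of a specific vector with the expanding subspace of a third-order linear recursion.

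The paper's route (Lemma~\ref{l:tracederivbounds}) avoids this entirely. It does not propagate $|x_n'|$ through a cocycle. Instead it works one nested band chain at a time: for a band $B_{k+1}$ of $\sigma_{k+1}$ sitting inside a band $B_k$ of $\sigma_k$ (type A case), one divides the differentiated trace map by $x_k'$, yielding
\[
\frac{x_{k+1}'}{x_k'} = 2x_{k-1} + \frac{2x_k x_{k-1}'}{x_k'} - \frac{x_{k-2}'}{x_k'}.
\]
Because $B_{k+1}\subseteq B_k\subseteq\sigma_{k-2}$, one has $|x_k|,|x_{k-2}|\le1$ and hence $|x_{k-1}|\in[\sqrt\mu-1,\sqrt\mu+2]$ via $x_{k-1}=g_\pm(x_k,x_{k-2},I)$. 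The inductive hypothesis gives $|x_{k-2}'/x_k'|\le 1$, and the crucial control of $|x_{k-1}'/x_k'|$ comes from differentiating $x_{k-1}=g_\pm(x_k,x_{k-2},I)$ and invoking the Killip--Kiselev--Last derivative bounds on $g_\pm$ (Lemma~\ref{l:kklderivbounds}), not from any cocycle argument. The dominant term is $2x_{k-1}\sim 2\sqrt\mu$, and the remaining terms are bounded by roughly $\sqrt\mu/\sin^2\theta_b$; the triangle inequality then gives a lower bound proportional to $(2 - 1/\sin^2\theta_b)\sqrt\mu$. This is also where the hypothesis $\theta_b > \pi/4$ enters essentially -- you would need to account for it, but your argument gives no place where it plays a role. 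The type-B case requires differentiating the doubled recursion $x_{k+2}=(4x_k^2-1)x_{k-1}-2x_kx_{k-2}$ and treating two subcases according to whether $B_k$ is type A or B, again via ratio estimates and $g_\pm$, not via a cocycle. Finally, note the paper also carries a correction term $(\partial g_\pm/\partial I)\,I'(z)$ because the Fricke--Vogt invariant $I$ is not constant on $\partial\D$ in the CMV setting (unlike the Schr\"odinger analogue); this term is absorbed into the $\sqrt\mu/\sin^2\theta_b$ estimate and is not acknowledged in your proposal.
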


\begin{lemma} \label{l:dt:it}

For all $\theta_a$ and $\theta_b$ as in the statement of Theorem~\ref{t:fqw:larginv}, there is a constant $C > 0$ such that for every $n \in \Z$ with $-2F_k \le n \le 2F_k$ with $n \neq 0$, one has
\begin{equation} \label{eq:dt:it}
\| Z(n,0;z) \|
\le
C|n|^{\tau/2}
\end{equation}
for all $z \in \sigma_k$, where $\tau$ is as in \eqref{def:zetaeta}, and $Z$ denotes the Gesztesy--Zinchenko cocycle associated to $\E_0$. Consequently, we have
$$
\| Z(n,m;z) \|
\le
C^2|n|^{\tau}
$$
for all $z \in \sigma_k$ and all $n,m$ with $ |n| , |m| \leq 2F_k$ and $n \neq 0$.

\end{lemma}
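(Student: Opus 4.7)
The plan is to adapt the Iochum--Testard linearization \cite{IT91}, as refined in \cite[Proposition~3.2]{DT03} and \cite[Lemma~5.2]{KKL}, to the Gesztesy--Zinchenko cocycle. Any $n$ with $0 < n \le 2F_k$ admits a Zeckendorf-type decomposition $\lfloor n/2 \rfloor = F_{k_1} + \cdots + F_{k_r}$ with $k > k_1 > \cdots > k_r \ge 0$ and $r \le \log_\varphi |n| + O(1)$. The hierarchical factorization $s_{j+1} = s_j s_{j-1}$ of the Fibonacci word then induces a telescoping product
\[
Z(n, 0; z) = \widetilde M_{k_r}(z) \cdots \widetilde M_{k_1}(z) \, Y^\varepsilon,
\]
where $\varepsilon \in \{0, 1\}$ accounts for a half-step parity correction and each $\widetilde M_{k_i}$ is either equal to $M_{k_i}$ or obtained from $M_{k_i}$ by flanking with single-step GZ matrices of norm at most $\kappa + 2$. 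The case $n < 0$ reduces to $n > 0$ via the reflection symmetry \eqref{eq:fib:refl} of $\omega_0$.

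The central ingredient is then a uniform block bound of the form $\|M_j(z)\| \le 2\kappa + 5$ for all $0 \le j \le k$ and all $z \in \sigma_k$. I would first invoke Lemma~\ref{l:escape} contrapositively: since $|x_k(z)| \le 1$, it cannot happen that two consecutive earlier traces simultaneously exceed $1 + \delta$ for a fixed $\delta = \delta(\kappa) > 0$, for otherwise the escape lemma would force $|x_k(z)|$ to be doubly exponentially large. This yields a uniform trace bound $|x_j(z)| \le C(\kappa)$ for $-1 \le j \le k$. Combining this trace control with the Fricke--Vogt invariant $I(z) = \kappa^2 (\mathrm{Im}\, z)^2$ from Lemma~\ref{l:fqw:inv}, the recursion $M_{j+1} = M_{j-1} M_j$, and the Cayley--Hamilton identity $M_j^{-1} = 2 x_j(z) I - M_j$ (valid since $\det M_j(z) = 1$ on $\partial \D$ by the parity of $2 F_j$), I would set up an inductive norm estimate that closes at the self-consistent value $2 \kappa + 5$.

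Once the uniform block bound is in place, multiplying through the decomposition gives
\[
\|Z(n, 0; z)\|
\le \bigl((\kappa + 2)(2\kappa + 5)^2\bigr)^r
\le C \, |n|^{\tau/2},
\]
since $\tau/2 = \log\bigl((\kappa + 2)(2\kappa + 5)^2\bigr)/\log\varphi$ by \eqref{def:zetaeta} and $r \le \log_\varphi |n| + O(1)$. The ``consequently'' clause then follows from $Z(n, m; z) = Z(n, 0; z) \, Z(m, 0; z)^{-1}$ together with $\|Z(m, 0; z)^{-1}\| = \|Z(m, 0; z)\|$, which holds for $2 \times 2$ matrices by the singular value decomposition since $|\det Z| = 1$. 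I expect the main obstacle to be the uniform block bound in the second paragraph: the escape lemma supplies the correct qualitative picture, but extracting the precise constant $2\kappa + 5$ and tracking parity corrections through the GZ recursion will require a careful case analysis, relying on the large-coupling hypothesis $\mu \ge \lambda$ and on the band structure of $\sigma_k$ developed in Subsection~\ref{ssec:fbands}.
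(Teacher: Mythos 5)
Your proposal has a genuine gap at the step you yourself flag as the main obstacle: the uniform block bound $\|M_j(z)\| \le 2\kappa+5$ for all $0 \le j \le k$ and $z \in \sigma_k$ is false, and no inductive norm estimate based on the trace bound and Cayley--Hamilton can close to a $j$-independent constant. The trace bound $|x_j(z)| \le \kappa+2$ controls only $\tr M_j$, not $\|M_j\|$; for $z \in \sigma_k$ with $j < k$, the matrices $M_j(z)$ are in general hyperbolic or poorly conditioned and their norms grow with $j$. The recursion $M_j = x_{j-1} M_{j-2} - M_{j-3}^{-1}$ together with $\|M_\ell^{-1}\| = \|M_\ell\|$ yields at best $\|M_j\| \le (\kappa+3)\max(\|M_{j-2}\|,\|M_{j-3}\|)$, i.e., the paper's \emph{exponential} crude bound $\|M_j\| \le (\kappa+2)^j$, not a self-consistent fixed constant. (A uniform bound would force $\tau=0$, contradicting the anomalous transport the theorem is designed to exhibit.) Your appeal to the Fricke--Vogt invariant does not help here either: it constrains traces, not condition numbers.

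What the paper actually does, following Iochum--Testard and Damanik--Tcheremchantsev, is sidestep uniform norm control entirely. One keeps the exponential bound $\|M_j\| \le (\kappa+2)^j$ and then invokes the polynomial linearization: for $i\ge 2$, $j\ge 1$, one writes $M_i M_{i+j} = P_j^{(1)} M_{i+j} + P_j^{(2)} M_{i+j-1} + P_j^{(3)} M_{i+j-2} + P_j^{(4)} I$ with polynomial coefficients in the traces $x_{i-1},\ldots,x_{i+j}$, and the trace bound controls the total coefficient mass by $(2\kappa+5)^j$. Propagating this through the Zeckendorf decomposition (which you correctly identify) collapses the product to a bounded linear combination of a few $M_\ell$'s, giving $\|Z(2m,0;z)\| \le d^{b_m}$ with $d=(\kappa+2)(2\kappa+5)^2$ and $b_m$ the largest Fibonacci index occurring. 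So the shape of your final estimate is right, but the mechanism that makes the constant $(\kappa+2)(2\kappa+5)^2$ appear is algebraic reduction of the product, not multiplicativity of a bounded block norm. (As a secondary issue, your factorization $\widetilde M_{k_r}\cdots\widetilde M_{k_1}$ with flanking factors of norm $\le \kappa+2$ would, even granting your block bound, give base $(\kappa+2)^2(2\kappa+5)$ rather than $(\kappa+2)(2\kappa+5)^2$, so your arithmetic does not reproduce the stated $\tau$.) The rest of your outline --- the reflection-symmetry reduction for $n<0$, the interpolation to odd $n$, and the ``consequently'' step via $\|Z^{-1}\|=\|Z\|$ --- is fine and matches the paper.
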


\begin{proof}[Proof of Theorem~\ref{t:fqw:larginv}]
Fix $\theta_b$ and $\theta_a \in (\theta_b,\pi/2)$ sufficiently large. The theorem consists of two nontrivial inequalities: a lower bound on $\widetilde\beta^-$ and an upper bound on $\beta^+$. We will prove the lower bound using Theorem~\ref{t:dt:powerlaw} and the upper bound using Theorem~\ref{DT07.thm1}. Recall that $F_k \sim \varphi^k$.

\bigskip

\noindent \textbf{Lower Bound.} By Lemma~\ref{l:dtkklbounds} and the Mean Value Theorem, we have $|B_k| \ge 4 \Xi^{-k}$ for each band $B_k$ of $\sigma_k$.  Note that $F_k \sim \varphi^k$. As a consequence of the $2F_k$-periodicity of the corresponding CMV operator and standard results in the theory of periodic operators (see, for example, \cite[Theorem~11.1.1]{S2}), we know that $\sigma_k$ consists of $2F_k$ nondegenerate closed intervals, known as \emph{bands}. These facts imply that
$$
|\sigma_k|
\ge
8F_k \Xi^{-k}
\gtrsim
F_k^{-\eta},
$$
where $\eta = \frac{\log(\Xi)}{\log\varphi} - 1,$ as in \eqref{def:zetaeta}. Note that the implicit constant on the right hand side depends on $\Xi$, and hence on $\mu$ and $\theta_b$. Combining this lower bound on the measure of the periodic spectra with the polynomial bounds from Lemma~\ref{l:dt:it}, we get the desired conclusion from Theorem~\ref{t:dt:powerlaw} by taking $A_R = \sigma_{\ell(R)}$ for each $R > 1$, where $\ell(R)\in \Z_+$ is the unique integer with $F_{\ell-1} < R \le F_\ell$. Specifically, with $R_K = K^{\frac{1}{1+\tau}}$, one has
$$
|A_{R_K}|
=
|\sigma_{\ell(R_K)}|
\gtrsim
F_{\ell(R_K)}^{-\eta}
\gtrsim
R_K^{-\eta}
=
K^{\frac{-\eta}{1+\tau}},
$$
which, by \eqref{eq:momentbound}, implies
$$
\left\langle |X|^p_{\delta_0} \right\rangle(K)
\gtrsim
|A_{R_K}| K^{\frac{p-3\tau}{1+\tau}}
\gtrsim
K^{\frac{p-3\tau-\eta}{1+\tau}}.
$$
Consequently,
$$
\widetilde\beta^-_{\delta_0}(p)
\ge
\frac{p-3\tau-\eta}{p(1+\tau)},
$$
which proves the lower bound from \eqref{eq:fqwbetas}.

\bigskip

\noindent \textbf{Upper Bound.} Define
$$
\sigma_k^\delta
:=
\left\{ z \in \C : |x_k(z)| \leq 1+\delta \right\}.
$$
Using the lower bound from \eqref{eq:xibounds}, one can argue as in \cite[Proposition~3]{DT07} and use the Koebe Distortion Theorem to see that for every $\delta > 0$ small enough, there is a constant $C_\delta > 0$ such that
$$
\sigma_k^\delta
\subseteq
B\left(0,\exp\left(C_\delta \xi^{-k/2}\right) \right)
$$
for all $k \in \Z_+$. We now fix $\delta > 0$ small enough that $\sigma_n^\delta$ has $F_n$ connected components for all $n \in \Z_+$; the existence of such a $\delta$ follows from the argument which proves \cite[Lemma~5]{DT07}. Let $\nu > 0$ be given, and put
$$
\gamma
=
\gamma(\nu)
=
\frac{\log(\xi)}{2(1+\nu)\log(\varphi)}.
$$
Thus, for all $k \in \Z_+$ large enough (depending on $\nu$), we have
\begin{equation} \label{eq:escape:begins}
\sigma_k^\delta \cup \sigma_{k+1}^\delta
\subseteq
B\left(0, \exp\left( C_\delta F_k^{-\gamma} \right) \right).
\end{equation}
Let $K_0$ be large enough that $ze^{K_0^{-1}} \in \sigma_{-1}^\delta$ whenever $z \in \sigma_{-1}$. Given $K \ge K_0$, define $n = n(K)$ and $N = N(K)$ by
$$
\frac{F_{n-1}^\gamma}{C_\delta}
\leq
K
<
\frac{F_{n}^\gamma}{C_\delta},
\quad
N
=
2F_{n + \sqrt n},
$$
where we abbreviate $F_x := F_{\lfloor x \rfloor}$ for real numbers $x$. In particular, we have $K^{-1} > C_\delta F_n^{-\gamma}$, so, for complex numbers of the form $we^{1/K}$ with $w \in \sigma_{-1}$, the escape condition \eqref{eq:escape:cond} holds for some $k_0 \le n(K)$. On the other hand, \eqref{eq:escape:cond} already holds with $k_0 = \delta = 0$ for all $w \in \partial \D \setminus \sigma_{-1}$ by our choice of parameters, so, since escape is an open condition, we may enlarge $K_0$ (if necessary) so that \eqref{eq:escape:cond} still holds (with $k_0 = \delta = 0$) for all $z$ of the form $we^{K^{-1}}$ with $w \in \partial \D \setminus \sigma_{-1}$ and $K \ge K_0$. Consequently, for every $z$ with $|z| = e^{1/K}$, we have
$$
\left| x_q(z) \right|
\ge
\left( 1 + \frac{\delta}{2} \right)^{F_{q-n(K)-1}}
$$
for every integer $q \ge n(K)$, by Lemma~\ref{l:escape}. For every $\epsilon > 0$, our choice of $N(K)$ implies that there is a constant $c_\epsilon > 0$ such that
$$
N(K)
\le
c_\epsilon K^{\gamma^{-1} + \epsilon}
$$
for every $K \in \Z_+$. Putting everything together, we have
\begin{align*}
\max_{0 \le n \le c_\epsilon K^{\gamma^{-1} + \epsilon} } \left\|Z\left( n,0; e^{i\theta + 1/K} \right) \right\|
& \ge
\left\| Z\left( N(K), 0 ; e^{i\theta + 1/K} \right) \right\| \\
& \gtrsim
\left|x_{N(K)} \left(e^{i\theta + 1/K} \right) \right| \\
& \ge
\left(1 + \frac{\delta}{2} \right)^{F_{\sqrt{n(K)}-1}}
\end{align*}
for all $\theta \in [0, 2\pi)$, where we have used Lemma~\ref{l:escape} and $N(K) - n(K) = \sqrt{n(K)}$ to obtain the final line. We then have
\begin{align*}
\int_0^{2\pi} \! \left( \max_{0 \le n \le c_\epsilon K^{\gamma^{-1} + \epsilon} } \left\| Z(n,0; e^{i\theta + 1/K}) \right\|^2 \right)^{-1} \, \frac{d\theta}{2\pi}
& \lesssim
\left(1 + \frac{\delta}{2} \right)^{-2F_{\sqrt{n(K)}-1}}.
\end{align*}
The right hand side decays faster than any negative power of $K$, and we can prove a similar bound on the other half-line using the reflection symmetry \eqref{eq:fib:refl}. Thus, we have $\beta^+_{\delta_0}(p) \leq \gamma(\nu)^{-1} + \epsilon$ for all $p > 0$ by Theorem~\ref{DT07.thm1}. Since this holds for all $\epsilon > 0$ and $\nu > 0$, we obtain $\beta^+_{\delta_0}(p) \le \gamma(0)^{-1}$, which is the upper bound in \eqref{eq:fqwbetas}.
\end{proof}

\subsection{Band Combinatorics for Fibonacci Quantum Walks} \label{ssec:fbands}

Throughout this subsection, we adopt the standing assumptions $0 < \theta_b < \theta_a < \pi/2$ and $\mu \ge 32$. Notice that the assumption $\mu \ge 32$ implies $\sigma_{k-1} \cap \sigma_k \cap \sigma_{k+1} = \emptyset$ for all $k \ge 0$, since $I(x_{k-1}, x_k, x_{k+1}) \ge \mu$ on $\sigma_k$ for all $k$.

 Let us say that a band $B_k \subseteq \sigma_k$ is a \emph{type A band} if $B_k \subseteq \sigma_{k-1}$ and a \emph{type B band} if $B_k \subseteq \sigma_{k-2}$. The assumption on the $\theta$'s means that $\sigma_0$ consists of two bands of type A, and $\sigma_1$ consists of four bands of type B. The first claim is an obvious consequence of $0 < \theta_b < \theta_a < \pi/2$, but the second requires an argument, which is supplied by the following lemma.

\begin{lemma}
If $0 < \theta_b < \theta_a < \pi/2$, then $\sigma_{-1} \supseteq \sigma_0 \cup \sigma_1$. If, in addition, $\mu \ge 32$, then $\sigma_0 \cap \sigma_1 = \emptyset$.
\end{lemma}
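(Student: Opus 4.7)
The plan is to handle the two inclusions $\sigma_0 \subseteq \sigma_{-1}$ and $\sigma_1 \subseteq \sigma_{-1}$ separately, and then derive the disjointness $\sigma_0 \cap \sigma_1 = \emptyset$ from the Fricke--Vogt invariant together with the hypothesis $\mu \ge 32$.

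The first inclusion is immediate from the explicit formulas for $x_{-1}$ and $x_0$: since $0 < \theta_b < \theta_a < \pi/2$, we have $\sec\theta_b < \sec\theta_a$, so for every $z \in \partial\D$,
$$|x_{-1}(z)| = |\Re z|\,\sec\theta_b \le |\Re z|\,\sec\theta_a = |x_0(z)|,$$
and hence $\sigma_0 \subseteq \sigma_{-1}$. For the second inclusion, I would rewrite $x_1$ using the identity $\Re(z^2) = 2(\Re z)^2 - 1$ as
$$x_1(z) = 2\, x_0(z)\, x_{-1}(z) - c, \qquad c := \frac{1 - \sin\theta_a \sin\theta_b}{\cos\theta_a \cos\theta_b},$$
where $c \ge 1$ because $\cos(\theta_a - \theta_b) \le 1$. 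On $\partial\D$ the product $x_0(z) x_{-1}(z) = (\Re z)^2 \sec\theta_a \sec\theta_b$ is nonnegative, so $|x_1(z)| \le 1$ forces $2x_0(z) x_{-1}(z) \le c + 1$. A short trigonometric simplification, using
$$1 - \sin\theta_a\sin\theta_b + \cos\theta_a\cos\theta_b = 1 + \cos(\theta_a+\theta_b) = 2\cos^2\!\bigl((\theta_a+\theta_b)/2\bigr),$$
rewrites this as $(\Re z)^2 \le \cos^2\!\bigl((\theta_a+\theta_b)/2\bigr)$. Since $\theta_a \ge \theta_b$ we have $(\theta_a+\theta_b)/2 \ge \theta_b$, so the right-hand side is at most $\cos^2\theta_b$, yielding $|x_{-1}(z)| \le 1$, i.e., $z \in \sigma_{-1}$.

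For the disjointness statement, suppose for contradiction that $z \in \sigma_0 \cap \sigma_1$. The first half of the lemma then gives $z \in \sigma_{-1}$ as well, so $|x_j(z)| \le 1$ for each $j \in \{-1,0,1\}$. Estimating each term in
$$I(z) = x_1(z)^2 + x_0(z)^2 + x_{-1}(z)^2 - 2\, x_0(z) x_1(z) x_{-1}(z) - 1$$
by absolute values yields $I(z) \le 1 + 1 + 1 + 2 - 1 = 4$. On the other hand, the definition of $\mu$ forces $I(z) \ge \mu \ge 32$ on $\sigma_0$, which is a contradiction.

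I do not anticipate any serious obstacle. The one step that requires any care is the trigonometric identification in the second inclusion, which converts $|x_1(z)| \le 1$ into the bound $(\Re z)^2 \le \cos^2((\theta_a+\theta_b)/2)$; once that is in hand, the rest is a direct consequence of the setup and of the uniform lower bound $I \ge \mu$ on each periodic spectrum.
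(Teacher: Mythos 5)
Your proof is correct and follows essentially the same route as the paper: elementary trigonometric manipulation of the explicit formulas for $x_{-1}, x_0, x_1$ to get the inclusion, and bounding the Fricke--Vogt invariant by $4$ on any triple intersection to contradict $\mu \ge 32$. The only stylistic differences are that the paper argues the inclusion $\sigma_1 \subseteq \sigma_{-1}$ by contrapositive via the double-angle inequality $\Re(z^2) > \cos(2\theta_b) > \cos(\theta_a+\theta_b)$ rather than your direct half-angle computation, and it delegates the disjointness to a standing observation ($\mu \ge 32 \Rightarrow \sigma_{k-1}\cap\sigma_k\cap\sigma_{k+1}=\emptyset$) whose justification is exactly your explicit estimate $I \le 4$.
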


\begin{proof}
As noted above, the assumptions on $\theta_a$ and $\theta_b$ immediately give $\sigma_0 \subseteq \sigma_{-1}$.  To prove the other inclusion, suppose $z \notin \sigma_{-1}$, i.e., $|\mathrm{Re}(z)| > \cos\theta_b$. Then the assumptions on the $\theta$'s imply
$$
\mathrm{Re}(z^2)
>
\cos(2\theta_b)
>
\cos(\theta_a + \theta_b)
=
\cos\theta_a\cos\theta_b - \sin\theta_a\sin\theta_b.
$$
Consequently,
\begin{align*}
x_1(z)
& =
\Re(z^2) \sec\theta_a\sec\theta_b + \tan\theta_a\tan\theta_b \\
& >
(\cos\theta_a\cos\theta_b - \sin\theta_a\sin\theta_b) \sec\theta_a\sec\theta_b + \tan\theta_a\tan\theta_b \\
& =
1,
\end{align*}
so $z \notin \sigma_1$. Thus $\sigma_1 \subseteq \sigma_{-1}$. With this in hand, we must have $\sigma_0 \cap \sigma_1 = \emptyset$, since $\mu \ge 32$.
\end{proof}

With the previous lemma in hand, we know that type A and B bands exhaust the periodic spectra at levels 0 and 1. To see that they exhaust the spectra at all levels, we use the following lemma, which is a CMV variant of \cite[Lemma~5.3]{KKL}. The proof is nearly identical to the proof in the Schr\"odinger case -- we include detailed arguments for the reader's convenience.

\begin{lemma} \label{l:bandtypes}
Assume $0 < \theta_b < \theta_a < \pi/2$ and $\mu \ge 32$. For every $k \ge 0$:
\begin{enumerate}
\item[{\rm(1)}] Every type A band of $\sigma_k$ contains a type B band of $\sigma_{k+2}$ and no other bands of $\sigma_{k+1}$ or $\sigma_{k+2}$.
\item[{\rm(2)}] Every type B band of $\sigma_k$ contains a type A band $B_{k+1} \subseteq\sigma_{k+1}$ and two type B bands from of $\sigma_{k+2}$ which sandwich $B_{k+1}$.
\end{enumerate}
\end{lemma}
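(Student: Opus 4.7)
The plan is to adapt the proof of the Schr\"odinger analog \cite[Lemma~5.3]{KKL} to the present CMV setting, using the Fricke--Vogt invariant \eqref{eq:trace:inv}, the trace recursion \eqref{eq:trace:renorm}, and standard Floquet theory for periodic CMV matrices. The first step is to extract pointwise consequences of the invariant: treating
\[
I(z) = x_{j-1}^2 + x_j^2 + x_{j+1}^2 - 2x_{j-1}x_jx_{j+1} - 1 \ge \mu
\]
as a quadratic in each of the three arguments in turn, one finds that on $\sigma_{j-1} \cap \sigma_j$ one has $|x_{j+1}| \ge \sqrt{\mu} - 1$, and symmetric statements hold on $\sigma_j \cap \sigma_{j+1}$ and on $\sigma_{j-1} \cap \sigma_{j+1}$. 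Since $\mu \ge 32$ forces all of these bounds to exceed $1$, no point can lie in three consecutive spectra $\sigma_{j-1}, \sigma_j, \sigma_{j+1}$, so every band of $\sigma_k$ is purely of type A or purely of type B, and a type A band can never meet $\sigma_{k+1}$.

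For (1), fix a type A band $B_k \subseteq \sigma_k \cap \sigma_{k-1}$. The previous paragraph gives $B_k \cap \sigma_{k+1} = \emptyset$, and by connectivity $x_{k+1}$ has constant sign on $B_k$ with $|x_{k+1}| \ge \sqrt{\mu} - 1$. Standard Floquet theory for periodic CMV matrices implies that $x_k$ restricts to a monotone homeomorphism $B_k \to [-1,1]$. Using the recursion $x_{k+2} = 2 x_k x_{k+1} - x_{k-1}$ and the bound $|x_{k-1}| \le 1$, I would verify that $x_{k+2}|_{B_k}$ is monotonic and that at the endpoints of $B_k$ (where $|x_k| = 1$) one has $|x_{k+2}| \ge 2(\sqrt{\mu} - 1) - 1 > 1$. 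Consequently, $\{z \in B_k : |x_{k+2}(z)| \le 1\}$ is a single subarc strictly interior to $B_k$, which by construction is a type B band of $\sigma_{k+2}$.

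For (2), fix a type B band $B_k \subseteq \sigma_k \cap \sigma_{k-2}$. Applying the invariant to $(x_{k-2}, x_{k-1}, x_k)$ with $|x_{k-2}|, |x_k| \le 1$ gives $|x_{k-1}| \ge \sqrt{\mu} - 1$ with constant sign on $B_k$. Inserting this into $x_{k+1} = 2x_{k-1}x_k - x_{k-2}$ and using monotonicity of $x_k$ on $B_k$ produces a unique type A sub-band $B_{k+1} \subseteq B_k \cap \sigma_{k+1}$, outside of which $|x_{k+1}| > 1$ on the two complementary subarcs of $B_k$ with opposite signs. On each such outer subarc, monotonicity of $x_{k+1}$ combined with the recursion $x_{k+2} = 2x_kx_{k+1} - x_{k-1}$ yields exactly one additional type B band of $\sigma_{k+2}$, so the two type B sub-bands sandwich $B_{k+1}$ as claimed.

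The principal technical obstacle is establishing uniqueness of the sub-bands, i.e., ruling out hidden oscillations of $x_{k+1}$ or $x_{k+2}$ inside $B_k$ that might create extra sub-bands. I would handle this by combining the nonvanishing conclusions from the invariant (which force a constant sign of the dominating trace on each connected subregion) with the derivative lower bounds from Lemma~\ref{l:dtkklbounds}, together with the strict endpoint inequalities that guarantee the sub-bands lie strictly in the interior of $B_k$.
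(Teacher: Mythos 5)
Your setup and the existence half of the argument (using the invariant to rule out triple intersections and to show each candidate sub-band must be strictly interior to $B_k$) match the paper. But your uniqueness argument has two genuine problems.

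First, the monotonicity claims are unjustified and most likely false. Floquet theory does give that $x_k$ is a monotone homeomorphism of its own band $B_k$ onto $[-1,1]$, but you then want $x_{k+2}$ (resp.\ $x_{k+1}$) to be monotone on $B_k$ as well. That is a different statement: $B_k$ is a band of $\sigma_k$, not of $\sigma_{k+2}$, and there is no reason for a higher iterate of the trace map to be monotone on a lower-level band. Indeed the recursion $x_{k+2} = 2x_kx_{k+1} - x_{k-1}$ mixes three functions with very different oscillation scales, and the derivative bounds in the paper show $|x_{k+2}'|$ can be exponentially larger than $|x_k'|$, which is exactly the kind of behavior that produces many sign changes within a single $B_k$ if one only controls $x_k$.

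Second, your proposed patch --- invoking the derivative bounds from Lemma~\ref{l:dtkklbounds} to rule out hidden oscillations --- is circular. The proof of Lemma~\ref{l:dtkklbounds} goes through Lemma~\ref{l:tracederivbounds}, whose entire case analysis (Case~1, Subcases~2i and 2ii) is phrased in terms of type A and type B bands and relies on the containment structure established by Lemma~\ref{l:bandtypes}. You cannot use it here.

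The paper avoids both issues with an intermediate-value argument that does not require any monotonicity of the higher traces. For uniqueness of the type~B sub-band in Case~A, one observes that any band $\widetilde B_{k+2} \subseteq B_k$ of $\sigma_{k+2}$ must contain a point $z_2$ with $x_{k+2}(z_2) = -x_{k-1}(z_2)$ (both have modulus $\leq 1$ on $\widetilde B_{k+2}$, so $x_{k+2} + x_{k-1}$ changes sign by the IVT and the band-edge conditions); by the trace recursion this forces $x_k(z_2)x_{k+1}(z_2) = 0$, hence $x_k(z_2) = 0$, and there is only one such point in $B_k$. For Case~B, the paper uses the iterated recursion $x_{k+2} = (4x_k^2-1)x_{k-1} - 2x_kx_{k-2}$ and the factorization $(2x_k\pm1)(x_{k+2}\pm x_{k-2}) = (4x_k^2-1)(x_{k+1}\pm x_{k-1})$ to pin each $\sigma_{k+2}$-band inside $B_k$ to one of the two points where $x_k = \pm 1/2$. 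This is the essential idea your proposal is missing.
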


\begin{proof}
\noindent \textbf{Case A.} Suppose $B_k \subseteq \sigma_k$ is a type-A band. By definition, $B_k \subseteq \sigma_{k-1}$, so we must have $B_k \cap \sigma_{k+1} = \emptyset$, since $\mu \ge 32$; equivalently, $|x_{k+1}| > 1$ on $B_k$. There is a unique $z_1 \in B_k$ such that $x_k(z_1) = 0$. Using the trace map \eqref{eq:trace:renorm}, we have
$$
|x_{k+2}(z_1)|
=
|x_{k-1}(z_1)|
\le
1,
$$
since $z_1 \in B_k \subseteq \sigma_{k-1}$. In particular, $B_k \cap \sigma_{k+2} \neq \emptyset$. Notice also that when $x_k = \pm 1$, we have
$$
|x_{k+2}|
\ge
2|x_{k+1}| - |x_{k-1}|
>
1,
$$
so any band of $\sigma_{k+2}$ which intersects $B_k$ must be entirely contained within $B_k$. Finally, suppose $\widetilde B_{k+2}$ is a band of $\sigma_{k+2}$ contained in $B_k$. By the IVT, $\widetilde B_{k+2}$ contains a point $z_2$ for which $x_{k+2}(z_2) + x_{k-1}(z_2) = 0$, so $x_k(z_2) x_{k+1}(z_2) = 0$ by \eqref{eq:trace:renorm}. Since $z_2 \notin \sigma_{k+1}$, $x_k(z_2) = 0$. Thus, there is a \emph{unique} band of $\sigma_{k+2}$ contained in $B_k$.

\bigskip

\noindent \textbf{Case B.} Suppose $B_k \subseteq \sigma_k$ is a type-B band, i.e.\ $B_k \subseteq \sigma_{k-2}$; immediately, one has $B_k \cap \sigma_{k-1} = \emptyset$. As above, if $x_k(z_3) = 0$, then
$$
|x_{k+1}(z_3)|
=
|x_{k-2}(z_3)|
\le
1,
$$
whence $B_k \cap \sigma_{k+1} \neq \emptyset$. As in the previous case, $|x_k| = 1$ forces $|x_{k+1}| > 1$, so any band of $\sigma_{k+1}$ which meets $B_k$ must be contained entirely within the interior of $B_k$. Moreover, running the argument above again, any band of $\sigma_{k+1}$ contained within $B_k$ must contain a spectral parameter at which $x_k$ vanishes, and so there is exactly one band of $\sigma_{k+1}$ contained in $B_k$. Now we consider bands of $\sigma_{k+2}$ which meet $B_k$. Notice that one may iterate the trace recursion to obtain
\begin{equation} \label{eq:tracerec:2}
x_{k+2}
=
2x_{k+1} x_k - x_{k-1}
=
(4 x_k^2 -1) x_{k-1} - 2 x_k x_{k-2}.
\end{equation}
As before, any band of $\sigma_{k+2}$ which meets $B_k$ is completely contained in $B_k$; specifically, if $|x_k| = 1$, then \eqref{eq:tracerec:2} forces $|x_{k+2}| > 3-2 = 1$. On the other hand, if $x_k = \pm 1/2$, then \eqref{eq:tracerec:2} forces $|x_{k+2}| = |x_{k-2}| \le 1$. Consequently, $B_k$ contains \emph{at least} two bands of $\sigma_{k+2}$ (note that we can't have a single large band, becaue $\sigma_{k+2}$ cannot meet $\sigma_{k+1}$, which contains the zero of $x_k$ in $B_k$, which separates the points where $x_k = \pm 1/2$). It remains to see that we have no more than two bands of $\sigma_{k+2}$ in $B_k$. Notice that
$$
(2x_k \pm 1)(x_{k+2} \pm x_{k-2})
=
(4x_k^2 - 1)(x_{k+1} \pm x_{k-1}).
$$
Now, fix a band $B_{k+2} \subseteq B_k$ of $\sigma_{k+2}$.  We may choose a fixed sign so that $x_{k+1} \pm x_{k-1}$ never vanishes on $B_{k+2}$. By the IVT, there exists $z_4 \in B_{k+2}$ so that $x_{k+2}(z_4) \pm x_{k-2}(z_4) = 0$. We must then have $x_k(z_4) = \pm 1/2$. Since there are only two points in $B_k$ where this happens, we are done -- there cannot be more than two bands of $\sigma_{k+2}$ in $B_k$.
\end{proof}

Lemma~\ref{l:bandtypes} implies that every band of $\sigma_k$ is of type A or B for every $k \ge 0$. Moreover, it also implies that $\sigma_k$ consists of $2F_k$ disjoint bands for all $k$, that is, $\sigma_k$ has no closed gaps. Finally, it also implies that $\sigma_{-1} \supseteq \sigma_k$ for all $k$, and hence $\sigma_{-1} \supseteq \Sigma$ by a strong approximation argument.

\subsection{Power-Law Bounds on the Growth of Fibonacci--Szeg\H{o} Cocycles} \label{ssec:fibbounds}

In this subsection, we prove Lemmas~\ref{l:dtkklbounds} and \ref{l:dt:it}. Throughout, we have the standing assumptions that $\pi/4 < \theta_b < \theta_a < \pi/2$ and that $\mu$ is large enough, where ``large enough'' is a $\theta_b$-dependent statement which becomes harder and harder to satisfy as $\theta_b \searrow \pi/4$. We do not attempt to find optimal constants, and instead opt for lucidity of presentation. Notice that
$$
u^2 + v^2 + w^2 - 2uvw - 1 = I
$$
implies that
$$
w
=
uv \pm \sqrt{I + (1-u^2)(1-v^2)}
$$
via the quadratic formula. This motivates us to define
$$
g_\pm(u,v,I)
=
uv \pm \sqrt{I + (1-u^2)(1-v^2)}.
$$

\begin{lemma} \label{l:kklderivbounds}
For $I \ge 4$ and $|u|,\, |v| \le 1$, we have
$$
\left| \frac{\partial g_\pm}{\partial u}(u,v,I) \right|, \;
\left| \frac{\partial g_\pm}{\partial v}(u,v,I) \right|
\le
1.
$$
Moreover,
$$
\left|\frac{\partial g_\pm}{\partial I}(u,v,I) \right|
\le
\frac{1}{2\sqrt{I}}.
$$
\end{lemma}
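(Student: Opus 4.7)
The plan is to carry out the differentiation explicitly and reduce each bound to an elementary polynomial inequality on the unit square. Writing $D = D(u,v,I) := \sqrt{I + (1-u^2)(1-v^2)}$, one computes
\[
\frac{\partial g_\pm}{\partial u}
=
v \mp \frac{u(1-v^2)}{D},
\qquad
\frac{\partial g_\pm}{\partial v}
=
u \mp \frac{v(1-u^2)}{D},
\qquad
\frac{\partial g_\pm}{\partial I}
=
\pm \frac{1}{2D}.
\]
The bound on $\partial g_\pm/\partial I$ is then immediate: since $|u|, |v| \le 1$, we have $(1-u^2)(1-v^2) \ge 0$, hence $D \ge \sqrt{I}$, which gives $|\partial g_\pm/\partial I| \le 1/(2\sqrt I)$.

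For the bounds on $\partial g_\pm/\partial u$, I would exploit the $u \leftrightarrow v$ symmetry of $g_\pm$ so that it suffices to handle the $u$-derivative. The inequality $|v \mp u(1-v^2)/D| \le 1$ must hold for both signs, and so is equivalent to the pair of estimates
\[
v + \frac{|u|(1-v^2)}{D} \le 1
\quad\text{and}\quad
v - \frac{|u|(1-v^2)}{D} \ge -1.
\]
By the symmetry $v \mapsto -v$, it is enough to check the first one in the case $v \ge 0$, and after multiplying through by $D \ge 0$ and using $1-v^2 = (1-v)(1+v)$, this reduces to $D \ge |u|(1+|v|)$ (the case $|v|=1$ being trivial). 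Squaring yields the equivalent polynomial inequality
\[
I \ge u^2(1+|v|)^2 - (1-u^2)(1-v^2)
=
(1+|v|)(2u^2 + |v| - 1),
\]
whose right-hand side is a polynomial in $(|u|,|v|) \in [0,1]^2$ maximized at $|u| = |v| = 1$, where it equals $4$. Thus the hypothesis $I \ge 4$ exactly suffices.

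There is no real obstacle here: the proof is a direct calculation, and the only mild subtlety is recognizing that the $\pm$ ambiguity forces one to handle the symmetric worst case $|v| + |u|(1-v^2)/D \le 1$. The constant $I \ge 4$ is sharp for the argument, achieved in the limit $|u|,|v| \to 1$, and is responsible for the appearance of this numerical threshold elsewhere in Section~\ref{ssec:fibbounds}.
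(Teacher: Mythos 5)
Your proof is correct, but it takes a genuinely different route from the paper's. The paper disposes of the lemma in one line by observing the scaling identity $g_\pm(u,v,I) = \tfrac12 f_\pm(2u,2v,2\sqrt I)$ relating $g_\pm$ to the function $f_\pm$ appearing in Killip--Kiselev--Last, then invoking their Lemma~5.4 and the chain rule (with the remark that the $I$-derivative bound is obvious). You instead differentiate explicitly and reduce the $u$- and $v$-derivative bounds to the polynomial inequality $I \ge (1+|v|)(2u^2+|v|-1)$ on $[0,1]^2$, whose maximum is $4$. Both are valid; yours is self-contained and makes the sharpness of the threshold $I \ge 4$ transparent, whereas the paper's proof is shorter but requires the reader to unpack the notational conversion and look up the cited lemma. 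Two small remarks on your write-up: (i) when you write the pair of estimates $v + |u|(1-v^2)/D \le 1$ and $v - |u|(1-v^2)/D \ge -1$, it would be cleaner to state directly that the required bound is $|v| + |u|(1-v^2)/D \le 1$, since that is what the absolute value of the two-sign derivative reduces to; (ii) your observation that the extremal case $|u|=|v|=1$ degenerates (because $1-v^2$ vanishes there) is worth keeping, since it explains why $I \ge 4$ is the correct closed-form threshold even though equality in the polynomial bound is approached only in the limit.
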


\begin{proof}
Since $g_\pm(u,v, I) = \frac{1}{2} f_\pm(2u,2v,2\sqrt{I})$ in Killip--Kiselev--Last's notation, the first two inequalities are immediate from \cite[Lemma~5.4]{KKL} and the chain rule. In particular, note that $I = \lambda^2/4$ when passing between our notation and theirs. The third bound is obvious.
\end{proof}

\begin{lemma} \label{l:tracederivbounds}
Suppose $\pi/4 < \theta_b < \pi/2$ is given. There exist constants $m = m(\theta_b)$, $M = M(\theta_b)$, and $\lambda = \lambda(\theta_b)$ such that if $\theta_b < \theta_a < \pi/2$ and $\mu \ge \lambda$, then the following statements hold with $\xi = m\sqrt\mu$ and $\Xi = M\sqrt\mu$. We have
$$
\xi
\le
\left| \frac{x_{k+1}'(z)}{x_k'(z)} \right|
\le
\Xi
$$
for all $z \in B_{k+1}$ if $B_{k+1}$ is a type A band, and
$$
\xi
\le
\left| \frac{x_{k+2}'(z)}{x_k'(z)} \right|
\le
\Xi
$$
for all $z \in B_{k+2}$ if $B_{k+2}$ is a type B band.
\end{lemma}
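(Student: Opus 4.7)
The plan is to prove Lemma~\ref{l:tracederivbounds} by induction on $k$, using the trace recursion $x_{k+2} = 2 x_k x_{k+1} - x_{k-1}$, the Fricke--Vogt invariant $I$ (which allows us to solve for one trace in terms of the two neighboring ones), and the band combinatorics of Lemma~\ref{l:bandtypes} to set up the inductive hypothesis on the correct bands.

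First I would convert the band-type hypothesis to pointwise estimates on the traces themselves. For a type B band $B_{k+2}$ of $\sigma_{k+2}$, by definition $B_{k+2} \subseteq \sigma_k$, whence $|x_k|, |x_{k+2}| \le 1$; the identity $I = x_k^2 + x_{k+1}^2 + x_{k+2}^2 - 2 x_k x_{k+1} x_{k+2} - 1$ combined with $I \ge \mu$ then forces $|x_{k+1}| \ge \sqrt{\mu} - 1$ after solving the resulting quadratic. The analogous analysis on a type A band $B_{k+1}$ of $\sigma_{k+1}$, which by Lemma~\ref{l:bandtypes} sits inside a type B band of $\sigma_k$, yields $|x_{k-2}|, |x_k|, |x_{k+1}| \le 1$ and $|x_{k-1}| \ge \sqrt{\mu} - 1$ throughout $B_{k+1}$.

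Next I would differentiate. Eliminating $x_{k-1}$ via $x_{k-1} = g_\pm(x_k, x_{k+1}, I(z))$ and differentiating $x_{k+2} = 2 x_k x_{k+1} - g_\pm(x_k, x_{k+1}, I(z))$ in $z$ yields
\[
x_{k+2}' = \bigl(2 x_{k+1} - \partial_u g\bigr) x_k' + \bigl(2 x_k - \partial_v g\bigr) x_{k+1}' - \partial_I g \cdot I'(z).
\]
Lemma~\ref{l:kklderivbounds} bounds $|\partial_u g|, |\partial_v g| \le 1$ and $|\partial_I g| \le (2\sqrt{\mu})^{-1}$; Lemma~\ref{l:fqw:inv} supplies a uniform bound on $|I'(z)|$. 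On type B bands of $\sigma_{k+2}$ the coefficient $|2 x_{k+1} - \partial_u g|$ is of order $2\sqrt{\mu}$ while the coefficient of $x_{k+1}'$ is at most $3$ and the remainder is $O(1/\sqrt{\mu})$; for type A bands, a parallel identity (obtained by eliminating $x_{k-2}$ via the invariant and differentiating $x_{k+1} = 2 x_{k-1} x_k - x_{k-2}$) isolates a factor $2|x_{k-1}|$ of size $\sqrt{\mu}$. In either case, once an inductive upper bound on an auxiliary derivative ratio (coming from the previous induction step) is available, dividing through and choosing $\mu$ large enough yields both the upper bound $|x_{k+2}'/x_k'| \le \Xi$ and the lower bound $|x_{k+2}'/x_k'| \ge \xi$. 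The base cases at low $k$ follow by direct computation from the explicit expressions for $x_{-1}, x_0, x_1$ displayed in \eqref{eq:fibx1}.

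The hard part will be setting up the inductive hypothesis correctly and tuning constants so that the inequalities close. The ratio $|x_{k+1}'/x_k'|$ appearing in the identity for $x_{k+2}'$ on a type B band $B_{k+2} \subseteq \sigma_{k+2}$ must be controlled pointwise at every $z \in B_{k+2}$, yet $B_{k+2}$ is disjoint from $\sigma_{k+1}$ (which would be the natural domain of an inductive hypothesis at the previous level). One must therefore carry separate pointwise estimates on $|x_k'|$ and $|x_{k+1}'|$ along the tower of nested bands, so that the ratio bound is automatically available at each point of $B_{k+2}$. Managing this bookkeeping, together with tuning $m$, $M$, and $\lambda = \lambda(\theta_b)$ so that the main term $2\sqrt{\mu}$ dominates all other contributions uniformly in $k$, constitutes the main technical content of the proof.
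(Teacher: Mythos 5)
Your proposal has the right skeleton — induction over band types, the Fricke--Vogt invariant with $g_\pm$, Lemma~\ref{l:kklderivbounds} for the partial derivatives, Lemma~\ref{l:fqw:inv} for $I'$ — but the specific decompositions you propose cannot close, for two structural reasons and one quantitative one.

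\emph{First}, for a type B band $B_{k+2}$, your identity
\[
x_{k+2}' = \bigl(2 x_{k+1} - \partial_u g\bigr) x_k' + \bigl(2 x_k - \partial_v g\bigr) x_{k+1}' - \partial_I g \cdot I'(z)
\]
requires a pointwise bound on $x_{k+1}'/x_k'$ on $B_{k+2}$. But $B_{k+2}$ is type B, hence lies in $\sigma_k$ and is disjoint from $\sigma_{k+1}$; the inductive hypothesis controls $x_{k+1}'/x_k'$ only on bands of $\sigma_{k+1}$, which is exactly where you are \emph{not}. You flag this as the ``hard part'' and suggest carrying separate pointwise estimates on $|x_k'|$ and $|x_{k+1}'|$ through the band tower, but this is not fleshed out, and controlling $x_{k+1}'$ off $\sigma_{k+1}$ (where the trace $x_{k+1}$ itself escapes to size $\sqrt{\mu}$) is not something the induction gives you. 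The paper sidesteps the obstruction entirely by iterating the trace recursion once, $x_{k+2} = (4x_k^2 - 1)x_{k-1} - 2 x_k x_{k-2}$, so that after differentiating and dividing by $x_k'$ only the ratios $x_{k-1}'/x_k'$ and $x_{k-2}'/x_k'$ appear; the latter is directly covered by the inductive hypothesis on $B_k \supseteq B_{k+2}$, and the former is controlled by substituting $x_{k-1} = g_\pm(x_k, x_{k-2}, I)$ exactly as you envision.

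\emph{Second}, for a type A band $B_{k+1}$ you propose to eliminate $x_{k-2}$ via the invariant, i.e.\ $x_{k-2} = g_\pm(x_{k-1}, x_k, I)$. But $B_{k+1}$ sits inside a type B band $B_k$, so $|x_{k-1}| \sim \sqrt{\mu}$ is \emph{large} there, and one of the arguments of $g_\pm$ then violates the hypothesis $|u|, |v| \le 1$ of Lemma~\ref{l:kklderivbounds}; the partial-derivative bounds simply do not hold. The large trace must be the one eliminated: the paper writes $x_{k-1} = g_\pm(x_k, x_{k-2}, I)$, where $|x_k|, |x_{k-2}| \le 1$ on $B_{k+1} \subseteq B_k \subseteq \sigma_{k-2}$, so the lemma applies.

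\emph{Third}, your claim that the term $\partial_I g \cdot I'(z)$ is $O(1/\sqrt{\mu})$ is off by a factor of $\mu$. Since $|I'(z)| \approx \kappa^2 \approx \mu / \sin^2\theta_b$ on the spectrum, one only gets $|\partial_I g \cdot I'| \lesssim \sqrt{\mu} / (2\sin^2\theta_b)$, which is the \emph{same} order $\sqrt{\mu}$ as the leading term — this is the principal new difficulty relative to the Schr\"odinger case (where $I$ is constant on the spectrum), and it is precisely why the hypothesis $\theta_b > \pi/4$ is needed: it forces $1/\sin^2\theta_b < 2$ so that, after the triangle inequality, the main term still wins. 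Glossing this term as negligible would give the wrong constants and obscure why $\theta_b > \pi/4$ appears in the statement at all.
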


\noindent \textit{Notation.} We adopt the following notation, just over the course of the following proof. Given quantities $f$ and $g$ which depend on $\theta_a$ and $\theta_b$ (subject to all of our various standing assumptions thereupon), we denote $f \sim g$ if for every $\theta_b$, there exists $C = C(\theta_b)$ with $C^{-1} f \le g \le Cf$.

\begin{proof}
 One can follow the proofs of \cite[Lemma~3.5]{DT03} and \cite[Lemma~5.5]{KKL} and prove these statements by induction, but there are two new difficulties. First, the initial conditions are different in the CMV case, so we must verify that the induction is well-founded. Once this is done, we have to deal with extra terms which arise from the failure of $I$ to be constant on the spectrum. To deal with the first base case, observe that
$$
\left| \frac{x_0'(z)}{x_{-1}'(z)} \right|
=
\frac{\sec\theta_a}{\sec\theta_b}
\sim
\sqrt \mu
$$
for all $z \in \partial \D$. Next, for all $z \in \sigma_1$, we have
$$
-\cos(\theta_a-\theta_b)
\le
\mathrm{Re}(z^2)
\le
\cos(\theta_a + \theta_b)
<
\cos(2\theta_b)
<
0
$$
by \eqref{eq:fibx1} and standard trigonometric identities. Consequently, $|\mathrm{Im}(z)| \ge \sin\theta_b$ and $|\mathrm{Im}(z^2)| \ge \sin(\theta_a-\theta_b)$, so we obtain
\begin{align*}
\left| \frac{x_1'(z)}{x_{-1}'(z)} \right|
& =
\left| \frac{(2z-2z^{-3})\sec\theta_a\sec\theta_b}{(1-z^{-2})\sec\theta_b} \right| \\
& =
2 \left| \frac{\mathrm{Im}(z^2)}{\mathrm{Im}(z)} \right| \sec\theta_a \\
& \sim
\sqrt\mu.
\end{align*}
Now, we proceed inductively. Notice that the base case (and then each inductive step thereafter) implies $|x_k'| > 1$ on $\sigma_k$ as long as $\mu$ is large enough. Taking $\lambda(\theta_b)$ sufficiently large, we have $|x_k'(z)| > 1$ on $\sigma_k$ for $k = -1, 0$.

In the argument below, we will need to estimate $(\partial g_\pm/\partial I) I'(z)$ for $z \in \sigma_k$. Since $|I'(z)| = |\mathrm{Re}(z)|\kappa^2$ and $|\mathrm{Im}(z)| \ge\sin\theta_b$ for all $z\in\sigma_k$, we have
\begin{equation} \label{eq:ideriv:bound}
\left| \frac{\partial g_\pm}{\partial I} I'(z) \right|
\le
\frac{1}{2\sqrt\mu} \cdot \frac{\mu}{\sin^2\theta_b}
=
\frac{\sqrt\mu}{2\sin^2\theta_b}
\end{equation}
for all $z \in \sigma_k$ and all $ k \ge -1$.

\bigskip

\noindent \textbf{Case 1: Type A Band.} Suppose $B_{k+1}$ is a type-A band of $\sigma_{k+1}$ for some $k \ge 1$; let $B_k$ denote the band of $\sigma_k$ with $B_{k+1} \subseteq B_k$. Consequently, we have $B_k \cap \sigma_{k-1} = \emptyset$ and $B_k \subseteq \sigma_{k-2}$, so $B_k$ is a type B band. Differentiating the renormalization map \eqref{eq:trace:renorm} and dividing by $x_k'$, we get
$$
\frac{x_{k+1}'}{x_k'}
=
2x_{k-1} + \frac{2 x_k x_{k-1}'}{x_k'} - \frac{x_{k-2}'}{x_k'}.
$$
To estimate the second term on the right hand side, note that $x_{k-1} = g_\pm(x_k, x_{k-2},I)$, so
$$
x_{k-1}'
=
\frac{\partial g_\pm}{\partial u} x_{k-2}'
+
\frac{\partial g_\pm}{\partial v} x_{k}'
+
\frac{\partial g_\pm}{\partial I} I',
$$
which implies
$$
\left|\frac{x_{k-1}'}{x_k'} \right|
\le
\left|\frac{x_{k-2}'}{x_k'}\right| + 1 + \frac{\sqrt\mu}{2|x_k'|\sin^2\theta_b}
\le
2 + \frac{\sqrt\mu}{2\sin^2\theta_b}
$$
on $B_{k+1}$, where we have used \eqref{eq:ideriv:bound}, $B_{k+1} \subseteq B_k \subseteq \sigma_{k-2}$, Lemma~\ref{l:kklderivbounds}, and the inductive hypothesis. Consequently,
\begin{equation} \label{eq:KKL1}
\left| \frac{2 x_k x_{k-1}'}{x_k'} - \frac{x_{k-2}'}{x_k'} \right|
\le
5 + \frac{\sqrt\mu}{\sin^2\theta_b}.
\end{equation}
Since $|x_k|, |x_{k-2}| \le 1$ on $B_{k+1}$, one can use $x_{k-1} = g_\pm(x_k,x_{k-2},I)$ to see that
$$
\sqrt\mu - 1
\le
|x_{k-1}|
\le
\sqrt{\mu} + 2
$$
on $B_{k+1}$.
Combining this with \eqref{eq:KKL1} and using the triangle inequality twice, we obtain
$$
\left(2 - \frac{1}{\sin^2\theta_b} \right) \sqrt{\mu} - 7
\le
\left|\frac{x_{k+1}'}{x_k'}\right|
\le
\left(2 + \frac{1}{\sin^2\theta_b}\right)\sqrt{\mu} + 9
$$
on $B_{k+1}$. Note that one needs $\theta_b > \pi/4$ for the lower bound to have any significance, even for large $\mu$.

\bigskip

\noindent \textbf{Case 2: Type B Band.} Let us suppose that $B_{k+2}$ is a type-B band of $\sigma_{k+2}$ for some $k \ge 0$; Let $B_k$ denote the band of $\sigma_k$ which contains $B_{k+2}$. There are two subcases to consider. (Notice that $k=0$ falls in Subcase~2ii, so one need not worry about the $x_{-2}$'s that might appear in Subcase~2i).

\bigskip

\noindent \textbf{Subcase 2i: $B_k$ is type B.} This implies $B_{k+2} \subseteq B_k \subseteq \sigma_{k-2}$ and $B_k \cap \sigma_{k-1} = \emptyset$. As in the previous argument, we have
$$
\sqrt\mu - 1
\le
|x_{k+1}|
\le
\sqrt{\mu} + 2
$$
on $B_{k+2}$, and we may estimate
\begin{equation} \label{eq:subcase2iderivbound}
\left|\frac{x_{k-1}'}{x_k'} \right|
\le
2 + \frac{\sqrt\mu}{2\sin^2\theta_b}
\end{equation}
on $B_{k+2}$. Using the trace recursion \eqref{eq:trace:renorm} and its derivatives, we obtain
$$
\frac{x_{k+2}'}{x_k'}
=
4 x_{k+1} + 2x_{k-2} + (4 x_k^2 - 1)\frac{x_{k-1}'}{x_k'} - 2x_k \frac{x_{k-2}'}{x_k'}.
$$
This time, the $x_{k+1}$ term dominates. Using \eqref{eq:subcase2iderivbound}, the inductive hypothesis, and $B_{k+2} \subseteq B_k \subseteq \sigma_{k-2}$ to estimate the remaining terms for $z \in B_{k+2}$, we have
\begin{align*}
\left| 2x_{k-2} + (4 x_k^2 - 1)\frac{x_{k-1}'}{x_k'} - 2x_k \frac{x_{k-2}'}{x_k'} \right|
& \le
2 + 3\left(2+\frac{\sqrt\mu}{2\sin^2\theta_b} \right) + 2 \\
& =
10 + \frac{3\sqrt\mu}{2\sin^2\theta_b}
\end{align*}
on $B_{k+2}$. Using the triangle inequality again, we have
$$
\left( 4 - \frac{3}{2\sin^2\theta_b} \right) \sqrt\mu - 14
\le
\left| \frac{x_{k+2}'}{x_k'} \right|
\le
\left( 4 + \frac{3}{2\sin^2\theta_b} \right) \sqrt\mu + 18
$$
on $B_{k+2}$.

\bigskip

\noindent \textbf{Subcase 2ii: $B_k$ is type A.} We have $B_{k+2} \subseteq B_k \subseteq \sigma_{k-1}$, and $B_k \cap \sigma_{k-2} = \emptyset$. Differentiating the renormalization equation, we get
$$
\frac{x_{k+2}'}{x_k'}
=
2x_{k+1} + 2x_k \frac{x_{k+1}'}{x_k'} - \frac{x_{k-1}'}{x_k'}.
$$
Again, $\sqrt\mu - 1 \le |x_{k+1}| \le \sqrt\mu + 2$ on $B_{k+2}$, and, like the previous cases, we have $x_{k+1} = g_\pm(x_k, x_{k-1}, I)$, so
$$
\left|\frac{x_{k+1}'}{x_k'}\right|
\le
2 + \frac{\sqrt\mu}{2\sin^2\theta_b}
$$
on $B_{k+2}$. Therefore, we have
$$
\left| 2x_k \frac{x_{k+1}'}{x_k'} + \frac{x_{k-1}'}{x_k'} \right|
\le
5 + \frac{\sqrt\mu}{\sin^2\theta_b}
$$
on $B_{k+2}$. Applying the triangle inequality two more times, we have
$$
\left(2 - \frac{1}{\sin^2\theta_b} \right) \sqrt{\mu} - 7
\le
\left|\frac{x_{k+2}'}{x_k'}\right|
\le
\left(2 + \frac{1}{\sin^2\theta_b}\right)\sqrt{\mu} + 9
$$
on $B_{k+2}$.
\end{proof}

\begin{proof}[Proof of Lemma~\ref{l:dtkklbounds}]
This follows immediately from Lemma~\ref{l:tracederivbounds}.
\end{proof}

Lastly, we will prove the power-law bounds on transfer matrices at spectral parameters in the periodic approximating spectra. To do this, we need trace bounds on these spectra, which the following lemma supplies. The key ingredient is that the escape condition is a bit simpler to check in our situation. Notice that something must be said in our situation, since some choices of $\theta_a$ and $\theta_b$ can lead to $|x_{-1}(z)| > 1$ for $z \in \Sigma$, whereas one has $x_{-1} \equiv 1$ in the Schr\"odinger case.

\begin{lemma} \label{l:easyesc}
If $0 < \theta_b < \theta_a < \pi/2$, $\mu \ge 32$, and $z \in \partial \D$, the trace orbit $(x_k(z))_{k\ge-1}$ is unbounded if and only if there exists $k_0 \ge 0$ such that
\begin{equation} \label{eq:easyescape}
|x_{k_0}(z)|, \, |x_{k_0 + 1}(z)|
>
1.
\end{equation}
Consequently, if $z \in \Sigma = B_\infty$, one has $|x_k(z)| \le \kappa + 2$ for all $n \ge -1$, where $\kappa$ is as in \eqref{eq:fqw:kappadef}.
\end{lemma}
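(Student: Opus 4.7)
The plan has two components: prove the iff, then deduce the $\kappa+2$ bound.

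For the ($\Leftarrow$) direction, suppose there exists $k_0 \ge 0$ with $|x_{k_0}|, |x_{k_0+1}| > 1$, and choose $k_0$ minimal. If $k_0 \ge 1$, minimality forces $|x_{k_0-1}| \le 1 < |x_{k_0+1}|$, so Lemma~\ref{l:escape} (with $\delta = 0$) yields escape. If $k_0 = 0$ and $|x_{-1}| \le 1$, the same direct application works. The remaining case $k_0 = 0$ with $|x_{-1}| > 1$ is the main obstacle; here the hypothesis $\theta_b < \theta_a$ gives
\[
\frac{|x_0|}{|x_{-1}|} = \frac{\sec\theta_a}{\sec\theta_b} > 1,
\]
so $|x_0| > |x_{-1}| > 1$. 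Applying the trace recursion and using $|x_0| > |x_{-1}|$ together with $|x_1| > 1$, we obtain
\[
|x_2| \ge 2|x_0||x_1| - |x_{-1}| > 2|x_0| - |x_{-1}| > |x_0| > 1,
\]
and Lemma~\ref{l:escape} may now be applied with $k_0 = 1$ (and $\delta = 0$) to the shifted triple, giving escape.

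For the ($\Rightarrow$) direction, suppose no such $k_0$ exists. For every $k \ge 1$ with $|x_k| > 1$, the pair conditions at indices $k-1$ and $k$ simultaneously force $|x_{k-1}|, |x_{k+1}| \le 1$. Solving the Fricke--Vogt invariant $I = x_{k-1}^2 + x_k^2 + x_{k+1}^2 - 2 x_{k-1} x_k x_{k+1} - 1$ as a quadratic in $x_k$ yields
\[
x_k = x_{k-1} x_{k+1} \pm \sqrt{I + (1-x_{k-1}^2)(1-x_{k+1}^2)},
\]
whence $|x_k| \le 1 + \sqrt{I+1} \le 1 + \sqrt{\kappa^2 + 1} \le \kappa + 2$, using $I(z) = \kappa^2 (\mathrm{Im}(z))^2 \le \kappa^2$ from Lemma~\ref{l:fqw:inv}. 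Together with the explicit bounds $|x_{-1}| \le \sec\theta_b$ and $|x_0| \le \sec\theta_a$ from the formulae for $x_{-1}$, $x_0$, this shows the whole orbit is bounded.

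For the final statement, if $z \in \Sigma = B_\infty$ then by the iff no consecutive pair of traces exceeds $1$, and additionally $\Sigma \subseteq \sigma_{-1}$ (established in Subsection~\ref{ssec:fbands}) strengthens the base case to $|x_{-1}(z)| \le 1$. The argument of the previous paragraph now applies uniformly to all $k \ge -1$: whenever $|x_k| > 1$ for some $k \ge 0$, both neighbors $|x_{k-1}|, |x_{k+1}| \le 1$ (the case $k = 0$ being covered by $|x_{-1}| \le 1$), and the invariant estimate delivers $|x_k| \le \kappa + 2$; for $|x_k| \le 1$ the bound is trivial.
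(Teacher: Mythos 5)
Your proof is correct, and it arrives at the lemma by a route that differs from the paper's in one significant place. Where you work, the paper simply invokes the containment $\sigma_{-1} \supseteq \sigma_k$ for all $k$ (and hence $\sigma_{-1} \supseteq \Sigma$) from Lemma~\ref{l:bandtypes}: if $z \notin \sigma_{-1}$ then $z \notin \sigma_0 \cup \sigma_1$ and $z \notin \Sigma$, so both sides of the iff are true at once; if $z \in \sigma_{-1}$ then $|x_{-1}| \le 1$ and easy escape collapses to the general escape condition \eqref{eq:escape:cond}. Your treatment of the troublesome case $k_0 = 0$, $|x_{-1}| > 1$ is instead a direct computation: you observe $|x_0|/|x_{-1}| = \sec\theta_a/\sec\theta_b > 1$ and push the recursion one step to get $|x_2| > |x_0|$, which lets you apply Lemma~\ref{l:escape} at the shifted index $k_0 = 1$. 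This is more hands-on but self-contained; in particular your proof of the \emph{iff} never invokes $\mu \ge 32$ or the band combinatorics, and only the final $\kappa + 2$ bound uses $\Sigma \subseteq \sigma_{-1}$. Your ($\Rightarrow$) direction and the invariant-based estimate $|x_k| \le x_{k-1}x_{k+1} + \sqrt{I + (1-x_{k-1}^2)(1-x_{k+1}^2)} \le 1 + \sqrt{\kappa^2+1} \le \kappa+2$ match the paper's, and your handling of the edge indices $k=-1,0$ (trivial bounds by $\sec\theta_b$, $\sec\theta_a$ for boundedness; then $\Sigma \subseteq \sigma_{-1}$ to upgrade to the $\kappa+2$ bound) is careful and correct. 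The paper's version is shorter because it leans on the band-structure machinery already in place; yours trades a citation for a three-line computation, which arguably makes the iff more transparent and slightly more general.
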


\begin{proof} By Lemma~\ref{l:bandtypes}, we have $\sigma_{-1} \supseteq \sigma_k$ for all $k$, and $\sigma_{-1} \supseteq \Sigma$. There are two possibilities: if $z \notin \sigma_{-1}$, then one has $z \notin \Sigma$ and \eqref{eq:easyescape} holds for some $k_0$. On the other hand, if $z \in \sigma_{-1}$, then $|x_{-1}(z)| \le 1$, and so the existence of $k_0$ which satisfies \eqref{eq:easyescape} is equivalent to the existence of $k_0$ which satisfies the general escape condition \eqref{eq:escape:cond}.

The claimed bound on $|x_k(z)|$ for $z \in B_\infty$ follows from a simple observation. Namely, if $|x_k(z)| > 1$ for some $z \in B_\infty$, then $|x_{k\pm 1}(z)| \le 1$, whence
$$
|x_k|
=
\left| x_{k-1}x_{k+1} \pm
\sqrt{I + (1-x_{k-1}^2)(1-x_{k+1}^2)} \right|
\le
1 + \sqrt{\kappa^2 + 1}
\le
2 + \kappa,
$$
where we have suppressed the dependence of the $x$'s and $I$ on $z$.

\end{proof}

\begin{proof}[Proof of Lemma~\ref{l:dt:it}]

Since this is nearly identical to \cite{IT91} and \cite[Lemma~3.2]{DT03}, we only sketch the main steps of the argument. The main idea is to begin by estimating the norms of the matrices $M_n$, and then to decompose an arbitrary $Z$-type matrix over an even number of sites into a product of $M_j$'s by using the fact that every integer has a natural ``base Fibonacci'' expansion. Finally, one can interpolate to get transfer matrices over an odd number of sites.

First, from Lemma~\ref{l:easyesc}, we have $|x_k(z)| \le \kappa + 2$ for every $z \in \sigma_k$, where $\kappa$ is as in \eqref{eq:fqw:kappadef}. As a first crude estimate, we have
\begin{equation}
\| M_j(z) \|
\le
(\kappa + 2)^j
\end{equation}
for $j \le k$ and $z \in \sigma_k$. As in \cite{DT03}, this follows from the recursive relationship among the $M_n$-type matrices and the Cayley--Hamilton Theorem:
$$
M_j
=
x_{j-1} M_{j-2} - M_{j-3}^{-1}.
$$
One also uses the fact that $\det M_\ell = 1$ for all $\ell$, which implies $\|M_{j-3}^{-1}\|  = \|M_{j-3}\|$.
\bigskip

As in \cite[Lemma~5]{IT91} and \cite[Proposition~3.2]{DT03}, for every $i \ge 2$, $j \ge 1$, we may construct polynomials $P_j^{(1)}, \ldots, P_j^{(4)}$ of degree at most $j$ in $x_{i-1}, \ldots,x_j$ such that
$$
M_k M_{k+j}
=
  P_j^{(1)} M_{i+j}
+ P_j^{(2)} M_{i+j-1}
+ P_j^{(3)} M_{i+j-2}
+ P_j^{(4)} I.
$$
Moreover,
\begin{equation} \label{eq:polygrowth1}
\sum_{\ell = 1}^4 \left| P_j^{(\ell)} \right| \big( |x_{i-1}|, \ldots, |x_{i+j}| \big)
\le
(2\kappa + 5)^j,
\end{equation}
where $|P|$ denotes the polynomial obtained by replacing each coefficient of $P$ by its absolute value. Indeed, \eqref{eq:polygrowth1} follows from a verbatim repition of the arguments of \cite{DT03, IT91}, since all that is used in those references is Cayley--Hamilton and the trace bounds for indices bounded above by $k$. By repeating the next step of the argument, we obtain
$$
\| Z(2m, 0 ; z) \|
\le
d^{b_m}
\text{ for all } 1 \le m \le F_k,
$$
where $d=(\kappa+2)(2\kappa+5)^2$ and $b_m = \max\{ n : F_n \le m \}$; in particular, $b_m \le k$. Since $F_n \sim \varphi^n$, we have
$$
b_m
\le
\frac{\log m}{\log\varphi} + D
$$
for some constant $D$, whence
$$
\| Z(2m,0;z) \|
\le
d^{\frac{\log m}{\log\varphi} + D}
=
d^D m^{\tau/2}.
$$
We obtain \eqref{eq:dt:it} for all $n > 0$ by interpolating to get odd values of $n$. The negative case follows from the reflection symmetry \eqref{eq:fib:refl}. One may also obtain the same the result for $T$-type matrices by using \eqref{eq:szego:gz:rel}.
\end{proof}

\section{The Thue-Morse Quantum Walk}\label{sec.8}

Let us consider a quantum walk as in Section~\ref{sec:fqw}, but for which the coins $R_a$ and $R_b$ are distributed according to an element of the Thue-Morse subshift, rather than the Fibonacci subshift.

More precisely, consider the alphabet $\A = \{a,b\}$ as before. The Thue-Morse substitution is defined by
$$
S:a \mapsto ab, \quad b \mapsto ba.
$$
Iterating $S$ on $a$, we obtain a sequence of words $w_n = S^n(a)$, and an infinite word
$$
u_{\TM}
=
\lim_{n \to \infty} w_n
=
abbabaabbaababba\ldots,
$$
and we define the Thue-Morse subshift to be the space of all sequences over the alphabet $\A$ with the same local structure as $u_{\TM}$, i.e.,
$$
\Omega_{\TM}
=
\left\{
\omega \in \A^{\Z} : \text{every finite subword of $\omega$ occurs in } u_{\TM}
\right\}.
$$
As before, we can associate to each $\omega \in \Omega_\TM$ a sequence of coins $Q_{\omega,n} = R_{\omega_n}$ and a sequence of Verblunsky coefficients $\alpha_\omega$ as in Section~\ref{sec:fqw}. Now, consider the transfer matrices over the words $w_n$. More precisely, fix $\alpha = \alpha_{\omega_0}$, where $\omega_0 = u_\TM^R | u_\TM \in \Omega_\TM$, and define
$$
M_n(z)
=
Z(2^{n+1}, 0;z),
\quad
n \geq 0, \, z \in \C.
$$
Also, define $M_{-1}$ as in the previous section. Let
$$
t_n(z)
=
\tr(M_n(z)).
$$
The recursive relationship between the $M_n$'s implies a recursive relationship amongst the $t_n$'s, viz.
\begin{equation} \label{eq:tmtracemap}
t_{n+1}
=
t_{n-1}^2 (t_n - 2) + 2, \, n \ge 2.
\end{equation}
The sets
$$
\sigma_n
=
\{ z \in \partial \D : |t_n(z)| \leq 2 \}
$$
correspond to spectra of quantum walk update rules whose coins are distributed $2^n$-periodicially according to $w_n$. Since the initial conditions are the same as in Section~\ref{sec:fqw}, we still have
\begin{align*}
t_{-1}(z)
& =
\mathrm{Re}(z) \sec\theta_b \\
t_0(z)
& =
\mathrm{Re}(z) \sec\theta_a \\
t_1(z)
& =
\mathrm{Re}\left(z^2\right) \sec\theta_a\sec\theta_b + \tan\theta_a \tan\theta_b
\end{align*}

\begin{lemma}\label{l:degenerate:ens}
If $z_0 \in \partial \D$ satisfies $t_n(z_0) = 2$ for some $n \ge 3$ and $t_2(z_0) \neq 2$, then $z_0$ corresponds to a closed gap of $\sigma_n$.
\end{lemma}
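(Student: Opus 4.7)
The plan is to reduce the claim to proving $t_n'(z_0) = 0$, at which point the standard Floquet description of $\sigma_n = \{z \in \partial\D : |t_n(z)| \leq 2\}$ identifies $z_0$ as a closed gap: a point where the boundary value $|t_n| = 2$ is attained with vanishing tangential derivative is exactly the configuration in which the two adjacent bands of $\sigma_n$ touch. To prove $t_n'(z_0) = 0$, I will differentiate the trace map \eqref{eq:tmtracemap} and exploit the hypothesis $t_2(z_0) \neq 2$ via a downward-induction argument to locate a seed index at which the derivative vanishes, then propagate that vanishing forward to index $n$.

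Differentiating \eqref{eq:tmtracemap} gives, for every $k \geq 2$,
\[
t_{k+1}'(z) = 2\, t_{k-1}(z)\, t_{k-1}'(z)\,(t_k(z) - 2) + t_{k-1}(z)^2\, t_k'(z).
\]
Whenever $t_k(z_0) = 2$, the first summand vanishes at $z_0$, yielding the twin propagation identities $t_{k+1}(z_0) = 2$ and $t_{k+1}'(z_0) = t_{k-1}(z_0)^2\, t_k'(z_0)$. In particular, once an index $k$ is found for which both $t_k(z_0) = 2$ and $t_k'(z_0) = 0$ hold, both properties immediately propagate to all higher indices.

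Now set $m := \min\{k \in \{3, \ldots, n\} : t_k(z_0) = 2\}$, which is well-defined since $t_n(z_0) = 2$ and $n \geq 3$. I claim $t_{m-1}(z_0) \neq 2$: if $m = 3$, this is the hypothesis $t_2(z_0) \neq 2$, and if $m > 3$, then $m - 1 \in \{3, \ldots, n\}$ and minimality of $m$ applies. Applying \eqref{eq:tmtracemap} at index $m$ gives $0 = t_m(z_0) - 2 = t_{m-2}(z_0)^2\,(t_{m-1}(z_0) - 2)$, which forces $t_{m-2}(z_0) = 0$. Substituting $t_{m-2}(z_0) = 0$ into the differentiated recursion at index $k = m-1$ makes both summands vanish, so $t_m'(z_0) = 0$. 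A straightforward induction on $k$ from $m$ up to $n$ using the propagation identities then yields $t_n'(z_0) = 0$, completing the proof.

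The only subtle point is establishing the well-definedness of the seed index $m$ with $m \geq 3$, which is precisely where the hypothesis $t_2(z_0) \neq 2$ is used; everything else is a direct computation with the trace map. There is no substantial analytic obstacle beyond this bookkeeping, since the recursion has the convenient property that $t_k = 2$ is a ``fixed value'' for the trace dynamics, on which the derivative factor structure collapses.
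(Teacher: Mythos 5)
Your argument is correct, and it follows the second route the paper itself suggests: rather than showing $M_j(z_0)=I$ for large $j$, you differentiate the trace recursion \eqref{eq:tmtracemap} and propagate $t_j(z_0)=2$, $t_j'(z_0)=0$ forward from a seed index. Your bookkeeping is a bit more careful than the paper's one-line sketch (you correctly locate the seed at $m = k+2$, where $k=m-2$ is the index with $t_{m-2}(z_0)=0$, and you verify that the recursion indices stay in the valid range $n\ge 2$), so no issues.
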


\begin{proof}
The proof is identical to the proof of the corresponding fact for Schr\"odinger/Jacobi transfer matrices, so we only sketch the details; compare \cite[Proposition~5.1]{DT03}. Using the trace recurstion \eqref{eq:tmtracemap}, the stated conditions on $z_0$ imply that $t_k(z_0) = 0$ for some $1 \leq k \leq n - 2$. One can then use this to show that $M_j(z_0) = I$ for all $j \ge k+1$ (in particular, for $j = n$). Alternatively, one can differentiate \eqref{eq:tmtracemap} to see that $t_j'(z_0) = 0$ for all $j \ge k+1$.
\end{proof}

\begin{theorem}\label{t:tmqw:transp}
For all $\omega \in \Omega_{\TM}$, one has $\widetilde\beta^-_{\delta_{-1}} (p)
\geq
1 - \frac{1}{p}$.
\end{theorem}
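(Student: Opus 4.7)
The plan is to mimic the proof of Theorem~\ref{t:poly:deloc} by locating a spectral parameter $z_0 \in \partial \D$ at which the Gesztesy--Zinchenko cocycle of $\E_\omega$ is uniformly bounded in $n$ and $m$, and then applying Theorem~\ref{t:dt:powerlaw} with $A_R = \{z_0\}$ and $\gamma = 0$. For such a choice the $1/K$-neighborhood $B_K$ of the single point $z_0$ has $|B_K| \ge 2/K$, so \eqref{eq:momentbound} yields $\langle |X|^p_{\delta_0}\rangle(K) \gtrsim K^{p-1}$, hence $\widetilde{\beta}^-_{\delta_0}(p) \ge 1 - 1/p$; the stated bound for the $\delta_{-1}$ initial state then follows from the translation invariance in Remark~\ref{rem:shift}(1), since the shifted sequence still belongs to $\Omega_\TM$.

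To produce $z_0$, I would use the explicit formula
\[
t_1(z) = \Re(z^2)\sec\theta_a\sec\theta_b + \tan\theta_a\tan\theta_b.
\]
For $\theta_a,\theta_b\in(-\pi/2,\pi/2)$ one has $\sin\theta_a\sin\theta_b\in(-1,1)$, so $t_1(z_0) = 0$ admits solutions $z_0 \in \partial \D$, and each such $z_0$ satisfies $\Re(z_0^2) = -\sin\theta_a\sin\theta_b > -1$, whence $z_0 \ne \pm i$ and $t_0(z_0) \ne 0$. The trace map \eqref{eq:tmtracemap} then gives $t_2(z_0) = 2 - 2\,t_0(z_0)^2 \ne 2$ and $t_3(z_0) = 2$, so Lemma~\ref{l:degenerate:ens} applies with $n=3$, $k=1$, forcing $M_j(z_0) = I$ for all $j \ge 2$.

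The next step is to upgrade this to an $\omega$-uniform cocycle bound. Writing $\bar M_j$ for the transfer matrix across $\bar w_j := S^j(b)$, the substitution identities $w_{j+1} = w_j \bar w_j$ and $\bar w_{j+1} = \bar w_j w_j$ translate into $M_{j+1} = \bar M_j M_j$ and $\bar M_{j+1} = M_j \bar M_j$, so $\bar M_j(z_0) = I$ for $j \ge 2$ as well. By recognizability of the Thue--Morse substitution, every $\omega \in \Omega_\TM$ admits a unique $2$-partition expressing it (modulo boundary pieces of length $<4$) as a biinfinite concatenation of blocks $w_2$ and $\bar w_2$; since each such block contributes the identity to the GZ cocycle at $z_0$, one obtains $\|Z(n,m;z_0)\| \le C$ uniformly in $n,m\in\Z$, with $C$ depending only on $\theta_a,\theta_b$. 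Passing to the Szeg\H{o} cocycle via \eqref{eq:szego:gz:rel}, which is legitimate because $\|\alpha_\omega\|_\infty < 1$, furnishes hypothesis \eqref{eq:szego:powerbounds} of Theorem~\ref{t:dt:powerlaw} with $\gamma = 0$ and $C$ independent of $R$.

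The hard part will be making this uniform bound genuinely $\omega$-independent. The trace identities of Subsection~\ref{ssec:trace} are written for the specific reference point used there, but the vanishing $M_j(z_0) - I = \bar M_j(z_0) - I = 0$ depends only on the shape of the words $w_j$ and $\bar w_j$, not on where they sit inside $\omega$, so once the $2$-partition is in place the GZ identities propagate automatically. The appeal to Moss\'e's recognizability theorem for the Thue--Morse substitution is the technical step requiring care; everything else is a short variant of the polymer-model calculation used in Theorem~\ref{t:poly:deloc}.
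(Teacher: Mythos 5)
Your strategy coincides with the paper's: produce a $z_0\in\partial\D$ at which the Thue--Morse trace map degenerates, invoke Lemma~\ref{l:degenerate:ens} to trivialize the renormalized cocycle, reduce to the polymer setup of Theorem~\ref{t:poly:deloc}, and conclude via Theorem~\ref{t:dt:powerlaw} with $A_R=\{z_0\}$ and $\gamma=0$. The paper's proof is terse, and your explicit construction of $z_0$ from $t_1(z_0)=0$, together with the discussion of $\omega$-uniformity and the shift from $\delta_0$ to $\delta_{-1}$, is a useful elaboration of what the paper leaves implicit.

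There are, however, two computational slips. First, the identity $t_2(z_0)=2-2t_0(z_0)^2$ applies the recursion~\eqref{eq:tmtracemap} at $n=1$, outside its stated range $n\ge 2$, and it genuinely fails there. Computing directly from $M_2 = M_0\bar M_0^2 M_0$, with $\bar M_0 = M_{-1}$ and the Cayley--Hamilton relation $A^2 = (\tr A)A - I$ for $A\in\SL(2,\C)$, gives
\[
t_2 = t_0\,t_{-1}\,t_1 - t_0^2 - t_{-1}^2 + 2,
\]
which reduces to $t_0^2(t_1-2)+2$ only when $\theta_a=\theta_b$. With $t_1(z_0)=0$ this reads $t_2(z_0) = 2 - t_0(z_0)^2 - t_{-1}(z_0)^2$; since you correctly observe that $z_0 \neq \pm i$, both $t_0(z_0)$ and $t_{-1}(z_0)$ are nonzero, so the needed conclusion $t_2(z_0)\neq 2$ still holds---the slip is repairable, but the derivation as written is wrong. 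Second, and related: from $t_1(z_0)=0$ one gets $M_1^2 = \bar M_1^2 = -I$, whence $M_3 = M_1\bar M_1^2 M_1 = I$ and $\bar M_3 = \bar M_1 M_1^2\bar M_1 = I$, so the cocycle trivializes from level $3$ on; but $\tr M_2 = t_2 \neq 2$ rules out $M_2 = I$, so the range ``$j\ge 2$'' you assert (inherited, I believe, from a small off-by-one in the paper's Lemma~\ref{l:degenerate:ens}) should be ``$j\ge 3$''. Take $w_3,\bar w_3$ as your polymer blocks and the rest of the argument goes through. Finally, you can bypass Moss\'e's recognizability theorem: hypothesis~\eqref{eq:szego:powerbounds} at scale $R$ depends only on the window $\omega|_{[-R,R]}$, which by the definition of $\Omega_{\TM}$ is a subword of $u_{\TM}$; the uniform cocycle bound established along $u_{\TM}$ therefore transfers to every $\omega\in\Omega_{\TM}$ with no need for a unique-decomposition statement.
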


\begin{proof}
This follows from Lemma~\ref{l:degenerate:ens} and Theorem~\ref{t:dt:powerlaw}, just as in the proof of Theorem~\ref{t:poly:deloc}. In particular, we take $A_R = \{z_0\}$ for all $R \ge 1$, where $z_0 \in \partial \D$ satisifes $t_2(z_0) \neq 2$ and $t_n(z_0) = 2$ for some $n \ge 3$; compare \cite[Theorem~4]{DT03}.
\end{proof}

\section{Ballistic Wavepacket Propagation for Periodic CMV Matrices}\label{sec.9}

We will say that a CMV matrix $\E$ is $p$-periodic if $\alpha_{n+p} = \alpha_n$ for all $n \in \Z$. If necessary, double $p$ so that it can be assumed to be even throughout this section, which is no problem, since nothing that we will do requires a minimal period. This can be thought of as a crystalline model, so the physical expectation is that resistance should go to zero and transport of wave packets should be free (ballistic). The following theorem establishes such a result.\footnote{As we mentioned in the introduction, it is possible to derive this result from \cite[Theorem~4]{AVWW}. We are grateful to Albert H.\ Werner for pointing this out to us.}

\begin{theorem}\label{t:ballistic}
Suppose $\E$ is $p$-periodic for some $p \in \Z_+$. Then, for any $\psi \in \mathcal S$, $\beta^{\pm}_\psi(q) = \widetilde \beta_\psi^\pm(q) = 1$ for all $q > 0$.
\end{theorem}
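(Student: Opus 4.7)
My plan is as follows. The universal ballistic upper bound recalled before Theorem~\ref{t:dt:powerlaw} gives $\beta^\pm_\psi(q) \leq 1$ for all $q > 0$ and $\psi \in \mathcal{S}$, and this bound carries over to the time-averaged exponents by the first remark following Theorem~\ref{DT07.thm1}. The task is therefore to establish matching lower bounds. Following the spirit of \cite{AK98} and \cite{DLY}, I would construct an asymptotic velocity operator. Since extended CMV matrices are pentadiagonal, the formal commutator $[X, \E]$ extends to a bounded operator on $\ell^2(\Z)$; setting $C := \E^{-1}[X, \E]$, the telescoping identity
\[
\E^{-n} X \E^n - X = \sum_{k=0}^{n-1} \E^{-k} C \E^k
\]
holds on $\mathcal{S}$, and the candidate velocity operator is
\[
V\psi := \slim_{n \to \infty} \frac{1}{n}\bigl( \E^{-n} X \E^n - X \bigr)\psi, \quad \psi \in \mathcal{S}.
\]

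To verify existence of $V$ and extract its spectrum, I would exploit the Bloch--Floquet decomposition afforded by $p$-periodicity. A partial Fourier transform in the period variable yields a direct-integral representation $\E \cong \int^\oplus_\T \E(k)\, dk$, where $\E(k)$ is a $p \times p$ unitary matrix depending analytically on $k$. Fiberwise, $C$ decomposes as $C(k)$, and the Ces\`aro sum $\tfrac{1}{n}\sum_{j=0}^{n-1} \E(k)^{-j} C(k) \E(k)^{j}$ converges (by finite-dimensional spectral theory applied to $\E(k)$) to the diagonal part of $C(k)$ in the eigenbasis of $\E(k)$. Identifying that diagonal with the group velocities $\theta'_j(k)$, where $\{e^{i\theta_j(k)}\}$ are the Bloch eigenvalues of $\E(k)$, one concludes that $V$ exists as a bounded self-adjoint operator which is fibered multiplication by $\mathrm{diag}(\theta'_1(k), \ldots, \theta'_p(k))$ in the Bloch representation. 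Standard Floquet theory for periodic CMV matrices (cf.\ \cite[Section~11.2]{S2}) guarantees that $\sigma(\E)$ is purely absolutely continuous, which forces each branch $\theta_j(k)$ to be non-constant on every open set; hence the group velocities are non-zero almost everywhere and $V$ has trivial kernel.

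Finally, one would upgrade existence of $V$ to convergence of all $q$-th moments of the position distribution. Strong convergence of $n^{-1}\E^{-n}X\E^n$ to $V$ on $\mathcal{S}$ gives weak convergence of the rescaled position distribution $\mu^n_\psi := (X/n)_* |\E^n\psi|^2$ to the spectral measure $\mu^V_\psi$ of $V$ in state $\psi$. Combined with the universal ballistic upper bound---which provides the uniform integrability needed to pass to moments of arbitrary positive order---this yields $n^{-q}|X|^q_\psi(n) \to \int |\lambda|^q\, d\mu^V_\psi(\lambda)$, and the integral is strictly positive since $V\psi \neq 0$ for any nonzero $\psi \in \mathcal{S}$. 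This gives $\beta^-_\psi(q) = 1$ for all $q > 0$, and the time-averaged lower bound follows by integrating this convergence against the exponential weight $\tfrac{2}{K} e^{-2k/K}$. The main obstacle I anticipate is the analysis at band crossings: the fiberwise diagonalization is clean only away from the discrete set of $k$ at which eigenvalues of $\E(k)$ collide, and some care is required to verify that the limit $V$ is globally defined as a bounded operator and that $V\psi \neq 0$ for every nonzero $\psi \in \mathcal{S}$, rather than only on a dense subspace.
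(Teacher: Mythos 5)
Your proposal follows the same high-level strategy as the paper: both run the Asch--Knauf / Damanik--Lukic--Yessen program, construct a bounded asymptotic velocity operator from the Ces\`aro time-averages of the conjugated commutator, pass to the Bloch--Floquet direct integral, and argue that the fiberwise limit is the projection onto band velocities with trivial kernel. The telescoping identity you wrote down, the finite-dimensional ergodic averaging producing the diagonal of $C(k)$ in the Bloch eigenbasis, and the use of the universal ballistic upper bound to tie things together are all in the spirit of what the paper does.

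There is, however, a genuine gap precisely at the sentence ``Identifying that diagonal with the group velocities $\theta_j'(k)$.'' This is not an identification that falls out of abstract principles; it is the content of the calculation, and the paper singles it out as the place where the CMV setting is genuinely harder than the Jacobi/Schr\"odinger case. Concretely, one needs an identity of Feynman--Hellmann type relating $\partial_\theta \E_\theta$ to the fibered commutator $B_\theta$, and this fails to hold literally because the mod-$p$ Fourier transform does not send the position operator to $-ip\,\partial_\theta$ on the nose --- there is a residual bounded piece coming from the position within a period. The paper resolves this by (i) replacing the standard position operator by a $2$-block variant $X\delta_n = 2\lfloor n/2\rfloor \delta_n$, which makes $[\E,X]$ considerably cleaner given the $\mathcal L\mathcal M$ factorization of a CMV matrix, and (ii) conjugating both $\E_\theta$ and $B_\theta$ by an explicit diagonal gauge $V_\theta$ so that the clean identity $\partial_\theta\widetilde\E_\theta = \frac{i}{p}\widetilde B_\theta$ holds; only after this gauge transformation does $\langle v_j(\theta), B_\theta v_j(\theta)\rangle = -ip\,\partial_\theta\lambda_j(\theta)$ follow. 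Your outline gives no indication of how this identification is to be established, and without the gauge the naive version is simply false; this is the step that needs to be supplied. By contrast, the band-crossing issue you flag at the end is real but comparatively benign: the eigenvalue branches of the analytic family $\theta\mapsto\E_\theta$ can be labeled analytically (as in \cite{DLY}), which handles crossings cleanly. Your use of pure absolute continuity of the periodic spectrum to conclude that no branch is locally constant (hence $\ker V = \{0\}$) is a reasonable and clean variant of what the paper does.
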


It is immediately clear that Theorem~\ref{t:ballistic} generalizes to CMV matrices whose Verblunsky coefficients are \emph{skew-periodic} in the sense that $\alpha_{n+p} =  \omega \alpha_n$ for some $\omega$ with $|\omega|=1$ and all $n \in \Z$. This also implies that a quantum walk with periodically distributed coins will exhibit ballistic transport.

\begin{coro}\label{coro:skewper}
Suppose $\E$ is skew-periodic. Then, for any $\psi \in \mathcal S$, $\beta^{\pm}_\psi(q) = \widetilde \beta_\psi^\pm(q) = 1$ for all $q > 0$.
\end{coro}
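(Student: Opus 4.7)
The plan is to reduce the skew-periodic case to the genuinely periodic case covered by Theorem~\ref{t:ballistic} via a gauge transformation that absorbs the twist $\omega$ into a global phase rotation and a diagonal unitary conjugation.

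First, I would choose $\phi\in\R$ so that $e^{ip\phi}=\omega$ (any $p$-th root will do), and introduce the twisted Verblunsky coefficients $\tilde\alpha_n = e^{-in\phi}\alpha_n$. A direct check shows $\tilde\alpha_{n+p} = e^{-i(n+p)\phi}\omega\alpha_n = \tilde\alpha_n$, so the sequence $\tilde\alpha$ is honestly $p$-periodic and $\E_{\tilde\alpha}$ falls under the scope of Theorem~\ref{t:ballistic}. Notice also that $|\tilde\alpha_n|=|\alpha_n|$, so $\rho_n$ is unchanged by the twist.

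The key computation is to relate $\E_\alpha$ and $\E_{\tilde\alpha}$ through the Szeg\H o $LM$-factorization $\E = \mathcal L \mathcal M$. A short algebraic manipulation shows
\[
\Theta(\tilde\alpha_n)
=
\mathrm{diag}\!\left(e^{in\phi/2},e^{-in\phi/2}\right)\Theta(\alpha_n)\mathrm{diag}\!\left(e^{in\phi/2},e^{-in\phi/2}\right),
\]
where $\Theta(\alpha_n)$ acts on coordinates $n-1, n$. Assembling these block identities into $\tilde{\mathcal L}$ and $\tilde{\mathcal M}$ produces two diagonal unitaries $D_L, D_M$ on $\ell^2(\Z)$ with $\tilde{\mathcal L} = D_L \mathcal L D_L$ and $\tilde{\mathcal M} = D_M \mathcal M D_M$; one then verifies, index by index, that $D_L D_M = e^{i\phi/2} I$. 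Substituting yields the clean identity
\[
\E_{\tilde\alpha} = D_L \mathcal L D_L D_M \mathcal M D_M = e^{i\phi/2} D_L \E_\alpha D_M = e^{i\phi} D_L\, \E_\alpha\, D_L^{-1}.
\]
This is the main technical obstacle: bookkeeping for the parities of coordinates in $\mathcal L$ versus $\mathcal M$ is the only place where errors are easy to make, and getting the cancellation $D_L D_M \propto I$ is what makes the argument work.

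Once this intertwining is in hand, transferring the dynamics is straightforward. For any $\psi\in\Hi$, set $\tilde\psi = D_L\psi$; then $\tilde\E^k\tilde\psi = e^{ik\phi} D_L \E^k \psi$, so
\[
a_{\tilde\psi}(n,k) = \left|\langle \delta_n, e^{ik\phi} D_L\E^k\psi\rangle\right|^2 = \left|\langle \delta_n,\E^k\psi\rangle\right|^2 = a_\psi(n,k),
\]
since $D_L$ is diagonal with unimodular entries. Moreover, $\psi\in\mathcal S$ if and only if $\tilde\psi\in\mathcal S$, because $|\tilde\psi(n)| = |\psi(n)|$. Consequently $\beta^\pm_\psi(q) = \beta^\pm_{\tilde\psi}(q)$ and $\widetilde\beta^\pm_\psi(q) = \widetilde\beta^\pm_{\tilde\psi}(q)$ for all $q>0$. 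Applying Theorem~\ref{t:ballistic} to the periodic operator $\E_{\tilde\alpha}$ with initial state $\tilde\psi$ immediately yields the desired ballistic transport for $\E_\alpha$, completing the proof.
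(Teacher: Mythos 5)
Your proposal is correct and follows exactly the same route as the paper: choose a root $\phi$ of $\omega$, twist the Verblunsky coefficients to $\tilde\alpha_n = e^{-in\phi}\alpha_n$ to produce an honestly $p$-periodic operator, and exhibit a diagonal unitary intertwining $\E$ and $\widetilde\E$ up to a global phase. Your $D_L$ is precisely the paper's $\Gamma$ (with $\gamma_{2m} = e^{-im\phi}$ and $\gamma_{2m+1} = e^{i(m+1)\phi}$); the paper states the conjugation identity directly, while you derive it through the $\mathcal L\mathcal M$-factorization, but the substance of the argument is the same.
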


\begin{coro}\label{coro:perqw}
Suppose $U$ is the update rule of a quantum walk whose coins obey $Q_{n+p} = Q_n$ for all $n \in \Z$. Then, for any $\psi \in \mathcal S$, $\beta^{\pm}_\psi(q) = \widetilde \beta_\psi^\pm(q) = 1$ for all $q > 0$.
\end{coro}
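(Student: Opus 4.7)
The plan is to reduce Corollary~\ref{coro:perqw} to Corollary~\ref{coro:skewper} via the CGMV correspondence from Subsection~\ref{sec:qw}. Given a quantum walk $U$ with $p$-periodic coins $Q_n$, recall that there exists a diagonal unitary $\Lambda:\ell^2(\Z)\to \ell^2(\Z)\otimes\C^2$ with $|\lambda_n|=1$ such that $U=\Lambda\E\Lambda^*$, where $\E=\E_\alpha$ is the extended CMV matrix with $\alpha_{2n}=0$ and $\alpha_{2n+1}$ given by \eqref{e.correspondence}.

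The first step is to verify that $\E$ is skew-periodic. Write $q^{kk}_n = r_n\omega^k_n$ as in Subsection~\ref{sec:qw}. By definition, $\lambda_{2n+2}=\omega^1_n\lambda_{2n}$ and $\lambda_{2n+1}=\overline{\omega^2_n}\lambda_{2n-1}$, so over a full period $n\mapsto n+p$ each of $\lambda_{2n}$ and $\lambda_{2n-1}$ is multiplied by a fixed unimodular constant, namely $\Omega^1:=\prod_{j=0}^{p-1}\omega^1_j$ and $\overline{\Omega^2}:=\prod_{j=0}^{p-1}\overline{\omega^2_j}$ respectively. Setting $\omega:=\overline{\Omega^2}/\Omega^1\in\partial\D$, the periodicity $q^{21}_{n+p}=q^{21}_n$ together with \eqref{e.correspondence} yields
\[
\alpha_{2(n+p)+1}
=\frac{\lambda_{2(n+p)-1}}{\lambda_{2(n+p)}}\,\overline{q^{21}_{n+p}}
=\omega\cdot\frac{\lambda_{2n-1}}{\lambda_{2n}}\,\overline{q^{21}_n}
=\omega\,\alpha_{2n+1}.
\]
Since the even Verblunsky coefficients vanish, $\alpha_{n+2p}=\omega\alpha_n$ for all $n\in\Z$, so $\E$ is skew-periodic.

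Having checked this, Corollary~\ref{coro:skewper} applies to $\E$ and gives $\beta^\pm_\phi(q)=\widetilde\beta^\pm_\phi(q)=1$ for every $\phi\in\mathcal S(\ell^2(\Z))$ and every $q>0$. The final step is to transfer this conclusion back to $U$. Because $\Lambda$ is diagonal with modulus-one entries, $\Lambda^*\varphi_n=\overline{\lambda_n}\delta_n$, and therefore for any $\psi\in\Hi$,
\[
|\langle\varphi_n,U^k\psi\rangle|^2
=|\langle\Lambda^*\varphi_n,\E^k\Lambda^*\psi\rangle|^2
=|\langle\delta_n,\E^k\Lambda^*\psi\rangle|^2,
\]
so the probability distributions $a^U_\psi(n,k)$ and $a^\E_{\Lambda^*\psi}(n,k)$ coincide for all $n$ and $k$. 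Consequently all moments $|X|^q$ and all four transport exponents for $(U,\psi)$ agree with those for $(\E,\Lambda^*\psi)$. Since $|\lambda_n|=1$, $\Lambda^*$ maps $\mathcal S(\Hi)$ bijectively onto $\mathcal S(\ell^2(\Z))$, so ballistic transport for $\E$ translates directly into ballistic transport for $U$ on all of $\mathcal S(\Hi)$.

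The only genuinely nontrivial step is the skew-periodicity verification; once the correct unimodular factor $\omega$ has been identified, the remainder is a bookkeeping argument using the unitary equivalence and the invariance of $\mathcal S$ under $\Lambda^*$.
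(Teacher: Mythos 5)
Your argument is correct and follows the same route as the paper: conjugate $U$ to the extended CMV matrix $\E$ via $\Lambda$, verify that $\E$ is skew-periodic with $\overline{\omega}=\prod_j \omega_j^1\omega_j^2$, invoke Corollary~\ref{coro:skewper}, and transfer the conclusion back through $\Lambda$. The only difference is that you spell out the skew-periodicity computation and the invariance of $\mathcal S$ and of the transport exponents, which the paper asserts more tersely.
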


\begin{proof}[Proof of Corollary~\ref{coro:skewper}]
Let $\omega = e^{ip\phi}$ for some $\phi \in \R$, and introduce $\widetilde \E$ with coefficients $\widetilde \alpha$ defined by
$$
\widetilde \alpha_j
=
e^{-ij\phi} \alpha_j,
\quad
j \in \Z.
$$
It is not hard to see that $\widetilde \alpha$ is honestly $p$-periodic, so we may apply Theorem~\ref{t:ballistic} to deduce ballistic transport for $\widetilde \E$. Moreover, if we define the diagonal unitary $\Gamma$ on $\ell^2$ by
$$
\gamma_k
=
\exp\left( i (-1)^{k+1} \left\lfloor \frac{k+1}{2} \right\rfloor \phi\right),
\quad
\langle \delta_j, \Gamma \delta_k \rangle
=
\delta_{j,k} \gamma_k,
$$
then $\E = e^{-i \phi} \Gamma^* \widetilde \E \Gamma$. Thus, we deduce ballistic transport for $\E$.
\end{proof}

\begin{proof}[Proof of Corollary~\ref{coro:perqw}] Let $\Lambda$ and $\E = \E_\alpha$ be as in Section~\ref{sec:qw} so that $\E = \Lambda^* U \Lambda$. From the explicit form of $\Lambda$, we see that it maps $\mathcal S(\ell^2(\Z))$ onto $\mathcal S(\ell^2(\Z) \otimes \C^2)$, and it does not affect transport exponents. Thus, it suffices to deduce ballistic transport for $\E$. Examining \eqref{e.correspondence}, we see that $\alpha_{n+2p} = \omega \alpha_n$ for all $n \in \Z$, where $\omega$ is given by
$$
\overline{\omega}
=
\prod_{j=1}^p \omega_j^1 \omega_j^2.
$$
Thus, $\E$ is skew-periodic, so ballistic transport for $U$ follows from Corollary~\ref{coro:skewper}.
\end{proof}

In order to apply the methods of \cite{AK98} and \cite{DLY} to deduce ballistic transport, we need to establish effective estimates on the Heisenberg evolution of the position operator, $X$, so we want to consider $\E X - X\E$ as in the other cases. We will actually work with a 2-block version of $X$, as this substantially simplifies the resulting commutators, without affecting the transport exponents.

To set up notation, let $\ell^0 =\ell^0(\Z)$ denote the space of finitely supported sequences, and define a 2-block variant of the position operator $X:\ell^0 \to \ell^0$ by
$$
X\delta_n
=
\underline n \delta_n
:=
2 \left\lfloor \frac{n}{2} \right\rfloor \delta_n
$$
for $n \in \Z$. Clearly $X$ is essentially self-adjoint on $\ell^0$. For any observable (read:\ operator) $A$, denote its discrete Heisenberg evolution with respect to $\E$ by
$$
A(j)
=
\E^j A \E^{-j},
\quad
j \in \Z.
$$
Obviously, $\ell^0$ is invariant under both $\E$ and $X$, so that $X(j)$ is a well-defined symmetric operator which is essentially self-adjoint on $\ell^0$ for all $j \in \Z$. As in \cite{DLY}, the key ingredient is furnished by a variation on the theme of Asch-Knauf \cite{AK98}.

\begin{theorem} \label{t:aschknauf}
There is a bounded, self-adjoint operator $J$ such that
$$
\slim_{L \to \infty}
\frac{1}{L} X(L) = J.
$$
Moreover, $\ker J = \{0\}$, and
$$
\slim_{L \to \infty} f\left( \frac{1}{L} X(L) \right)
=
f(J)
$$
for any bounded continuous function $f$.
\end{theorem}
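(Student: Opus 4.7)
The plan is to exploit the periodicity of $\mathcal E$ via Floquet--Bloch theory. Let $\mathcal F : \ell^2(\Z) \to L^2([0,2\pi); \C^p)$ denote the Floquet transform, under which $\mathcal F \mathcal E \mathcal F^{-1} = \int^\oplus \mathcal E(\theta) \, d\theta$ with each $\mathcal E(\theta)$ a $p \times p$ unitary matrix. The same transform rewrites the 2-block position operator as $\mathcal F X \mathcal F^{-1} = -ip\,\partial_\theta + X_0$, where $X_0$ is a $\theta$-independent, bounded, fiberwise diagonal matrix whose entries are the intra-cell 2-block indices. The piece $-ip\,\partial_\theta$ is the ``macroscopic'' part of $X$ which scales with $L$ under the Heisenberg evolution, while $X_0$ is bounded and hence negligible after dividing by $L$.

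The main computation is to evaluate the fiberwise limit of $L^{-1}\mathcal E(\theta)^L(-ip\,\partial_\theta)\mathcal E(\theta)^{-L}$. The identity $U\,\partial_\theta\,U^* = \partial_\theta - (\partial_\theta U)U^*$ applied to $U = \mathcal E(\theta)^L$, combined with the Duhamel expansion $\partial_\theta\mathcal E(\theta)^L = \sum_{j=0}^{L-1}\mathcal E(\theta)^j\mathcal E'(\theta)\mathcal E(\theta)^{L-1-j}$, yields
$$
\frac{1}{L}\mathcal E(\theta)^L(-ip\,\partial_\theta)\mathcal E(\theta)^{-L} = -\frac{ip}{L}\partial_\theta + \frac{ip}{L}\sum_{j=0}^{L-1}\mathcal E(\theta)^j\mathcal E'(\theta)\mathcal E(\theta)^{-j-1}.
$$
Using the spectral decomposition $\mathcal E(\theta) = \sum_k e^{i\lambda_k(\theta)}P_k(\theta)$ (valid off a measure-zero set of band crossings, with locally analytic Riesz projections), the summand splits as a $j$-independent diagonal piece $i\sum_k\lambda_k'(\theta)P_k(\theta) =: iV(\theta)$ plus off-diagonal terms oscillating with phases $e^{ij(\lambda_k(\theta) - \lambda_\ell(\theta))}$ for $k \neq \ell$. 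Ces\`aro averaging kills the oscillatory terms for almost every $\theta$, while the $L^{-1}\partial_\theta$ piece vanishes on smooth sections; hence, on such sections, the strong limit equals $J := \mathcal F^{-1}\bigl(-p\,V(\theta)\bigr)\mathcal F$, a bounded self-adjoint operator.

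To promote this to strong convergence on all of $\ell^2$, I would use the telescoping identity
$$
\frac{1}{L}\bigl(X(L)-X\bigr) = \frac{1}{L}\sum_{j=0}^{L-1}\mathcal E^j Y \mathcal E^{-j}, \qquad Y := \mathcal E X \mathcal E^* - X,
$$
observing that the 2-block structure was chosen so that $[\mathcal M, X] = 0$, whence $[\mathcal E, X] = [\mathcal L, X]\mathcal M$ is bounded and so is $Y$. The right-hand side is therefore uniformly bounded in $L$ by $\|Y\|$, and the fiberwise calculation identifies its strong limit as $J$ on a dense set of initial states; a standard density argument combined with this uniform bound upgrades this to strong resolvent convergence of $L^{-1}X(L)$ to $J$. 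Triviality of $\ker J$ reduces to $V(\theta)$ being invertible for a.e.\ $\theta$, equivalently to no band function $\lambda_k$ being constant on a set of positive measure. Since $p$-periodic CMV matrices have purely absolutely continuous spectrum (see, e.g., \cite[Chapter~11]{S2}), flat bands are ruled out. Finally, $f(L^{-1}X(L)) \to f(J)$ strongly for bounded continuous $f$ is a standard consequence of strong resolvent convergence of self-adjoint operators.

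The hard part will be the Ces\`aro-averaging step at band crossings, where local analyticity of the Riesz projections $P_k(\theta)$ can fail and band labels may be exchanged: one must verify that these failures occur on a measure-zero set and that $V(\theta)$ extends measurably across them, so that the a.e.\ limit plus the $L^\infty$ bound $\|Y\|$ suffices to pass to strong limits in the direct integral. This bookkeeping is standard in Floquet--Bloch analysis but requires some care.
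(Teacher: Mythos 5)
Your proposal is correct and follows essentially the same route as the paper's proof: the Floquet--Bloch fibering, the telescoping identity with the bounded commutator $Y = \E X \E^* - X$, fiberwise Ces\`aro averaging against the spectral decomposition of $\E_\theta$, and reduction of $\ker J = \{0\}$ to absence of flat bands. The only presentational difference is that the paper packages the key fiberwise derivative identity through the diagonal gauge $V_\theta$ (yielding $\partial_\theta \widetilde\E_\theta = \tfrac{i}{p}\widetilde B_\theta$), whereas you identify $\mathcal F X \mathcal F^{-1}$ directly as a first-order differential operator plus a bounded diagonal term $X_0$ and handle the correction by hand -- these are equivalent maneuvers (and note that with the paper's Fourier convention $\widehat u_j(\theta) = \sum_\ell u_{j+\ell p} e^{-i\ell\theta}$, the macroscopic piece is actually $+ip\,\partial_\theta$ rather than $-ip\,\partial_\theta$, which flips the overall sign of $J$ but does not affect the argument).
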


\begin{remark} Notice that Theorem~\ref{t:aschknauf} does not follow from \cite[Theorem~1.6(a),(b)]{DLY} in any straightforward fashion. In order to deduce our theorem from theirs, one would need to be able to write $\E = e^{iH}$ with $H$ a Jacobi matrix (or some other self-adjoint operator of finite width), and it is not at all obvious that this is the case. Perhaps more importantly, even if such a proof could be found, it would add no insight to the unitary setting, whereas our proof explicitly constructs $J$ out of computable functions of $\E$.
\end{remark}

It is important to point out that the result is not trivial, even though we follow the outline of \cite{DLY}. In particular, one needs an identity like \eqref{eq:DLYmagic} to make the argument work, and the more complicated structure of CMV matrices makes such an identity more difficult to prove.

\subsection{Direct Integral Decomposition of Periodic CMV Matrices}

The key ingredient here is to decompose a periodic CMV operator as a direct integral of unitary operators on $\C^p$. Since this is standard fare, we only summarize the theorems and provide no proofs. See \cite[Section~11.2]{S2} for more details on periodic CMV operators; see also \cite[Chapter~5]{simszego} for the Jacobi version.

Given a sequence $\alpha$ of Verblunsky coefficients, for notational convenience, introduce $a,b,c,$ and $d$ as in \cite{DFLY2}, i.e.,
\begin{align*}
a_n = -\overline{\alpha_n} \alpha_{n-1}, \quad
b_n = \overline{\alpha_n} \rho_{n-1}, \quad
c_n = -\rho_n \alpha_{n-1}, \quad
d_n = \rho_n \rho_{n-1},
\quad
n \in \Z.
\end{align*}
In terms of these parameters, the matrix representation of $\E$ becomes
$$
\E
=
\begin{pmatrix}
\ddots & \ddots & \ddots &&&&&  \\
 & a_0 & b_1 & d_1 &&& & \\
& c_0 & a_1 & c_1 &&& & \\
&  & b_2 & a_2 & b_3 & d_3 & & \\
& & d_2 & c_2 & a_3 & c_3 &  &  \\
& &&& b_4 & a_4 & b_5 & \\
& &&& d_4 & c_4 & a_5 &   \\
& &&&& \ddots & \ddots &  \ddots
\end{pmatrix}.
$$
If $\E$ is $p$-periodic, then a mod-$p$ variant of the Fourier transform diagonalizes $\E$ into a multiplication operator on $\Hi_p = L^2\left([0,2\pi), \C^p ; \frac{d\theta}{2\pi} \right)$, the space of functions $f:[0,2\pi) \to \C^p$ which are square integrable with respect to Lebesgue measure. More specifically, for each $\theta$, denote
$$
\E_\theta
=
\begin{pmatrix}
a_1 & c_1 &&&&&& d_p e^{-i\theta} & c_p e^{-i\theta} \\
b_2 & a_2 & b_3 & d_3 &&&&& \\
d_2 & c_2 & a_3 & c_3 & \ddots &&&& \\
&& b_4 & a_4 & \ddots & b_{p-3} & d_{p-3} && \\
&& d_4 & c_4 & \ddots & a_{p-3} & c_{p-3} && \\
&&&& \ddots & b_{p-2} & a_{p-2} & b_{p-1} & d_{p-1} \\
&&&&& d_{p-2} & c_{p-2} & a_{p-1} & c_{p-1} \\
b_1 e^{i\theta} & d_1 e^{i\theta} &&&&&& b_p & a_p
\end{pmatrix},
$$
and define the mod-$p$ Fourier transform $\mathcal F : \ell^2(\Z) \to \mathcal H_p$ by $u \mapsto \widehat u$, where
$$
\widehat u_j(\theta)
=
\sum_{\ell \in \Z} u_{j + \ell p} e^{-i \ell \theta},
\quad
1 \leq j \leq p, \, \theta \in [0,2\pi).
$$
Notice that our definition of $\E_\theta$ has already implicitly used evenness of $p$; we will not comment on the parity of $p$ again. The inverse of $\mathcal F$ is given by $\mathcal H_p \ni f \mapsto \check f \in\ell^2(\Z)$, where
$$
\check f_{j + \ell p}
=
\int_0^{2\pi} \! f_j(\phi) e^{i \ell \phi} \, \frac{d\phi}{2\pi}.
$$
For any operator $A$ on $\ell^2(\Z)$, we denote by $\widehat A$ its action on Fourier space, i.e., $\widehat A = \mathcal F A \mathcal F^{-1}$. After simple calculations, we see that $\left( \widehat{\mathcal\E}f \right)(\theta) = \E_\theta f(\theta)$ for each $f \in \Hi_p$, and almost every $\theta \in [0,2\pi)$. It is typical (especially in the physics literature) to view the Hilbert space $\Hi_p$ and the matrix $\widehat\E$ as direct integrals of finite-dimensional objects:
$$
\mathcal H_p
\cong
\int_{[0,2\pi)}^\oplus \! \C^p \, \frac{d\theta}{2\pi},
\quad
\widehat{\mathcal\E}
\cong
\int_{[0,2\pi)}^\oplus \! \E_\theta \, \frac{d\theta}{2\pi}.
$$

\subsection{Proof of Ballistic Transport}

The proofs of Theorem~\ref{t:ballistic} and \ref{t:aschknauf} are similar to those in \cite{DLY}, so we will only comment in detail about the challenges which one must overcome to rerun the Asch-Knauf-Damanik-Lukic-Yessen machine in the present setting, and we will be somewhat breezy with those details which are similar to the proofs in \cite{DLY}. The main difficulty here is that time is discrete in our case, so we must use discrete derivatives with respect to time parameters, and this causes some minor headaches.

Formally, let $B = [\E,X] = \E X - X \E$. More precisely, $B$ is initially defined on $\ell^0$, and then it is obviously a bounded operator thereupon (since $\E$ is five-diagonal), and thus, $B$ enjoys a unique extension to a bounded operator on $\ell^2(\Z)$ by general nonsense. Now, observe that the Heisenberg evolution of $B$ is (almost) the discrete derivative of the Heisenberg evolution of $X$:
\begin{equation} \label{eq:heis:int}
X(k)
=
X + \left( \sum_{\ell=0}^{k-1} B(\ell) \right)\E^*.
\end{equation}
The post-factor of $\E^*$, though unsightly, is ultimately irrelevant; more specifically, strong convergence of $\langle B \rangle(L)$  clearly implies strong convergence of $\langle B \rangle(L) \cdot \E^*$ as $L \to \infty$. A short calculation reveals
$$
B
=
\begin{pmatrix}
\ddots & \ddots & \ddots &&&&&  \\
& 0 & 0 & 2d_1 &&&& \\
& 0 & 0 & 2c_1 &&&& \\
&& -2b_2 & 0 & 0 & 2d_3 && \\
&& -2d_2 & 0 & 0 & 2c_3 && \\
&&&& -2b_4 & 0 & 0 & \\
&&&& -2d_4 & 0 & 0 & \\
&&&&& \ddots & \ddots & \ddots
\end{pmatrix}.
$$
Since $B$ is also clearly $p$-periodic, $\mathcal F$ also diagonalizes it into a direct integral as well. Specifically, we have $\left( \widehat B f \right)(\theta) = B_\theta f(\theta)$ for all $f \in \mathcal H_p$ and almost every $\theta \in [0,2\pi)$, where
\begin{align*}
B_\theta
=
\begin{pmatrix}
0 & 2c_1 &&&& -2d_p e^{-i\theta} &  0 \\
-2b_2 & 0 & 0 & 2d_3 &&& \\
-2d_2 & 0 & 0 & 2c_3 &&& \\
&& \ddots & \ddots & \ddots && \\
&&& -2b_{p-2} & 0 & 0 & 2d_{p-1} \\
&&& -2d_{p-2} & 0 & 0 & 2c_{p-1} \\
0 & 2d_1 e^{i\theta} &&&& -2b_p & 0 \\
\end{pmatrix}
\end{align*}
Now let $V_\theta \in \C^{p \times p}$ denote the diagonal matrix with diagonal entries $\langle e_k, V_\theta e_k \rangle = e^{i\underline{k}\theta/p} = e^{2i\lfloor k/2 \rfloor \theta/p}$, i.e.,
$$
V_\theta
=
\begin{pmatrix}
1 &&&&&& \\
& e^{2i\theta/p} &&&&& \\
&& e^{2i\theta/p} &&&& \\
&&& \ddots &&& \\
&&&& e^{i\theta(p-2)/p} && \\
&&&&& e^{i\theta(p-2)/p} & \\
&&&&&& e^{i\theta}
\end{pmatrix}.
$$
Define
$$
\widetilde \E_\theta
=
V_\theta^{-1} \E_\theta V_\theta,
\quad
\widetilde B_\theta
=
V_\theta^{-1} B_\theta V_\theta
$$
By a direct calculation, one can check that
\begin{equation} \label{eq:DLYmagic}
\frac{\partial}{\partial \theta} \widetilde \E_\theta
=
\frac{i}{p} \widetilde B_\theta.
\end{equation}
With this setup, we run the argument from \cite{DLY}, which gives us
\begin{equation} \label{eq:DLY9}
\lim_{L \to \infty} \frac{1}{L} \sum_{\ell = 0}^{L-1} \E_\theta^\ell B_\theta \E_\theta^{-\ell}
=
-ip \sum_{j=1}^m \frac{\partial \lambda_j}{\partial \theta}(\theta) P_j(\theta)
\end{equation}
for all $\theta \in (0,2\pi) \setminus \{\pi\}$, where $\lambda_1(\theta) < \cdots < \lambda_m(\theta)$ are the distinct eigenvalues of $\E_\theta$ and $P_j(\theta)$ denotes projection onto the eigenspace corresponding to $\lambda_j(\theta)$. Notice that $P_j(\theta)$ is a one-dimensional projection unless $\theta = 0,\pi$.
Let us briefly recapitulate the argument from \cite{DLY} which proves \eqref{eq:DLY9}. First, let $v_j(\theta)$ be an analytic choice of a normalized element of $\mathrm{ran}(P_j(\theta)) = \ker(\E_\theta - \lambda_j(\theta))$.  As in \cite{DLY}, with $\widetilde v_j(\theta) = V_\theta^{-1} v_j(\theta$), we have
\begin{align*}
\frac{\partial \lambda_j}{\partial \theta}(\theta)
& =
\frac{\partial}{\partial \theta} \langle v_j(\theta), \E_\theta v_j(\theta) \rangle \\
& =
\frac{\partial}{\partial \theta} \langle \widetilde v_j(\theta), \widetilde\E_\theta \widetilde v_j(\theta) \rangle \\
& =
\frac{i}{p} \langle \widetilde v_j(\theta), \widetilde B_\theta \widetilde v_j(\theta) \rangle \\
& =
\frac{i}{p} \langle  v_j(\theta),  B_\theta  v_j(\theta) \rangle.
\end{align*}
Additionally,
\begin{align*}
\lim_{L \to \infty}
 \frac{1}{L} \left\langle v_j(\theta),  \sum_{\ell=0}^{L-1} \E_\theta^\ell B_\theta \E_\theta^{-\ell} v_k(\theta) \right\rangle
& =
\lim_{L\to \infty}
\frac{1}{L} \sum_{\ell=0}^{L-1} \frac{\lambda_j^\ell(\theta)}{\lambda_k^\ell(\theta)} \langle v_j(\theta), B_\theta v_k(\theta) \rangle \\
& =
\begin{cases}
\langle v_j(\theta), B_\theta v_k(\theta) \rangle & \text{ if } j = k \\
0 & \text{ otherwise}
\end{cases}
\end{align*}
Thus, \eqref{eq:DLY9} follows.

\bigskip

\begin{proof}[Proof of Theorem~\ref{t:aschknauf}] Once one has \eqref{eq:DLY9}, then we can simply follow the proof of \cite[Theorem~1.6]{DLY}. In particular, we get the  conclusions of Theorem~\ref{t:aschknauf} with
$$
J
=
-ip \mathcal F^{-1} \left( \int_{[0,2\pi)}^\oplus \! \sum_{j=1}^p \frac{\partial \lambda_j}{\partial \theta}(\theta) P_j(\theta) \, \frac{d\theta}{2\pi} \right) \mathcal F \E^*.
$$
More specifically, dominated convergence yields
\begin{equation} \label{eq:aschknaufconv}
\lim_{L \to \infty} \int_{[0,2\pi)}^\oplus \left( \frac{1}{L} \sum_{\ell = 0}^{L-1} \E_\theta^\ell B_\theta \E_\theta^{-\ell} \right) \, \frac{d\theta}{2\pi}
=
-ip\int_{[0,2\pi)}^\oplus \sum_{j=1}^m \frac{\partial \lambda_j}{\partial \theta}(\theta) P_j(\theta) \, \frac{d\theta}{2\pi},
\end{equation}
where the limit on the left hand side is taken in the operator norm topology. The left hand side is simply
$$
\lim_{L\to\infty}\frac{1}{L} \sum_{\ell=0}^{L-1} \widehat{B(\ell)},
$$
so, we can conjugate both sides of \eqref{eq:aschknaufconv} by $\mathcal F$ to obtain
$$
\lim_{L \to \infty} \frac{1}{L} \sum_{\ell = 0}^{L-1} B(\ell)
=
J\E
$$
The theorem then follows from \eqref{eq:heis:int} and the arguments in \cite{DLY}.
\end{proof}

\begin{proof}[Proof of Theorem~\ref{t:ballistic}]
This follows from Theorem~\ref{t:aschknauf} just as in \cite{DLY}, since our choice of $X$ does not change the transport exponents.
\end{proof}

\end{document}